%% file: main.tex
\documentclass[11pt]{article}
\input{preamble}

\makeatletter
\@for\reviewenv:=theorem,lemma,corollary,proposition,definition,remark,claim,invariant,observation,example,fact,conjecture,problem\do{%
  \expandafter\def\csname theH\reviewenv\endcsname{\thesection.\arabic{theorem}}}
\makeatother

\begin{document}

\allowdisplaybreaks

\title{Complexity of Differentially Private Selection via Federated APIs}
\author{\hspace{0.3in}
Hilal Asi\thanks{Apple Inc. \texttt{hilal.asi94@gmail.com}.} \and Vitaly Feldman\thanks{Apple Inc. \texttt{vitaly.edu@gmail.com}.} \and Jelani Nelson\thanks{Anthropic. \texttt{minilek@alum.mit.edu}. Author contributed to the project while at UC Berkeley, supported by NSF grant CCF-1951384, ONR grant N00014-18-1-2562, and ONR DORECG award N00014-17-1-2127.} \and Huy L. Nguyen\thanks{Northeastern. \texttt{huylenguyen@gmail.com}.  Supported in part by NSF CCF-2311649.}
\hspace{0.3in}\and
Kunal Talwar\thanks{Apple Inc. \texttt{kunal@kunaltalwar.org}.} \and 
Samson Zhou\thanks{Texas A\&M University. \texttt{samsonzhou@gmail.com}. Author is supported in part by NSF CCF-2335411 and gratefully acknowledges funding provided by the Oak Ridge Associated Universities (ORAU) Ralph E. Powe Junior Faculty Enhancement Award.}}

\maketitle

\begin{abstract}
Differential privacy (DP) has become the standard for privacy-preserving data analysis, yet its practical implementation in federated or distributed settings often faces a significant utility gap compared to the central model. While existing frameworks, including local, shuffle, and secure aggregation models, offer vetted primitives, they can incur high error for important tasks such as selection. Our first contribution is to show that this is inherent in the aggregation model: any private selection algorithm relying on the standard noisy aggregation primitive requires $\Omega(\sqrt{d}/\varepsilon)$ samples. To address this limitation, we propose adding a new primitive that implements a version of the \textit{Sparse Vector Technique} (SVT) and argue that it has simple implementations under various trust models. We demonstrate its power by developing an $\varepsilon$-differentially private algorithm for the selection problem. For selection in dimension $d$, our algorithm achieves an additive error of $O\left(\frac{\log(dn)}{\varepsilon}\right)$, matching that of the central model when $n$ is polynomially bounded in $d$. Furthermore, we show that this can be done using a near-optimal expected number $O(d\log^*d)$ of SVT queries. Together with existing lower bounds for the shuffle model, this result establishes an exponential separation between the SVT model and shuffle or aggregation-based models. We demonstrate the model's broad applicability to diverse tasks such as $F_p$ loss minimization, heavy-hitter identification, and sparse mean estimation.
\end{abstract}

\section{Introduction}
\label{sec:intro}
Differential privacy (DP)~\cite{DworkMNS06j, DworkRBook} has emerged as the gold standard for privacy-preserving data analysis, offering a mathematically rigorous framework that limits the risk of disclosing the data of any individual entity. 
Informally, a randomized algorithm is differentially private if its output distribution is nearly unchanged when any individual record is modified or removed. 
This ensures that adversaries learn little about any one participant, even with auxiliary information.

Despite its mathematical elegance, the practical implementation and verification of differential privacy can be challenging~\cite{BartheO13,BartheDGKB13,GaboardiHHNP13,BartheGAHKS14,BartheFGGHS16,BartheGGHS16,ZhangK17}.
Because privacy guarantees are probabilistic and often counterintuitive, even subtle coding errors or floating-point vulnerabilities can lead to privacy leakage~\cite{Mironov12}. 
This has led the community toward primitives that avoid floating-point numbers~\cite{Canonne_Kamath_Steinke_2022, Balcer_Vadhan_2019}. 
More broadly, verifying that an implementation of a differentially private algorithm is correct can be challenging, and even testing or auditing for privacy requires considerable effort~\cite{BichselGDTV18,GuanFHY23}.

To mitigate these risks, the field has moved toward a modular ``API-centric'' approach. 
Rather than requiring developers to implement complex mechanisms from scratch, privacy frameworks, dating from early systems like PINQ \cite{McSherry10} and Airavat \cite{RoySKSW10} to modern efforts like OpenDP and Google's differential privacy building blocks, rely on a small set of vetted primitives. 
These primitives are the ``building blocks'' for private algorithms; by thoroughly auditing a limited number of functions, such as the Laplace mechanism or standard composition theorems, one can build complex pipelines while mathematically tracking the total privacy loss. 

In federated or distributed architectures, implementing such primitives is harder, and thus this modularity is even more critical.
Specifically, primitives must be implemented under a variety of operational and security constraints. 
The move toward these standardized, vetted interfaces has led to the adoption of large-scale DP frameworks~\cite{opendp,WilsonZLDSG20}, which provide high-level query languages backed by reliable primitives.
However, as the ecosystem of these frameworks expands, it becomes vital to understand the relative power of various primitives.

Previous work has successfully formalized specific trust models, such as the local~\cite{warner1965randomized,EvfimievskiGS03,KasiviswanathanLNRS11} and shuffle~\cite{BittauEMMRLRKTS17,CheuSUZZ19,ErlingssonFMRTT19} privacy models.
More broadly, a privacy framework can treat simple privacy primitives as interchangeable building blocks and support a set of privacy-accounting methods. A complex algorithm can be built by combining primitives, each run on all or a subset of the data, while privacy accounting upper bounds the algorithm's privacy cost.
As an example, an aggregation system allows for algorithms that apply a function $f : \calX \rightarrow \mathbb{R}^d$ to each individual data point and output a noisy aggregate $\tilde{f}(x_1,\ldots,x_n) = \sum_i \text{clip}_1(f(x_i)) + \mathcal{N}(0, \sigma^2 \mathbf{I}_d)$, 
for a user-specified noise parameter $\sigma$, where $\text{clip}_1$ is the standard function that truncates the input to have $\ell_2$ norm at most $1$. 
A R\'enyi accountant in this setting would accumulate privacy cost in terms of R\'enyi DP~\cite{Mironov17}. One can then use such a system to implement differentially private gradient descent to train a machine learning model. Enhancing the primitive and accounting to allow for privacy amplification by sampling further enables DP stochastic gradient descent~\cite{BassilyST14,AbadiCGMMTZ16}. Aggregation, potentially with sampling, can support a rich class of algorithms, and such systems have been proposed under various assumptions~\cite{DworkKMMN06, Bonawitz17, crypteps, unlynx, honeycrisp, orchard, Talwaretal24}.

In this work, we study the limits of the Aggregation API, and propose an additional primitive to help circumvent some of these limitations.

\section{Our Contributions}
Despite the wide applicability of the Aggregation API, there are important tasks that are not known to be implementable accurately using aggregation alone. Foremost among these is perhaps the selection problem, where the best-known algorithms using aggregation incur significant overhead compared to the optimal central-DP bounds: while centrally one can do selection using $O(\log d/\eps)$ samples, existing implementations using aggregation require $\Omega(\sqrt{d}/\eps)$ samples. While this sample complexity is provably required (sometimes under additional assumptions) for weaker APIs such as local and shuffle DP~\cite{DuchiJW13,CheuSUZZ19,CheuU21,ChenGKM21}, no existing lower bounds exclude a better way to do selection using an aggregation API.

Our first result is a strong lower bound ruling out this possibility. 
We show that {\em any} selection algorithm using an aggregation API requires sample complexity $\Omega(\sqrt{d}/\eps)$ if it guarantees $(\eps,\delta)$-DP for any implementation of the aggregation API and can identify a dimension whose average score is $1/3$ more than all the others.
\begin{restatable}{theorem}{thmlbsq}[Informal]
\label{thm:lb:sq}
Any $(O(1), O(1/\poly(d)))$-DP algorithm for selection from $d$ candidates using aggregation queries alone requires $\Omega(\sqrt{d})$ samples.
\end{restatable}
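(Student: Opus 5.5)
The plan is to reduce selection to a distinguishing problem and then show that every aggregation query carries little information. Since the algorithm must be private for \emph{any} valid implementation of the aggregation API, we may fix the implementation that adds exactly the Gaussian noise $\mathcal{N}(0,\sigma^2\mathbf{I})$ with the $\sigma$ needed to charge the stated privacy cost.

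\textbf{Hard distribution.} Draw a planted coordinate $j^*$ uniformly from $[d]$. Let each user's vector $x_i\in\{0,1\}^d$ have i.i.d.\ $\mathrm{Bernoulli}(1/2)$ coordinates, except coordinate $j^*$, which is $\mathrm{Bernoulli}(1/2+1/2)$ (equivalently, $x_{i,j^*}=1$ always).
\begin{itemize}
\item With $n\gtrsim\log d$ samples, the empirical gap exceeds $1/3$ with high probability.
\item A successful selector therefore outputs $j^*$ with constant probability.
\item Hence it must gain $\Omega(\log d)$ bits of mutual information about $j^*$, or, alternatively, succeed on a constant fraction of the $d$ planted instances.
\end{itemize}

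\textbf{Per-query information bound.} Each query is an adaptively chosen $f:\calX\to\mathbb{R}^m$ with $\|f\|_2\le 1$, answered as $\sum_i f(x_i)+\mathcal{N}(0,\sigma^2\mathbf{I})$. Compare the answer under planted $j$ with its answer under the null distribution (no planted coordinate). The mean shifts by $n\,\Delta_j$, where $\Delta_j=\mathbb{E}_{P_j}f-\mathbb{E}_{P_0}f$. I would also treat fluctuations of the sum: conditional on the data, the answer is Gaussian. A cleaner route is to bound $\chi^2$ or KL averaged over $j$.

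The key Fourier fact is the following. Since $P_j$ differs from $P_0$ only through the degree-1 character on coordinate $j$, $\Delta_j$ is proportional to the level-1 Fourier coefficient $\hat f(\{j\})$. Parseval then gives
\[
\sum_j\|\Delta_j\|_2^2\le \mathbb{E}\|f\|^2\le 1 .
\]
So the average KL per query is at most $n^2\cdot\frac{1}{d}\sum_j\|\Delta_j\|^2/\sigma^2\le n^2/(d\sigma^2)$, plus a sampling-fluctuation term of similar or smaller order.

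\textbf{Privacy budget and adaptivity.} I would use composition in R\'enyi or zCDP form. A mechanism with $k$ queries at noise levels $\sigma_t$ has total $\rho=\sum_t 1/(2\sigma_t^2)$. For $(O(1),1/\poly(d))$-DP to hold for all implementations (the accountant is what certifies privacy), we need $\rho=O(\log d)$. The harder case is that the algorithm's certified privacy may come from another accountant; I would argue that the Gaussian implementation must itself be $(O(1),1/\poly(d))$-DP, which for adaptive Gaussian compositions forces roughly $\sum_t 1/\sigma_t^2=O(\log d)$. This follows from the tightness of Gaussian composition, since the privacy loss of adaptive Gaussian queries is Gaussian with variance $\sum_t 1/\sigma_t^2$ on a worst pair.

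By the chain rule for KL under adaptivity, the total average information is $\lesssim n^2\sum_t 1/(d\sigma_t^2)=O(n^2\log d/d)$. Fano then requires this to be $\Omega(\log d)$, giving $n=\Omega(\sqrt d)$. The general-$\varepsilon$ version with $\sum_t 1/\sigma_t^2\lesssim\varepsilon^2\log(1/\delta)$ gives $\sqrt d/\varepsilon$ up to logs.

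\textbf{Main obstacle.} The step I expect to be hardest is making the per-query bound rigorous with the data-sampling randomness. The answer is a sum over actual samples, not an expectation, and $n$ samples of $f(x_i)$ with $n\ll\sqrt d$ could reveal $j^*$ through second-moment structure. Rather than fight this directly, I would first condition on the full dataset and bound the information about $j^*$ through an averaged $\chi^2$ over the product measures. This mirrors SQ lower bounds, using the Gaussian noise as the tolerance.

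The remaining subtlety is converting the DP assumption into the zCDP-style budget on the $\sigma_t$. The $1/\poly(d)$ $\delta$ is exactly what lets an $O(\log d)$ $\rho$ be tolerated. Through the Fano step, this is what yields a clean $\Omega(\sqrt d)$ rather than one with a worse logarithmic loss.
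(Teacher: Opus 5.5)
Your proposal has a genuine gap at the step ``by the chain rule for KL under adaptivity, the total average information is $\lesssim n^2\sum_t 1/(d\sigma_t^2)$.'' Your per-query bound needs the data to be distributed as $P_j^{\otimes n}$, which differs from $P_0^{\otimes n}$ only through the level-one character on coordinate $j$; that is what lets Parseval control $\sum_j\|\Delta_j\|_2^2$. But every query is evaluated on the \emph{same} dataset. The chain rule gives $I(j^*;A_{1:k})=\sum_t I(j^*;A_t\mid A_{<t})$. Conditioned on $A_{<t}$ and $j^*=j$, the dataset is no longer distributed as $P_j^{\otimes n}$: the earlier answers have correlated it with the queries, and its law need not be a product over users or differ from a common reference by one degree-one character. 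So the Parseval bound does not apply to the $t$-th term. This is exactly why SQ lower bounds do not transfer to oracles that answer on a fixed sample, as the paper remarks, and the accumulation of information about $X$ across queries is the whole difficulty. Your proposed fix, conditioning on the full dataset and averaging a $\chi^2$, is not yet an argument. Once you condition on $X$, the transcript is independent of $j^*$, and what you must bound is how much the transcript reveals about each column $X^{\le n}_j$. That is a direct-sum statement over the $d$ coordinates, and it is where the real work lies.

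The paper supplies exactly this by a different route. It simulates the Gaussian oracle by distributed noise addition, with each user adding $\mathcal{N}(0,\sigma^2/n)$. This turns any compositional $\rho$-zCDP algorithm into an interactive local $n\rho$-zCDP blackboard protocol, and KL stability then gives $I(M;X^{\le n})\le n^2\rho$. On the other side, it shows $I(M;X^{\le n}\mid V,Z)=\Omega(d)$ for every $2/3$-accurate blackboard protocol. Under $Z=2$ the columns are independent, so a random column carries at most a $1/d$ fraction of the information, and this information is superadditive across users. Jayram's Hellinger cut-and-paste inequality then converts single-user changes into the all-users change $h^2(M_{\bar 1},M_{\bar 2})=\Omega(1)$ that accuracy forces. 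Because the bound is on information about the entire input, adaptivity and data reuse come for free.

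Separately, your budget conversion is backwards. $(O(1),1/\poly(d))$-DP corresponds to $\rho=O(1/\log d)$, not $O(\log d)$, since the constraint is $\rho\lesssim\varepsilon^2/\log(1/\delta)$; smaller $\delta$ tightens the budget rather than relaxing it. This does not break your final inequality, because $O(1/\log d)\le O(\log d)$. However, your claim that actual $(\varepsilon,\delta)$-DP of the Gaussian implementation forces $\sum_t 1/\sigma_t^2$ to be small is false: zero-sensitivity queries with tiny $\sigma_t$ leak nothing. That is why the paper assumes the privacy guarantee is compositional.
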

The main advantage of this lower bound over most existing lower bounds for selection in restricted models is that we allow algorithms that use queries adaptively. 
Adaptivity is crucial for implementing most standard ML algorithms and is known to be necessary for some learning problems~\cite{dagan2020interaction}. 
We prove this result specifically for aggregation with Gaussian noise addition, with the privacy guarantee derived by composition of the guarantees for individual queries (\Cref{thm:main-lower-bound}). This implies that our lower bound applies to any API with a comparable privacy-utility trade-off, which is known to be asymptotically optimal for $L_2$ error. We observe that such an aggregation oracle can be implemented via distributed noise addition~\cite{DworkKMMN06} while satisfying weaker privacy guarantees. We then prove a lower bound for such protocols using the information-theoretic techniques from~\cite{braverman2016communication,duchirogers2019lower}.

To address these limitations, we propose adding a new {\em Sparse Vector Technique} (SVT) primitive. This primitive implements a special case of the sparse vector technique of \cite{DworkNRRV09}, and suffices to implement a large class of private algorithms. 
By integrating SVT as a fundamental primitive, our approach expands the expressive power of DP APIs, improves accuracy guarantees, and supports a wider range of applications.

Informally, the SVT primitive implements the {\sc AboveThreshold} subroutine for additive queries. 
We first formally define the SVT primitive. 
This technique is designed to privately process a potentially long sequence of queries by revealing information only about those whose true values exceed a specified threshold. 
A key innovation of SVT is that by careful correlation of the noise, the incurred privacy cost grows with the number of initialized SVT streams, and is independent of the number of queries issued. 
In our proposed primitive, each query is restricted to be a sum of bounded contributions from individual users. 
These are often referred to as additive or counting queries in the context of SVT, and decomposable score functions in the context of selection. 
Given $n$ users with data points $x_1,\ldots,x_n \in \calX$, a query is specified by a function $q_j : \calX \rightarrow [0,1]$ and a threshold $\tau_j$. 
Up to noise, the query asks if $\sum_{i=1}^n q_j(x_i) \geq \tau_j$, and a {\sc Yes} answer closes the current stream. \footnote{Symmetrically, one may consider a below-threshold version of such a query. More generally, one can consider a {\sc BetweenThresholds} query~\cite{BunSU17} with two thresholds, where the stream closes when the query answer is between the two thresholds.}
Because SVT needs to be stateful, we wrap this functionality in an {\em object} that can be created once and queried multiple times with different $(q_j, \tau_j)$ values chosen potentially in an online fashion. 
As in SVT, the API ensures that {\sc No} answers do not incur an additional per-query cost, and each query is correctly answered up to a small additive error.

This primitive has the advantage of simplicity: each query is similar to an aggregation query with one additional threshold parameter. 
This simplicity allows for efficient and modular implementations of this primitive. While the precise implementation will vary with the trust model, we discuss several options in \Cref{subsec:implementation}.

At the same time, we show that this primitive is quite powerful. 
SVT by itself has several applications and one can naturally implement any algorithm that builds on SVT for additive queries in this model (e.g., Private Multiplicative Weights~\cite{HardtR10}). 
We next argue that selection (the task solved by the exponential mechanism~\cite{McSherryT07}) can be implemented using the SVT API for any decomposable score function. 
Recall that given a dataset $(x_1,\ldots,x_n) \in \calX^n$, and $d$ score functions $s_1,\ldots, s_d: \calX^n \rightarrow \R$, the goal is to select the candidate $k \in [d]$ that (approximately) maximizes the score. 
We restrict ourselves to additive score functions of the form $s_j(x_1,\ldots,x_n) = \sum_{i=1}^n f_j(x_i)$ for some $f_j : \calX \rightarrow [0,1]$. We say an output candidate $j$ achieves additive error $\alpha$ if
\[s_j(x_1,\ldots,x_n) \ge \max_{k\in[d]} s_k(x_1,\ldots,x_n) - \alpha.\]

The reader may immediately notice that if one knew the target value $\max_{k\in[d]} s_k(x_1,\ldots,x_n)$, then a sequential pass over the $d$ score functions with that target value minus the SVT accuracy margin as a threshold yields a candidate with additive error at most twice that margin. 
This gives a mechanism with error $O(\log d / \eps)$ that matches that of the exponential mechanism, and is known to be optimal. Without knowing the target value, the natural approach is to start with a large guess for this threshold, and gradually reduce it. 
This algorithm (\Cref{alg:gen:api:linear}) yields the optimal error with $O(d(n+1))$ SVT queries.

It is easy to see that any algorithm must make $\Omega(d)$ queries. When $n$ is large, the algorithm above incurs significant overhead in the number of queries. This leads to the natural question of whether one can reduce the number of queries to be closer to linear in $d$. 
We show that this is indeed possible: we give a new algorithm that achieves the optimal $O(\log d/\eps)$ additive error (when $n$ is polynomial in $d$) and makes only $O(d\log^\ast d)$ queries to the SVT API in expectation.
\begin{restatable}{theorem}{thmselectthresh}
\label{thm:select:thresh}
Let $\eps\in(0,1]$. 
There exists an $\eps$-pure DP protocol that solves the selection problem for additive scores up to expected additive error $O\left(\frac{\log(dn)}{\eps}\right)$, using $O(d\log^\ast n+\log^3 n)$ oracle queries to the SVT API in expectation.  
\end{restatable}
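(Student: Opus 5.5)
We will build the protocol as a privately guided search for the unknown target value $s^\ast=\max_k s_k(x_1,\ldots,x_n)$, which lies in $[0,n]$. The linear-scan algorithm (\Cref{alg:gen:api:linear}) already shows that once a threshold $\tau\approx s^\ast-\alpha$ is known, a single SVT stream run over the $d$ candidates returns a candidate with error $O(\alpha)$, where $\alpha=O(\log(dn)/\eps)$ is the SVT accuracy margin at privacy $\eps/O(1)$ with failure probability $1/\poly(dn)$. The task is to locate $s^\ast$ to within $O(\alpha)$ using $O(d\log^\ast n+\log^3 n)$ queries instead of $O(dn)$. The whole protocol will open only $O(1)$ SVT streams, each with budget $\eps/O(1)$, so basic composition gives $\eps$-pure DP. Queries and thresholds will depend only on previous SVT outputs, so adaptivity is free.

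\textbf{Step 1 (coarse estimate from a sample).} First we find a crude upper estimate $T_0$ with $s^\ast\le T_0\le O(s^\ast+\alpha)$. We do this by descending through geometric thresholds $n,n/2,n/4,\ldots$. For each threshold $\tau$ we sweep the $d$ candidates in one SVT stream; a No on every candidate certifies $s^\ast<\tau+\alpha$. A naive version costs $d\log n$ queries. To cut this, we pass through an iterated-logarithm ladder. At stage $t$ we only need to decide which interval of a partition into $\log^{(t)} n$ scales contains $s^\ast$. The key trick is that we never re-sweep all $d$ candidates per threshold. Instead, one SVT stream is queried with the \emph{max-type} additive query ``is $\sum_i f_j(x_i)\ge\tau$'' only on candidates not yet eliminated. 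We also use a binary search over scales on a single candidate-independent aggregate: query $q(x)=\frac1d\sum_j f_j(x)$ and its thresholds to bound $s^\ast\in[\bar s,d\bar s]$, which costs $O(\log n)$ queries per stream refinement. Repeating this inside $O(\log n)$ nested refinements, each of $O(\log n)$ queries, accounts for the additive $\log^3 n$ term.

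\textbf{Step 2 (iterated refinement).} Given an interval $[L,U]$ known to contain $s^\ast$ with $U/L\le 2^{m}$, we split it into $m$ geometric pieces and locate the correct piece. We sweep candidates once against the thresholds descending through the pieces, using a \textsc{BetweenThresholds}-style reuse. A candidate answered No at threshold $\tau$ is only retested at lower thresholds if it could still be the max, so each candidate is charged $O(1)$ queries per stage. After one stage the ratio drops from $2^m$ to $O(m)$, i.e.\ $m\mapsto \log m$. After $\log^\ast n$ stages the interval has additive width $O(\alpha)$. Each stage costs $O(d)$ queries in expectation, giving $O(d\log^\ast n)$ in total. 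The final linear-scan pass with threshold $L-\alpha$ outputs the answer, with error $O(\alpha)=O(\log(dn)/\eps)$ conditioned on the SVT accuracy event. On the failure event, of probability $1/\poly(dn)$, the error is at most $n$, which contributes $o(1)$ to the expectation.

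\textbf{Main obstacle.} The hardest part is Step 2: showing that each refinement stage uses only $O(d)$ expected queries while opening only $O(1)$ streams. Naively, each candidate must be compared against all $m$ thresholds. I would first try randomizing the candidate order so that, in expectation, only a geometrically decreasing number of candidates survive to lower thresholds. I would also restart the stream after each Yes and charge it to the stage, so that each stage ends with at most $O(1)$ Yes answers. Doing the privacy accounting per Yes answer, with $O(1)$ Yes answers per stage, would however compound to $O(\log^\ast n)$ streams. To avoid this, I would allot stage $t$ a budget $\eps_t\propto 2^{-(\log^\ast n - t)}$-weighted, summing to $\eps$. The accuracy in later stages, where the width is small, then dominates only by constants.
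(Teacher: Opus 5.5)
\textbf{There is a genuine gap.} The step that makes the query count near-linear is never supplied, and the mechanisms you sketch for it do not work.

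In Step 2 you charge each candidate $O(1)$ queries per stage, on the grounds that a candidate answering No at $\tau$ is retested at lower thresholds only if it could still be the maximizer. But a No at $\tau$ only certifies $s_j\le\tau+\alpha$. Every such candidate therefore remains a possible maximizer at every lower threshold. A descending sweep over $m$ thresholds costs $dm$ queries before the first Yes, and reordering the candidates does not change which of them answer No. If you have in mind a record-breaking argument over a random order, that yields about $\ln d$ Yes answers per pass, i.e.\ $\ln d$ closed streams. That count does not shrink with the width, so the per-stream accuracy degrades by a $\log d$ factor.

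The missing idea is a way to find, with $O(d)$ queries, a threshold that only few candidates exceed, so that one filtering pass over all $d$ candidates produces few Yes answers. The paper does this as follows:
\begin{itemize}
\item When the current interval has length $x\alpha$, set $t=\lceil\log_2 x\rceil$ and include each candidate in a sample with probability $1/t$.
\item Run \Cref{alg:geometric_bs} on the sample, which costs about $d$ queries in expectation.
\item Set $\tau$ above the sample's certified maximum by more than the estimation error plus the SVT accuracy.
\end{itemize}
The number of candidates above the sample maximum is dominated by a geometric variable with mean $t$, so it is below $t^2$ except with probability $e^{-t}$. The paper then allows up to $t^2$ Yes answers per phase, opening a fresh stream with budget $\eps/t^3$ after each. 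A phase thus costs $O(\eps/t)$ and each stream is $t^3\alpha$-accurate, which is affordable because the width is $x\alpha$. The surviving set is searched with at most $t^3$ queries, and the width drops to $O(\alpha\log^3 x)$. The budgets sum to $O(\eps)$ because $t_{i+1}\le t_i/2$ with $t_i\ge 64$, so the sum is dominated by the last phase. Privacy must hold on every run, not only on the accuracy event, so the pass is hard-capped at $t^2$ Yes answers, with a fallback to searching all $d$ candidates. Your ``at most $O(1)$ Yes answers per stage'' would likewise need an enforced cap, and as stated it has no justification.

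\textbf{Several other components also fail as written.}
\begin{itemize}
\item \textsc{BetweenThresholds} (the paper's CompareToThresholds) is only $(\eps,\delta)$-DP, which conflicts with the pure-DP claim.
\item Your recursion tracks the ratio $U/L$, but the goal is additive width $O(\alpha)$. An interval of ratio $2$ or $O(m)$ can still have width $\Theta(n)$, so iterating $m\mapsto\log m$ never certifies error $O(\alpha)$. The paper's phases act on the normalized width $x=(u-\ell)/\alpha$, via $x\mapsto 64\lceil\log_2 x\rceil^3$.
\item In Step 1, the average query $\frac{1}{d}\sum_j f_j$ only gives $s^\ast\in[\bar s,d\bar s]$, a factor-$d$ bracket rather than $T_0=O(s^\ast+\alpha)$.
\item The $\log^3 n$ term is asserted to match the target rather than derived. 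In the paper it is the $t^3$ cost of searching the small surviving set, summed over phases.
\item Your opening claim of $O(1)$ streams contradicts your later per-stage accounting.
\end{itemize}
The geometric budget allocation you propose at the end is in the spirit of the paper, which uses it both inside \Cref{alg:geometric_bs} and across phases. It only pays off, however, once a threshold with few exceeders is available.
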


When the number of users $n$ is upper bounded by a polynomial in $d$, the error of our protocol matches the optimal additive error of $\O{\frac{\log d}{\eps}}$ achievable in the central model of privacy, and the query complexity matches the $d$ lower bound up to an $O(\log^\ast d)$ factor.
We then show a number of applications of \Cref{thm:select:thresh}, specifically, to $F_p$ losses, heavy hitters with categories, multiplicative weights with the exponential mechanism, and sparse mean estimation. 

Our lower and upper bounds for selection extend to other natural tasks. Combined with known results for the shuffle model, our new upper bounds in the SVT model, and our lower bounds in the aggregation model emphasize the value of the SVT primitive. 
As we show in \Cref{fig:time:summary}, our new API allows for accurate selection, where aggregation and shuffle models must incur large errors. 

\definecolor{darkblue}{rgb}{0,0,.75}
\newcommand{\darkblue}[1]{\textcolor{darkblue}{#1}}

\begin{figure*}[!htb]
\centering
\small
\setlength{\tabcolsep}{6pt}
\renewcommand{\arraystretch}{1.15}
\begin{tabular}{lcccc}
\toprule
 & \textbf{\darkblue{SecAgg}} 
 & \textbf{\darkblue{Shuffle DP}} 
 & \textbf{\darkblue{SVT}} 
 & \textbf{\darkblue{Central DP}} \\
\midrule
\textbf{Aggregation} 
& $\frac{\sqrt{d}}{\eps}$ 
& $\frac{\sqrt{d}}{\eps}$ 
& $\frac{\sqrt{d}}{\eps}$ 
& $\frac{\sqrt{d}}{\eps}$ \\

\textbf{Selection} 
& $\frac{\sqrt{d}}{\eps}$ 
& $\frac{\sqrt{d}}{\eps}$ 
& $\frac{\log d}{\eps}$ 
& $\frac{\log d}{\eps}$ \\


\bottomrule
\end{tabular}

\vspace{-0.08in}
\caption{Comparison of the additive error achievable for the different APIs in the federated setting, up to constant factors, for $n$ users. In the aggregation row each user holds a vector of $\ell_2$ norm at most $1$ and the error is the $\ell_2$ error of the sum. In the selection row each user holds scores in $[0,1]^d$ and the error is the additive error in the sum score; the entry $\sqrt{d}/\eps$ indicates that error $o(n)$ is not achievable unless $n\ge\Omega(\sqrt{d}/\eps)$. The entries for SVT and central DP are upper bounds ($\eps$-DP, \Cref{thm:select:thresh} for SVT with $n$ polynomial in $d$). The entries for SecAgg and shuffle DP are lower bounds: for algorithms that access the data through an aggregation API (\Cref{thm:main-lower-bound}), and for robust or single-round shuffle protocols (\Cref{lb:shuffle:one} and \Cref{lb:shuffle:two}).}
\vspace{-0.12in}
\label{fig:time:summary}
\end{figure*}


\textbf{Other related work.} 
Our lower bounds for aggregation APIs are most closely related to statistical query (SQ) lower bounds in learning theory \cite{BlumFJKMR94,fgrvx2017statistical}. An SQ oracle is an oracle that for a query $q$ and data distribution $P$ is guaranteed to return a value within some prescribed tolerance $\tau$ of the expectation $\E_{x\sim P}[q(x)]$ (the returned value may be chosen arbitrarily by the oracle). It is known that learning a parity function over $k$ coordinates requires $\Omega(2^k \tau^2)$ statistical queries of tolerance $\tau$ (e.g.~ \cite{Feldman14}). Learning a parity function reduces to selection in dimension $2^k$ and thus this also gives a lower bound for selection. These lower bounds do not apply to our setting as we answer queries relative to a given dataset (and not the distribution).


SQ lower bounds are known to imply lower bounds for the local model of privacy with a pure-DP randomizer \cite{KasiviswanathanLNRS11}. This reduction is also based on a simulation of a randomizer using an SQ oracle and, as a result, does not imply any lower bounds for the setting where the number of queries is comparable to $d$. A direct lower bound of $\Omega(d/\eps^2)$ samples for $\eps$-DP local {\em non-interactive} protocols
was given by \cite{duchi2018minimax}. A similar lower bound on non-interactive local-DP protocols was given by \cite{ullman2018tight} for sparse Boolean mean estimation. More recently, a lower bound of $d/\eps^2$ for sparse Gaussian mean estimation in the local DP model that allows interactivity and $(\eps,\delta)$ and $1$-Renyi-DP (or KL-private) randomizers was given by \cite{duchirogers2019lower}.

Several recent works \cite{CheuU21,ChenGKM21,beimel2020round} provide lower bounds for private selection in variants of the multi-message shuffle model. All of these apply only to non-interactive (or one-round shuffling) settings. In fact, \cite{beimel2020round} show that a two-round multi-message protocol can simulate arbitrary secure multi-party protocols. The lower bound of \cite{CheuU21} is the same as ours, $\Omega(\sqrt{d}/\eps)$, and is based on somewhat related ideas. The lower bounds in \cite{ChenGKM21,beimel2020round} are of the form $d/m$, where $m$ is the number of messages sent by each user, and thus are not directly comparable to ours.

The question of implementing selection using simpler APIs has also been raised as an open question for {\em low-sensitivity} queries in~\cite{Steinke25}, and recent work~\cite{LeemanM2025} has shown nearly-optimal algorithms for that model. 
Arbitrary low-sensitivity queries can, however, be challenging to implement, and hence we focus on additive queries in this work.

\section{Preliminaries}
We use the notation $[n]$ for a positive integer $n>0$ to denote the set $\{1,2,\ldots,n\}$. 
For functions $f(n),g(n)$, we sometimes use the notation $f(n)\preceq g(n)$ in conjunction with $f(n)=\O{g(n)}$, i.e., that there exists a constant $C\ge 0$ such that $f(x)\le C\cdot g(x)$ for all $x\ge n$ for a sufficiently large $n$. 

\begin{definition}[Differential privacy~\cite{DworkMNS06j}]
    Let $M$ be a randomized algorithm mapping a dataset $\vx \in \calX^n$ to a range $\mathcal{R}$. We say two datasets $\vx, \vx'$ are neighboring if $x_i = x'_i$ for all but one of the $n$ coordinates. The mechanism $M$ is $(\eps,\delta)$-differentially private if for any pair of neighboring datasets $\vx, \vx'$, and for any event $E \subseteq \mathcal{R}$, 
    \begin{align*}
        \Pr[M(\vx) \in E] \leq e^{\eps}\cdot \Pr[M(\vx') \in E] + \delta.
    \end{align*}
    When an algorithm is $(\eps, 0)$-differentially private, we often call it $\eps$-differentially private.
\end{definition}
\paragraph{Shuffle model.}
In the shuffle model, each individual user first randomizes their own data into one or more messages. 
These messages are then sent to a secure shuffler, which aggregates and randomly permutes all user messages. 
The shuffle model considers an attacker who only sees the collection of messages after the permutation and thus, the set of shuffled messages must satisfy differential privacy. 
Equivalently, an attacker can only observe the histogram of the messages. 
Formally, we have the following definition:

\begin{definition}[Shuffle model of differential privacy]
In the shuffle model of privacy, a protocol $\Pi$ consists of three randomized algorithms:
\begin{itemize}
\item 
A randomizer $\Pi_R:\calX\to\calY^*$ that maps data to vectors, possibly of variable length, which denotes the number of messages sent. 
The protocol is said to be single-message if the probability of sending a single message is $1$ on all inputs; otherwise it is said to be multi-message.
\item 
A shuffler $\Pi_S:\calY^*\to\calY^*$ that applies a uniformly random permutation to all messages. 
\item
An analyzer $\Pi_A:\calY^*\to O$ that performs computation on a set of messages (after permutation).
\end{itemize}
Since the shuffler is the same across all protocols, we use $\Pi=(\Pi_R,\Pi_A)$ to denote each shuffle protocol $\Pi$, so that on an input $x\in\calX^n$, the output is $\Pi(x)=\Pi_A(\Pi_S(\Pi_R(x_1),\ldots,\Pi_R(x_n)))$. 
Hence, we denote the output of the shuffler by $(\Pi_S\circ\Pi_R^n)(x):=\Pi_S(\Pi_R(x_1),\ldots,\Pi_R(x_n))$. 
\end{definition}
We remark on two existing lower bounds for selection in the shuffle model of privacy. 
We first recall the following lower bound by \cite{ChenGKM21}:
\begin{theorem}[Theorem 1.9 in \cite{ChenGKM21}]
\label{lb:shuffle:one}
Any $(\eps,\delta)$-shuffle DP protocol that allows each user to send up to $m$ messages and solves the selection problem on $d$-dimensional inputs with additive error $\frac{1}{10}$ in the average scores, with constant success probability, must use $\Omega\left(\frac{d}{m}\right)$ samples, for any $\eps=\O{1}$ and $\delta=o\left(\frac{1}{d}\right)$.
\end{theorem}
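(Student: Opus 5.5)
The plan is to reduce the shuffle protocol to a non-interactive local protocol whose randomizer is only weakly private, and then run an information argument over a hard family of inputs in which one of $d$ coordinates is planted. The key tool is the notion of a \emph{dominated} randomizer from \cite{ChenGKM21}. Call $R:\calX\to\calY^*$ $(\eps',\delta')$-dominated if there is a single distribution $\mathcal{D}$ on $\calY^*$ such that for every $x\in\calX$ and every event $E$,
\[\Pr[R(x)\in E]\le e^{\eps'}\mathcal{D}(E)+\delta'.\]
Roughly, this is local DP measured against a fixed reference distribution rather than against every other input, which is the right notion when only a multiset of messages is released.

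\textbf{Step 1 (structural lemma).} I will show that if $(\Pi_R,\Pi_A)$ is $(\eps,\delta)$-shuffle DP with at most $m$ messages per user, then $\Pi_R$ is $(\eps+O(1),\,O(\delta\cdot\mathrm{poly}(n)))$-dominated, with a loss that grows at most linearly in $m$ in the relevant divergence. The reference distribution $\mathcal{D}$ comes from the other users' messages: fix a background dataset, and let $\mathcal{D}$ be the law of a uniformly random size-$\le m$ sub-multiset of the shuffled output. Shuffle DP says that swapping one user's input changes the output histogram by at most $e^\eps$ plus $\delta$. The $m$-message bound says one user's contribution is a small sub-multiset, so with constant probability it is "hidden" inside $\mathcal{D}$.

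\textbf{Step 2 (hard instance).} Choose $j^\ast\in[d]$ uniformly at random. Each user independently draws a score vector in $\{0,1\}^d$ whose coordinates are i.i.d.\ Bernoulli$(1/2)$, except that coordinate $j^\ast$ is Bernoulli$(1/2+1/5)$. Any protocol with additive error $1/10$ in average scores must then output $j^\ast$ with constant probability once $n$ is large enough that the empirical averages concentrate. By Fano's inequality this requires $I(j^\ast;\text{transcript})=\Omega(\log d)$. A variant of the argument asks instead for $\Omega(1)$ advantage via a $\chi^2$ bound.

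\textbf{Step 3 (information bound).} Let $P_0$ be the law of $R(x)$ under the null distribution of $x$, and let $P_j$ be its law when coordinate $j$ is planted. Domination gives likelihood ratios $dP_j/d\mathcal{D}$ bounded by $e^{\eps'}$, up to the $\delta'$ slack. The perturbations $P_j-P_0$ across $j$ are orthogonal directions in the space of $x$. Applying a Bessel-type inequality in $L^2(\mathcal{D})$ should then give
\[\sum_j \chi^2(P_j\,\|\,P_0)=O(m)\]
in the dominated setting. Tensorizing over $n$ users yields $\frac1d\sum_j \mathrm{KL}(P_j^{\otimes n}\|P_0^{\otimes n})=O(nm/d)$, which must be $\Omega(1)$. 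Hence $n=\Omega(d/m)$. The condition $\delta=o(1/d)$ ensures that the $\delta'$ slack, summed over the $d$ hypotheses, contributes only $o(1)$.

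\textbf{Main obstacle.} I expect Step 1 to be the hardest part, namely getting domination with only an $O(m)$ loss rather than something exponential in $m$. Shuffle privacy is a statement about the joint histogram, not about any single user's randomizer. The first approach I would try is a hybrid argument that compares the true input to a fixed background input, one message at a time, averaging over which $m$ positions in the shuffled output belong to the target user.
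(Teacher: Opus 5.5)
The paper does not prove this statement; it imports it from \cite{ChenGKM21}. Your framework (dominated randomizers, a planted-coordinate instance, a per-user information bound) is the right kind of argument. The gap is quantitative, but it breaks the proof as written.

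\textbf{Step 1 is false as stated.} In Step~1 the target user's messages are \emph{not} hidden with constant probability. A uniformly random $m$-sub-multiset of the $nm$ shuffled messages coincides with them only with probability $1/\binom{nm}{m}\approx(en)^{-m}$. What your hybrid actually yields is
\[
\Pr[R(x)\in E]\le\Pr[S_x\text{ contains a sub-multiset in }E]\le e^{\eps}\Pr[S_{x_0}\text{ contains a sub-multiset in }E]+\delta\le e^{\eps}\binom{nm}{m}\mathcal{D}(E)+\delta,
\]
which is $(\eps+m(1+\ln n),\delta)$-domination. The $\ln n$ cannot be removed, even for $m=1$. Shuffled randomized response on $[B]$ (report $x$ with probability $1/2$, otherwise a uniform element of $[B]$) is $(\eps,\delta)$-shuffle DP for $B=c\,\eps^2 n/\log(1/\delta)$ by the standard privacy-blanket analysis. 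Yet any dominating $\mathcal{D}$ must satisfy $\mathcal{D}(\{x\})\ge(1/2-\delta')e^{-\eps'}$ for every $x\in[B]$. This forces $e^{\eps'}\ge B/4$, i.e.\ $\eps'\ge\ln n-O(\log\log(1/\delta)+\log(1/\eps))$, whenever $\delta'\le 1/4$. So no $(\eps+O(1),\text{small})$-domination lemma exists.

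\textbf{Step 3 then fails twice with $\eps'=\Theta(m\log n)$.} First, Bessel in $L^2(\mathcal{D})$ with likelihood ratios in $[0,e^{\eps'}]$ only gives $\sum_j\chi^2(P_j\|P_0)=O(e^{\eps'})=n^{O(m)}$. You need a bound \emph{linear} in $\eps'$, and KL rather than $L^2$ supplies it:
\begin{itemize}
\item Pure domination gives $\KL(R(x)\|\mathcal{D})\le\eps'$ for every $x$, hence $I(X;R(X))\le\eps'$ under the null input.
\item Independence of the null coordinates gives $\sum_j I(X_j;R(X))\le I(X;R(X))$.
\item Comparing triangular discrimination with Jensen--Shannon gives $\KL(P_j\|P_0)\le\chi^2(P_j\|P_0)=O(I(X_j;R(X)))$.
\end{itemize}
So each user contributes $O(\eps'/d)$ to $\frac1d\sum_j\KL(P_j^{\otimes n}\|P_0^{\otimes n})$; a level-1 Fourier inequality, rather than Bessel, gives the same. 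The $\delta$ slack is removed by splitting each $R(x)$ into a purely dominated part and a part of mass at most $\delta$. This costs $n\delta$ in total variation, and that is where $\delta=o(1/d)$ enters, with $n\le d$ without loss of generality (it is not a sum over the $d$ hypotheses).

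Second, with per-user contribution $O((\eps+m\ln n)/d)$, your closing requirement of $\Omega(1)$ divergence only gives $n=\Omega(d/(m\log n))$. You must instead use the Fano requirement $I(j^\ast;M)=\Omega(\log d)$ that you state in Step~2. It gives $n(\eps+m\ln n)/d=\Omega(\log d)$; since $n\le d$ may be assumed, $\ln n\le\ln d$ cancels, and $n=\Omega(d/m)$ for $\eps=O(1)$. Your concentration argument in Step~2 also needs $n\gtrsim\log d$; below that, the central-model $\Omega(\log d/\eps)$ selection lower bound already suffices.
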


\begin{theorem}[Theorem 7.3 in \cite{CheuU21}]
\label{lb:shuffle:two}
Any $(\eps,\delta)$-shuffle DP protocol that remains private when at least a $\frac{1}{3}$-fraction of users follow the protocol and solves the selection problem on $d$-dimensional inputs with additive error $\alpha$ in the population means and success probability at least $99/100$ must use $\Omega\left(\frac{\sqrt{d}}{\alpha\eps}\right)$ samples for $\delta\log\frac{d}{\delta}\ll\frac{\alpha^2\eps^2}{d}$.
\end{theorem}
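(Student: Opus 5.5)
Since this is Theorem~7.3 of \cite{CheuU21}, quoted here as a black box, I will only outline the proof I would give. It has two stages. First, reduce robust shuffle privacy to a single-user structural property of the randomizer. Second, prove a selection lower bound for randomizers with that property by an information argument over a hard prior.

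\textbf{Step 1 (domination).} The key observation is that robustness forces privacy to hold even when only $n/3$ users run $\Pi_R$ honestly. The remaining users are adversarial and may send nothing. I would use this to show that $\Pi_R$ is \emph{$(\eps,\delta)$-dominated}: there is a single data-independent distribution $D$ on $\calY^*$ such that for every $x\in\calX$ and every event $E$,
\[
\Pr[\Pi_R(x)\in E]\le e^{\eps}D(E)+\delta .
\]
To construct $D$, I would fix a reference dataset for the other honest users and compare two executions that differ in a single honest user. The shuffled histogram hides which honest user sent which message. Averaging over which position the changed user occupies then yields $D$ as a mixture of message distributions, at a loss of only constant factors in $\eps$ and of $\delta$ times the number of honest users. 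I expect this to be the main obstacle. The step has to exploit the permutation carefully: it must go from a guarantee on the whole multiset to a per-user bound without incurring the $e^{n\eps}$ blow-up that group privacy would give. I would follow the ``pairing/mixture'' argument in the spirit of \cite{CheuU21}. Roughly, each user's message vector is absorbed into the aggregate of the other roughly $n/3$ honest users, and $D$ is taken to be a suitable average of the randomizer's outputs over the inputs in the reference dataset.

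\textbf{Step 2 (hard instance and information bound).} For the lower bound itself, I would take the following prior. Pick $j^\ast\in[d]$ uniformly at random. Each user's score vector has i.i.d.\ uniform $\{0,1\}$ coordinates, except that coordinate $j^\ast$ has mean $1/2+\alpha$. Any selector with error below $\alpha$ and success probability $99/100$ must identify $j^\ast$, so by Fano's inequality we need $I(j^\ast;\text{transcript})=\Omega(\log d)$, or a constant after a standard boosting/reduction. I would bound this information by a $\chi^2$ computation against the null distribution in which all coordinates have mean $1/2$.

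Domination lets me write each user's output density relative to $D$ as $1+g(x)$ with $|g|\lesssim\eps$, outside a set of probability $O(\delta)$. Planting $j^\ast$ perturbs the input law by a factor $1+2\alpha(2x_{j^\ast}-1)$. Hence the per-user likelihood ratio deviates from $1$ by at most $\alpha\eps\,\langle g, \text{coordinate-}j^\ast\text{ character}\rangle$. Summed over $j$, these Fourier weights total at most $\eps^2$, so a typical coordinate receives only about $\eps^2/d$.

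The $\chi^2$ divergence of the mixture over $j^\ast$, with $n$ users, then expands as
\[
\E_{j,j'}\Big(1+\alpha^2\langle g_j,g_{j'}\rangle\Big)^n-1 .
\]
The off-diagonal terms vanish, so only the collision $j=j'$ contributes, with probability $1/d$. This gives a bound of order $(n\alpha\eps)^2/d$, which is why the sample bound scales with $\sqrt d$ rather than $d$. Requiring this to be $\Omega(1)$ forces $n=\Omega(\sqrt d/(\alpha\eps))$.

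The bad events of probability $\delta$ per user, unioned over the users and the $d$ candidates, are exactly what the hypothesis $\delta\log(d/\delta)\ll\alpha^2\eps^2/d$ is designed to absorb. I would handle them by truncating the $\delta$-mass before running the $\chi^2$ argument.
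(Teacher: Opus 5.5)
The paper quotes this result from \cite{CheuU21} without proof, so your outline has to stand on its own. There is a genuine gap: Step~1 fails. The claim that $\Pi_R$ itself is $(O(\eps),\delta')$-dominated, with $\delta'=O(n\delta)$ as you intend, is false for protocols that are private whenever $n/3$ users are honest.

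To see this, let $U$ be the uniform input law and $P_j$ the law in which coordinate $j$ has mean $1/2+\alpha$. Suppose $\Pi_R(x)\le e^{\eps}D$ for all $x$; a small $\delta'$ is handled by truncation at a cost of $n\delta'$ in total variation. Write $r(x,\cdot)=d\Pi_R(x)/dD\in[0,e^{\eps}]$, $\bar r=\E_{x\sim U}r(x,\cdot)$ and $\hat r_j=\E_{x\sim U}[(2x_j-1)r(x,\cdot)]$. Bessel's inequality gives $\sum_j\chi^2(\Pi_R(P_j)\|\Pi_R(U))=4\alpha^2\int\sum_j\hat r_j^2/\bar r\,dD\le 4\alpha^2(e^{\eps}-1)$. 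The shuffled output is a post-processing of $n$ independent message vectors, so KL tensorizes. Hence $I(j^\ast;M)\le 4n\alpha^2(e^{\eps}-1)/d$, and Fano forces $n=\Omega(d\log d/(\alpha^2\eps))$.

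That contradicts known robust protocols. In these, each user sends a message for each of its nonzero coordinates plus its own share of binomial noise messages for every coordinate count. The noise is calibrated so that any $n/3$ honest users already supply what advanced composition over the $d$ counts needs. Such protocols solve selection with $\tilde O(\sqrt d/(\alpha\eps)+\log d/\alpha^2)$ users, so their per-user randomizers cannot be $O(\eps)$-dominated. The averaging over positions you sketch cannot give per-user domination anyway: isolating one user's messages inside the multiset costs a factor of the group size. What robustness does give is that the augmented randomizer $\tilde R(x)=\Pi_S(\Pi_R(x),\Pi_R(X^2),\ldots,\Pi_R(X^m))$ is $(\eps,\delta)$-DP in $x$, where $m\approx n/3$ and $X^2,\ldots,X^m\sim U$. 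Your remark about absorbing a user into the aggregate points this way, but the dominated object is the aggregate, not $\Pi_R(x)$.

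This also invalidates Step~2. The transcript is not a post-processing of independent copies of $\tilde R$, so no $\chi^2$ tensorizes over users. The computation has further problems of its own:
\begin{itemize}
\item One-sided domination gives $1+g\le e^{\eps}$, not $|g|\lesssim\eps$.
\item The off-diagonal $\langle g_j,g_{j'}\rangle$ need not vanish. Orthogonality of characters on the input cube says nothing about the output space.
\item With no off-diagonal terms and diagonal weight $\eps^2/d$, your expression $\E_{j,j'}(1+\alpha^2\langle g_j,g_{j'}\rangle)^n-1$ becomes $\Omega(1)$ only at $n\approx d\log d/(\alpha^2\eps^2)$, not at $(n\alpha\eps)^2/d\approx1$.
\item It is the $\chi^2$ of the mixture. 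That quantity neither bounds $I(j^\ast;M)$ nor rules out selection, since under the null a selector may output a uniformly random index.
\end{itemize}

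A route that works is a hybrid argument in total variation. Switch users from $U$ to $P_j$ one at a time. Consecutive hybrids differ only in one user's input law. By data processing you may therefore discard the messages of every other user except $m-1$ that are $U$-distributed. For the last steps, use that a $P_j$-sample is a $U$-sample with probability $1-2\alpha$. Each step then costs at most $\TV(\tilde R(P_j),\tilde R(U))$, for a $\tilde R$ that does not depend on $j$. Cauchy--Schwarz over the $d$ level-one coefficients, together with the two-sided guarantee of $\tilde R$, gives $\E_j\TV(\tilde R(P_j),\tilde R(U))\le\alpha(e^{\eps}-1)/\sqrt d$ plus a $\delta$-term, which the hypothesis on $\delta$ is meant to absorb. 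Summing over the $n$ steps gives $\E_j\TV(M_j,M_0)=O(n\alpha\eps/\sqrt d)$. On the other hand, a selector that succeeds with probability $99/100$ when $j$ is the only $\alpha$-accurate answer forces $\E_j\TV(M_j,M_0)\ge 0.99-1/d$. The $\sqrt d$ comes from this linear accumulation of total variation, not from a collision term.
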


\section{Lower Bounds for Private Selection via Noisy Aggregates}
The problem of private selection and its specific instantiations have been studied in a number of distributed models of computation. In this section, we review the known lower bounds that are most relevant to our setting. 
We then give a new lower bound for algorithms that rely on noisy aggregation. 
Specifically, we study algorithms whose \emph{only} access to the dataset is through the noisy aggregation oracle defined in \Cref{def:oracle}, i.e., through noisy values of $\sum_{i\in [n]} \phi(x_i)$. 
Within that constraint, the algorithm is entirely unrestricted: it may issue unboundedly many adaptively chosen queries $\phi: \mathcal{X} \to \mathbb{R}^k$ with an arbitrary output dimension $k$. 
Such queries are referred to as high-dimensional linear or statistical queries. 

Our lower bounds apply to algorithms that receive answers perturbed by standard Gaussian noise. 
This is both the most commonly used method in practice and the method that achieves an asymptotically tight privacy-utility trade-off (for expected squared $L_2$ error). 
Consequently, our lower bounds apply to any algorithm whose privacy guarantee is derived from black-box aggregation guarantees (which encompasses essentially all deployed aggregation APIs), since beating it would require beating the optimal trade-off for aggregation itself.
We further remark that aggregation is often combined with subsampling (using privacy amplification results for subsampling). 
While such results improve the communication and computation efficiency of the implementation, they do not improve the privacy-utility trade-off.



We start by formally defining the Gaussian Aggregation oracle. 
For simplicity of presentation, we assume that each query given to the oracle has $L_2$ sensitivity at most $2$ under replacement adjacency, since its per-record output is in $\calB^k_2(1) = \{x\in \R^k \ |\ \|x\|_2 \leq 1\}$. 
Queries with other bounds on sensitivity can be implemented by scaling the noise parameter of the oracle and then scaling the result back.
\begin{definition}
\label{def:oracle}
For the input dataset $S = (x_1,\ldots,x_n)$, let $\calG(S)$ be the oracle parameterized by $S$ that, given a query $\phi\colon \calX\to \calB^k_2(1)$ and parameter $\sigma > 0$, returns a fresh sample from $\calN(\phi(S), \sigma^2 I_k)$, where $\phi(S) = \sum_{i\in [n]} \phi(x_i)$.
\end{definition}

The privacy properties of a single answer to such a query are well-known. Most conveniently for applications involving composition, a single answer satisfies $2/\sigma^2$-zCDP \cite{bun2016concentrated}.
We assume that privacy bounds for an algorithm with access to the Gaussian Aggregation oracle are derived directly from (advanced) composition of individual privacy guarantees.  Following an analogous notion in \cite{joseph2019role}, we refer to such privacy guarantees as {\em compositional}.
\begin{definition}
Let $\calA$ be an algorithm that is given access to $\calG(S)$ and asks $t$ (adaptively chosen) queries with noise scales $\sigma_1,\ldots,\sigma_t$. 
We say that $\calA$ satisfies {\em compositional} $\rho$-zCDP if the sequence of zCDP parameters of individual queries $\rho_1= 2/\sigma_1^2,\ldots,\rho_t=2/\sigma_t^2$ satisfies $\sum_{i\leq t} \rho_i \leq \rho$.
\end{definition}
We remark that, for simplicity, in this definition the noise scales are chosen non-adaptively, since that is required by the standard composition results for Renyi DP \cite{Mironov17,bun2016concentrated}, and standard applications such as DP-SGD do not need adaptive choice of noise scales. 
However, using fully adaptive composition results for RDP with a Renyi filter \cite{feldman2021individual}, we can also allow the noise scales to be chosen adaptively (with the filter ensuring the condition $\sum_{i\leq t} \rho_i \leq \rho$). 

Our lower bound for selection for $d$ score functions will apply to an easier problem of distinguishing the case where all score functions have mean $1/2$ from the case in which one of the functions has mean $1$. 
Indeed, the index returned by a selection algorithm can be verified using a single aggregation query. 
Formally, we assume that each user $i$ is given a $d$-dimensional Boolean vector $X^i$ and the  $j$-th score function is simply the coordinate $j$ of that vector (denoted by $X^i_j$).
\begin{problem}[Sparse Boolean distinguishing]
\problab{sbm-distinguishing}
For $Z \in \{1,2\}$ and $V\in [d]$ we define a distribution $D_{Z,V}$ over $\{0,1\}^d$. When $Z=2$, $D_{Z,V}$ is uniform on $\{0,1\}^d$. When $Z=1$, $D_{Z,V}$ is uniform on $\{x\in\{0,1\}^d:x_V=1\}$. Each
user $i$ is given an independent sample $X^i$ from $D_{Z,V}$. The goal is to estimate $Z$ for $Z$ chosen uniformly in $\{1,2\}$ and $V$ chosen uniformly in $[d]$.
An algorithm (or protocol) is $2/3$-accurate if it produces an estimate $\widehat Z$ such that $\Pr[\widehat Z=Z]\ge 2/3$, where the probability is under all the random choices of the problem and the algorithm.
\end{problem}

We now state our main lower bound for this problem.
\begin{theorem}
\label{thm:main-lower-bound}
Every $2/3$-accurate algorithm for Sparse Boolean distinguishing (\probref{sbm-distinguishing}) that is given access to $\calG(S)$ and satisfies compositional $\rho$-zCDP must have $n = \Omega(\sqrt{d/\rho})$ samples.
\end{theorem}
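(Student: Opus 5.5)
We will prove \Cref{thm:main-lower-bound} by an information-theoretic argument in the style of \cite{braverman2016communication,duchirogers2019lower}. We bound the total information the transcript carries about the hidden index $V$, compare it with the information needed to distinguish $Z=1$ from $Z=2$, and charge each query by its zCDP parameter $\rho_i=2/\sigma_i^2$.

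\textbf{Reduction to per-query divergence.} Let $P_1$ be the transcript distribution under $Z=1$, averaged over a uniform $V$, and let $P_2$ be the distribution under $Z=2$, where the dataset is uniform and independent of $V$. A $2/3$-accurate algorithm needs $\mathrm{TV}(P_1,P_2)\ge 1/3$. By Pinsker and convexity of KL it therefore suffices to show
\[
\frac1d\sum_{v}\mathrm{KL}(P_{1,v}\,\|\,P_2)=O\!\left(\frac{\rho n^2}{d}\right).
\]
We use the chain rule over the $t$ adaptive rounds. Condition on the first $i-1$ answers, which fixes $\phi_i$ and $\sigma_i$. Conditioned on the past and on the data, each answer is Gaussian with mean $\phi_i(S)$. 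We will not condition on the data, since the answer then leaks everything. Instead we condition on the past only.

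\textbf{Key step (the main obstacle): a strong data-processing bound.} Under $Z=2$ the $n$ rows are i.i.d.\ uniform. Under $V=v$, $Z=1$, they are i.i.d.\ uniform conditioned on $x_v=1$. The goal is to show that the $i$-th answer contributes, averaged over $v$, expected KL at most about $\rho_i\cdot \frac{n}{d}\cdot(\text{something bounded by }n)$, which sums to $O(\rho n^2/d)$. A direct Gaussian comparison does not work, because the posterior on the data changes with earlier answers. The first thing we will try is to bound mutual information: $\sum_v \mathrm{KL}(P_{1,v}\|P_2)/d$ is controlled by $I(\text{transcript};\,V)$ in a mixture model where each row is perturbed. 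Following \cite{duchirogers2019lower}, we apply the strong data processing inequality per sample. Tilting the distribution of row $j$ by $x_v=1$ is a likelihood ratio $1+(2x_v-1)$, and over uniform $v$ the coordinates $x_v$ are orthogonal (independent). Hence, for any function $g$ of a single row,
\[
\sum_v\big(\mathbb E[g\mid x_v=1]-\mathbb E g\big)^2\le \mathrm{Var}(g).
\]
This is Bessel's inequality on the degree-one Fourier coefficients. We apply it to the per-row likelihood ratio of the answer, and then use that each row moves the Gaussian mean by at most $1$ in $\ell_2$. This bounds the per-row, per-round information by $O(\rho_i)$ times the variance of the row's influence. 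Summing over the $n$ rows, and linearizing the answer's likelihood in each row while the other rows act as extra noise, gives a total of $O(n\cdot n\rho_i/d)$ per round. An extra factor of $n$ enters because the shift of a single row changes the conditional mean of the sum, and the effects of all $n$ rows add coherently under $Z=1$. This coherence is exactly what gives the $n^2$.

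\textbf{Conclusion.} Summing over rounds with $\sum_i\rho_i\le\rho$ gives an average KL of $O(\rho n^2/d)$. Requiring this to be $\Omega(1)$ forces $n=\Omega(\sqrt{d/\rho})$. The delicate points are making the coherent $n^2$ argument rigorous under adaptivity, and handling the fact that the answer's dependence on the rows is not linear after conditioning. For the latter we will take an extra expectation over the other $n-1$ rows, treat them as fixed, and use convexity of the Gaussian KL $\|\Delta\|^2/(2\sigma_i^2)$ with $\Delta=\sum_j(\mathbb E[\phi_i(x_j)\mid x_{j,v}=1]-\mathbb E\phi_i(x_j))$. Bessel's inequality then bounds $\sum_v\|\Delta\|^2\le n^2$ after Cauchy--Schwarz.
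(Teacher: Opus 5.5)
\textbf{The target inequality is false.} Your opening reduction is sound: Pinsker plus convexity do reduce the theorem to showing $\frac1d\sum_v\mathrm{KL}(P_{1,v}\|P_2)=O(\rho n^2/d)$. But that inequality fails for adaptive algorithms, so the key step cannot be made rigorous.

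Take $\rho=d^{-2}$ and $n=\sqrt{c}\,d^{3/2}$, so $\rho n^2/d=c$. Use two queries with $\sigma_1^2=\sigma_2^2=4d^2$, so $\rho_1+\rho_2=\rho$.

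Round one asks $\phi_1(x)=(2x-\mathbf 1)/\sqrt d$ and sets $\hat v=\arg\max_u y_u$. After dividing by $\sigma_1$, coordinate $v$ has mean $\mu=\sqrt c/2$ under $(Z,V)=(1,v)$, and the other coordinates are essentially standard normal. A change of measure then gives $\Pr[\hat v=v]\approx\frac1d\,\mathbb{E}\bigl[e^{\mu M_d-\mu^2/2}\bigr]\approx\frac1d e^{\mu\sqrt{2\ln d}}$, where $M_d$ is the maximum of $d$ standard normals.

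Round two asks $\phi_2(x)=2x_{\hat v}-1$. On $\{\hat v=v\}$ the second answer is exactly $\mathcal N(n,\sigma_2^2)$ under $P_{1,v}$. Under $P_2$ it is essentially $\mathcal N(O(\sqrt{\log d}),\sigma_2^2+n)$. The conditional KL is therefore about $n^2/(2\sigma_2^2)=cd/8$. By the chain rule,
\[
\frac1d\sum_{v}\mathrm{KL}(P_{1,v}\,\|\,P_2)\;\gtrsim\;\frac{c}{8}\,e^{\sqrt{c\ln d/2}}\;\longrightarrow\;\infty ,
\]
even though $\rho n^2/d=c$. The distinguishing advantage is still $o(1)$, because the event $\{\hat v=v\}$ is rare. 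KL is simply the wrong divergence here: it is dominated by rare transcripts on which the algorithm has located $V$ and spends its remaining budget confirming it. In particular, your per-round claim that the $i$-th answer contributes $O(\rho_i n^2/d)$ on average already fails in round two.

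\textbf{Other problems, and what the paper does instead.} The step you do sketch is not valid even in round one. Joint convexity bounds the KL between the two Gaussian mixtures by $\mathbb{E}\|\phi(S)-\phi(S')\|^2/(2\sigma^2)$ under some coupling of $S$ and $S'$. That quantity includes fluctuation terms; it is not $\|\Delta\|^2/(2\sigma^2)$ with $\Delta$ a difference of means. For the natural coupling, the fluctuation term summed over $v$ is $n$ times the total influence of $\phi$, which can be of order $d$.

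From round two on, the rows conditioned on the past are neither uniform nor independent, since the aggregate couples them. So the orthonormality of the characters $2x_v-1$ that your Bessel step relies on is gone. This is exactly the adaptivity issue you defer, and it is the whole difficulty.

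The paper avoids per-round divergence bounds altogether:
\begin{enumerate}
\item It simulates the oracle by distributed Gaussian noise, each user adding $\mathcal N(0,\sigma^2/n)$. This yields a fully interactive local blackboard protocol that is $n\rho$-zCDP per user and whose transcript likelihood factorizes over users.
\item It works under the null, where KL stability gives $I(M;X^{\le n}\mid Z=2)\le n\cdot n\rho$.
\item It lower-bounds this information by $\Omega(d)\,\mathbb{E}_V\,h^2(M_{\bar 1},M_{\bar 2})$, using the coordinate and user decompositions, the domination lemma, and Jayram's Hellinger cut-and-paste inequality. The cut-and-paste step is what handles full adaptivity, and it needs the factorization from the first step.
\end{enumerate}
Accuracy enters only through total variation and Hellinger distance, which, unlike KL, charge the rare event above only its probability.
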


The key idea of our proof is to use the fact that the Gaussian Aggregation oracle can be implemented via distributed noise addition \cite{DworkKMMN06}. The transcript of this implementation satisfies privacy guarantees that are exactly $n$ times worse (in terms of zCDP parameters) than the guarantees satisfied by the aggregates since each user adds Gaussian noise at scale $\sigma/\sqrt{n}$. Equivalently, Gaussian aggregation with zCDP parameter $\rho$ can be implemented using a {\em local} $n\rho$-zCDP protocol. By applying zCDP composition to the local zCDP parameters of each query implementation, we get a direct simulation of any algorithm that has access to $\calG$ and satisfies compositional zCDP.
\begin{lemma}[\cite{DworkKMMN06}]
\label{lem:local-zcdp}	
Let $\calA$ be an algorithm that is given access to $\calG(S)$ and satisfies {\em compositional} $\rho$-zCDP. 
Then there exists a local $n\rho$-zCDP protocol $\mbox{Dist-}\calA$ that simulates $\calA$.
\end{lemma}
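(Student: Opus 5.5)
We simulate each oracle call by distributed Gaussian noise addition and then account for privacy per user via zCDP composition. Fix the algorithm $\calA$, and consider its $j$-th (adaptively chosen) query $\phi_j\colon\calX\to\calB_2^{k_j}(1)$ with noise scale $\sigma_j$. In $\mbox{Dist-}\calA$, the analyst broadcasts $(\phi_j,\sigma_j)$. Each user $i$ independently draws $Z_{i,j}\sim\calN(0,(\sigma_j^2/n) I_{k_j})$ and sends the message $y_{i,j}=\phi_j(x_i)+Z_{i,j}$. The analyst forms $\sum_{i}y_{i,j}$ and feeds it to $\calA$ as the oracle's answer, then continues to the next query. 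Because independent Gaussians add, $\sum_i Z_{i,j}\sim\calN(0,\sigma_j^2 I_{k_j})$. So, conditioned on the history, each simulated answer has exactly the distribution $\calN(\phi_j(S),\sigma_j^2 I_{k_j})$ of a fresh oracle sample. By induction over the rounds, the joint distribution of $\calA$'s view, and hence its output, is identical to that under $\calG(S)$. This is what "simulates" means here.

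For privacy in the local model, we view the full transcript $(y_{i,j})_{i,j}$ and fix a user $i$. Changing $x_i$ to $x_i'$ affects only the messages $y_{i,j}$. Each $y_{i,j}$ is a Gaussian mechanism applied to $\phi_j(x_i)$, which has $L_2$ sensitivity at most $2$ since $\phi_j(x_i),\phi_j(x_i')\in\calB_2^{k_j}(1)$, with noise variance $\sigma_j^2/n$. The standard Gaussian-mechanism bound gives $\Delta^2/(2\sigma^2)=4/(2\sigma_j^2/n)=2n/\sigma_j^2=n\rho_j$-zCDP for that message. The other users' messages $y_{i',j}$ are independent of $x_i$ given the public history, so they contribute nothing to user $i$'s privacy loss. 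Applying adaptive composition of zCDP \cite{bun2016concentrated} across rounds gives user $i$'s randomizer sequence a total guarantee of $\sum_j n\rho_j\le n\rho$-zCDP, by the compositional assumption. This holds for every $i$, so the protocol is local $n\rho$-zCDP.

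The only delicate point is adaptivity: later queries $\phi_j$ depend on earlier aggregated messages, which depend on user $i$'s own earlier messages. This is handled in the standard way. The composition theorem for zCDP allows each mechanism to be chosen as a function of previous outputs, provided the noise scales (and hence the $\rho_j$) are fixed in advance. That is exactly the setting of the compositional definition. If scales are chosen adaptively, we would instead invoke a Rényi filter \cite{feldman2021individual}, which enforces $\sum_j\rho_j\le\rho$.
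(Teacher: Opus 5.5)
Your proposal is correct and takes essentially the same approach as the paper. Each user adds an independent $\mathcal{N}(0,(\sigma_j^2/n)I_{k_j})$ share so the aggregate exactly reproduces the oracle's answer, each message then costs $2n/\sigma_j^2=n\rho_j$ in zCDP, and composition gives $n\rho$; your explicit sensitivity-$2$ computation and handling of adaptivity fill in details the paper leaves implicit.
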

\begin{proof}
We provide a short, self-contained proof. 
In the distributed noise addition protocol of \cite{DworkKMMN06}, to simulate the Gaussian aggregation oracle, which adds noise $\mathcal{N}(0, \sigma^2 I_k)$, each of the $n$ users independently adds noise $\mathcal{N}(0, (\sigma^2 / n) I_k)$ to their contribution. 
Because the sum of $n$ independent $\mathcal{N}(0, \sigma^2/n)$ random variables is distributed exactly as $\mathcal{N}(0, \sigma^2 I_k)$, the aggregate of the users' messages exactly simulates drawing Gaussian noise $G(S)$. 
Thus for a query with sensitivity $2$, a single user's message with noise variance $\sigma^2/n$ achieves the desired overall noise distribution, and by composition, the protocol is locally $n\rho$-zCDP.
\end{proof}
Therefore it suffices to prove lower bounds for local zCDP protocols. Lower bounds for such protocols that allow adaptive queries were first established by Duchi and Rogers \cite{duchirogers2019lower}, building on the framework of Braverman et al.~\cite{braverman2016communication}. 
In particular, they show that constant-error $1$-sparse Gaussian mean estimation requires $\Omega(d/\rho_{\mbox{local}})$ samples, where $\rho_{\mbox{local}}$ is the RDP parameter of order 1 (or KL) of the protocol.
This result can be used to derive lower bounds for selection by reducing the 1-sparse Gaussian mean estimation to selection. 
Such a reduction will, however, incur an additional logarithmic dependence on $d$ due to the need to bound the range of score functions. 
Our lower bound for Sparse Boolean distinguishing avoids this logarithmic factor, is simpler, and is essentially self-contained. 
We defer the proof, which may be of independent interest, to Appendix~\ref{sec:lower-bound-proof}.

We also note that $\rho$-zCDP implies $(\rho+2\sqrt{\rho\log(1/\delta)},\delta)$-DP for every $\delta>0$ \cite{bun2016concentrated}. Thus, if an algorithm derives an $(\eps,\delta)$-DP guarantee with $\eps\le1$ from its compositional zCDP bound (as is done by the R\'enyi accountants used in practice), then $\rho\le\eps^2/(4\log(1/\delta))$ and \Cref{thm:main-lower-bound} gives $n=\Omega(\sqrt{d\log(1/\delta)}/\eps)$.

\section{The SVT Primitive}
\label{sec:thresh_primitive}
\label{subsec:implementation}
We start by defining the expected behavior of our proposed SVT primitive. Informally, it takes as input a sequence of linear queries and thresholds, and up to a small additive error, the oracle outputs {\sc Yes} if $f(x_1,\ldots,x_n) \geq \tau$, and outputs {\sc No} otherwise. 
Crucially, one initialized stream has a fixed privacy cost, independently of the number of {\sc No} answers. 
To support online and adaptive queries, we allow the queries to be asked one at a time after initializing an SVT stream. 
We discuss both a batch version, which may be simpler to implement (\Cref{sec:svt_batch}), and a more general version that implements the {\sc BetweenThreshold} algorithm of~\cite{BunSU17} (\Cref{sec:svt_between}).
\begin{description}
\item[{\bf Initialize:}] 
Takes as input a privacy parameter $\varepsilon$, and creates the required internal state.
\item[{\bf Query:}] 
Takes as input a function $f : \calX \rightarrow [0,1]$ and a threshold $\tau$, informally asking whether $\sum_{i=1}^n f(x_i) \geq \tau$. Outputs {\sc Yes} or {\sc No}. If it outputs {\sc Yes}, the stream is closed and does not accept any more queries.
\item[{\bf $(\alpha,\beta)$-Accuracy Property:}] 
We say a response is $\alpha$-accurate if the following hold: (a) If for a query $(f, \tau)$, $\sum_{i=1}^n f(x_i) \geq \tau + \alpha$, the output is {\sc Yes}. (b) If for a query $(f, \tau)$, $\sum_{i=1}^n f(x_i) \leq \tau - \alpha$, the output is {\sc No}. 
We say an SVT API is $(\alpha,\beta)$-accurate for a specific query limit if all its responses are $\alpha$-accurate except with probability $\beta$.
\end{description}

Using such a primitive, one can design compositionally differentially private algorithms in a natural way, where the incurred privacy cost is computed by applying a composition theorem to the privacy cost of individual API calls. 
In the upper bounds presented in this work, we use basic composition for pure-DP selection and advanced composition for approximate-DP applications~\cite{DworkRBook}.

One way to implement this primitive is to use the {\sc AboveThreshold} algorithm, which satisfies $\eps$-DP, and gives us $(\alpha, \beta)$-accuracy for any sequence of $K$ queries, with $\alpha = O(\log \frac{2(K+1)}{\beta} / \eps)$. This algorithm is originally due to \cite{DworkNRRV09}, and the formal description below is equivalent to the version in \cite{DworkRBook}.

\begin{algorithm}[!htb]
\caption{SVTInit}
\label{alg:svt_init}
\begin{algorithmic}[1]
\Require{Privacy parameter $\eps$.}
\Ensure{Create an SVT object S}
\State{$S.n \gets \Lap\left(\frac{2}{\eps}\right)$} \Comment{This internal variable is private to the object. }
\State{$S.\eps \gets \eps$}
\State{$S.Done \gets False$.}
\State{Incur privacy cost $\eps$, and return handle to S.}
\end{algorithmic}
\end{algorithm}
\begin{algorithm}[!htb]
\caption{SVTQuery}
\label{alg:svt_query}
\begin{algorithmic}[1]
\Require{Handle S, Function $f : \calX \rightarrow \R$, Threshold $\tau \in \R$. Private input $\vx$}
\Ensure{Output {\sc Yes} or {\sc No}.}
\If{not $S.Done$}
\State{Compute $f(\vx) = \sum_i \texttt{clip}_{[0,1]}(f(x_i))$.}
\State{Sample $u$ from $\Lap\left(\frac{4}{S.\eps}\right)$.}
\If{$f(\vx)+ u > \tau + S.n$}
\State{$S.Done \gets True$}
\State{Return {\sc Yes}}
\Else
\State{Return {\sc No}}
\EndIf
\Else
\State {Return ERROR.}
\EndIf
\end{algorithmic}
\end{algorithm}
The following proofs can be found in~\cite{DworkRBook}.
\begin{theorem}
\label{thm:abovethresh_privacy}
SVT satisfies $\eps$-differential privacy.
\end{theorem}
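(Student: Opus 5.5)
We plan to prove \Cref{thm:abovethresh_privacy} with the standard AboveThreshold argument from \cite{DworkRBook}. We treat one SVT object as a single mechanism whose output is a transcript of the form $(\textsc{No},\ldots,\textsc{No},\textsc{Yes})$, or all {\sc No}. For adaptively chosen queries we condition on the analyst's choices: fixing the prefix of answers fixes the next query. So it suffices to show that, for any fixed sequence of queries $(f_1,\tau_1),\ldots,(f_K,\tau_K)$ and any $k$, the probability that the first {\sc Yes} occurs at position $k$ changes by at most a factor $e^{\eps}$ between neighboring datasets $\vx,\vx'$. The all-{\sc No} outcome is handled the same way, or by adding a dummy final query. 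Since each $f_j$ is clipped to $[0,1]$ and neighbors differ in one record, each $f_j(\vx)$ has sensitivity $1$.

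Fix $\vx,\vx'$ and $k$. Let $\rho=S.n\sim\Lap(2/\eps)$ be the threshold noise and $u_j\sim\Lap(4/\eps)$ the query noises. Condition on $u_1,\ldots,u_{k-1}$ and define
\[g(\vx)=\max_{j<k}\bigl(f_j(\vx)+u_j-\tau_j\bigr).\]
The event is then $\{\rho\ge g(\vx)\}\cap\{f_k(\vx)+u_k-\tau_k>\rho\}$. We apply the change of variables $\rho'=\rho+1$ and $u_k'=u_k+2$. Because $|g(\vx)-g(\vx')|\le 1$ and $|f_k(\vx)-f_k(\vx')|\le1$:
\begin{itemize}
\item $\rho\ge g(\vx)$ implies $\rho'\ge g(\vx')$;
\item $f_k(\vx)+u_k-\tau_k>\rho$ implies $f_k(\vx')+u_k'-\tau_k>\rho'$.
\end{itemize}
So this map sends the event for $\vx$ into the event for $\vx'$. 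The density ratio is at most $e^{\eps/2}$ for the shift of $1$ in $\Lap(2/\eps)$, and at most $e^{\eps/2}$ for the shift of $2$ in $\Lap(4/\eps)$. Hence
\[\Pr[\text{event}\mid u_{<k};\vx]\le e^{\eps}\Pr[\text{event}\mid u_{<k};\vx'].\]
Integrating over $u_{<k}$ gives the bound for each fixed $k$.

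For the all-{\sc No} outcome, only $\rho$ needs shifting: $\rho\ge g(\vx)$ becomes $\rho+1\ge g(\vx')$, at cost $e^{\eps/2}$. The main subtlety is the coupling step. We must shift the threshold noise $\rho$ upward so that all earlier {\sc No} answers are preserved, and then shift $u_k$ by $2$ so that the final {\sc Yes} survives both the query change and the threshold shift. This step is also why the same random $\rho$ must be shared across the whole stream: sharing it is what makes the privacy cost independent of the number of {\sc No} answers. Finally, summing over disjoint outcomes extends the bound to arbitrary events $E$, and handling adaptivity via the conditioning above completes the proof of $\eps$-DP.
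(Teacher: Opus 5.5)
Your proof is correct and is essentially the argument the paper relies on. The paper simply defers to the standard \textsc{AboveThreshold} proof in \cite{DworkRBook}, which likewise conditions on $u_1,\ldots,u_{k-1}$, writes the outcome as $\{\rho\ge g(\vx)\}\cap\{f_k(\vx)+u_k-\tau_k>\rho\}$, and pays $e^{\eps/2}$ for each of the two shifted noise variables. The only cosmetic difference is that you use the fixed shift $(\rho,u_k)\mapsto(\rho+1,u_k+2)$ together with event inclusion, whereas Dwork and Roth use data-dependent shifts of magnitude at most $1$ and $2$ that map the event exactly onto the neighbor's event; both give the same $e^{\eps}$ bound.
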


\begin{theorem}
\label{thm:abovethresh_utility}
For any sequence of functions $f_1,\ldots, f_K:\calX\to[0,1]$, and thresholds $\tau_1,\ldots, \tau_K$, the output of SVT is $(\alpha,\beta)$-accurate for any $\beta\in(0,1)$ and $K\ge 1$, with $\alpha = \frac{8}{\eps}\log\frac{2K}{\beta}$.
\end{theorem}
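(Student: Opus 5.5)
The plan is the standard union-bound argument over the $K$ queries, applied to the two independent Laplace sources in Algorithm~\ref{alg:svt_query}. Write $\nu = S.n \sim \Lap(2/\eps)$ for the threshold noise and $u_j \sim \Lap(4/\eps)$ for the fresh noise on the $j$-th query. Query $j$ answers {\sc Yes} exactly when $f_j(\vx) + u_j > \tau_j + \nu$, where $f_j(\vx)$ is the clipped sum; since each $f_j$ maps into $[0,1]$, clipping has no effect. If $|\nu| + |u_j| < \alpha$, then $f_j(\vx) \ge \tau_j + \alpha$ forces $f_j(\vx) + u_j - \nu > \tau_j$, so the answer is {\sc Yes}. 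Likewise $f_j(\vx) \le \tau_j - \alpha$ forces $f_j(\vx) + u_j - \nu < \tau_j$, so the answer is {\sc No}. It therefore suffices to show that, with probability at least $1-\beta$, every query answered before the stream closes satisfies $|\nu| + |u_j| < \alpha$.

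I would split $\alpha$ between the two noise sources and use the Laplace tail bound $\Pr[|\Lap(b)| \ge t] = e^{-t/b}$.
\begin{itemize}
\item For the threshold noise, require $|\nu| < \alpha/4 = \frac{2}{\eps}\log\frac{2K}{\beta}$. This fails with probability at most $\beta/(2K) \le \beta/2$.
\item For each query noise, require $|u_j| < 3\alpha/4 = \frac{6}{\eps}\log\frac{2K}{\beta}$. Each such event fails with probability $(\beta/2K)^{3/2} \le \beta/(2K)$, so a union bound over the at most $K$ queries costs at most $\beta/2$.
\end{itemize}
Adding the two failure probabilities gives total failure probability at most $\beta$. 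This is slack: the constant $8$ leaves room for either split, and the simpler split $\alpha/2$ and $\alpha/2$ also works.

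The only subtlety is adaptivity, which I expect to be the main (though mild) obstacle. The queries $(f_j,\tau_j)$ may depend on earlier answers, but the noise variables are independent of how the queries are chosen. I would handle this by fixing in advance $K$ i.i.d.\ noise draws $u_1,\ldots,u_K$ together with $\nu$, and using $u_j$ for the $j$-th query. The good event is then defined purely in terms of these noise variables, so its probability bound holds regardless of the adaptive query sequence, and on that event every response is $\alpha$-accurate. Queries made after a {\sc Yes} return ERROR and impose no requirement. Together these steps give $(\alpha,\beta)$-accuracy with $\alpha = \frac{8}{\eps}\log\frac{2K}{\beta}$.
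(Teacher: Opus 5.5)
Your proposal is correct and matches the standard Dwork--Roth argument that the paper cites rather than reproducing: show that $|\nu| + \max_{j} |u_j| < \alpha$ holds with probability at least $1-\beta$ via Laplace tail bounds and a union bound over the $K$ queries, then read off both accuracy conditions. The only difference is cosmetic: you split the margin $\alpha/4$ and $3\alpha/4$ where Dwork--Roth use $\alpha/2$ each, which you correctly note also works, and your pre-sampled-noise coupling handles adaptivity cleanly.
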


The clipping step in \Cref{alg:svt_query} ensures that every contribution lies in $[0,1]$ and hence that the privacy guarantee holds for arbitrary functions $f$. 
We state the accuracy guarantee only for functions with range $[0,1]$, for which clipping has no effect.

We describe various design choices for our formalization of the SVT primitive. 
First, our choice of an SVT-based API over a dedicated selection primitive merits justification. 
Our goal is to propose a primitive that is relatively simple to implement while still capturing the power used in most commonly used DP techniques. 
While selection is a more common task, it is not clear that it allows for the implementation of SVT, so SVT is arguably a more general primitive. 
Regardless, abstracting and understanding the relative power of different natural primitives is an important goal in itself. 

Second, as defined, the \textsc{AboveThreshold} algorithm is well-defined for arbitrary sensitivity-1 queries, and indeed the privacy and utility guarantees in \Cref{thm:abovethresh_privacy} and \Cref{thm:abovethresh_utility} hold in that generality. 
Our choice of restricting the functions $f_j(\cdot)$ to be sums $\sum_i f_j(x_i)$ presents a tradeoff. 
On the one hand, we believe that implementation of the primitive in a federated setting would be significantly easier under this additivity restriction. 
On the other hand, we will argue that a large class of differentially private algorithms of interest can be implemented even with this restriction.  
We detail these two arguments in \Cref{subsec:implementation} and \Cref{sec:selection:thresh}.

We can also define a ``batch'' version of the SVT primitive, where a sequence of queries is given in advance and the index of the first {\sc Yes}, if any, is returned. 
This may be simpler to implement, as it does not need to retain internal state between API calls. 
In most settings, the batch and online versions are equivalent, except that the online version may be more communication efficient. 
One could also imagine algorithms that run multiple such SVT primitives in parallel, potentially with different parameters, and where composition could be analyzed using the concurrent composition theorems of~\cite{VadhanW21,Lyu22,VadhanZ23}. 

\textbf{Implementation.}
The implementation of an SVT primitive will be a function of the specific trust model. A few different approaches have been studied for aggregation in the federated setting and we argue that these can accommodate the SVT primitive with relative ease.

The easiest approach is perhaps to use Trusted Execution Environments (TEEs), which have been proposed as an approach to implementing private federated algorithms in~\cite{JinCMRYO24,eichner2024confidential}. 
The simplicity of the SVT primitive makes it fairly well-suited for a TEE implementation, as it may be easy to verify the correctness of an implementation. 
While TEEs can in principle be used to implement general selection or even run an arbitrary central DP algorithm, the value of a minimal primitive like SVT is that it only needs to be audited and analyzed once. 
Running arbitrary analyst code inside the enclave makes the privacy guarantees depend on unaudited code, and both side-channel vulnerabilities and privacy budget state-continuity attacks \cite{JinCMRYO24} worsen as the trusted code base grows. 
In contrast, SVT's trusted state is a single scalar and its logic spans just a few lines of code, which is why we consider it a realistic and highly secure basis for solving more general problems.

A different approach to federated aggregation is to use the two-server or Prio model~\cite{Corrigan-GibbsB17}. 
There is a large body of work on this model, and differentially private aggregation in this setup has been studied by \cite{Talwaretal24}. 
In the two-party settings, noise addition requires some care: there are multiple algorithms for collaborative noise addition, e.g., Franzese et al.~\cite{FranzeseF00JPD25}, though a simple approach is for both servers to independently add scaled noise.
Existing aggregation approaches already allow the two servers to very efficiently construct secret shares of the $f_j(\vx)$ values, as well as validate that each contributed $f_j(x_i)$ is in a bounded range. 
Distributed noise generation for $\gamma$ and $\nu_j$ and the comparison can easily be implemented using garbled circuits and other standard techniques; indeed \cite{BalleBCGKRSS25} implement these ingredients in the same model. 

An alternative approach, potentially combining trust models, is to use a TEE for the comparison part. 
This has the advantage that the input to the TEE, namely the secret shares of $f_j(\vx)$'s, does not scale with the population size $n$. Finally, one may consider an MPC approach to securely computing (shares of) the $f_j(\vx)$ values (e.g. \cite{Bonawitzetal17}). 
There are many approaches to secure noise generation in the MPC model, and the comparison operation can be implemented using general MPC. 
In all of these approaches, we exploit the fact that the computation of $f_j(\vx)$'s can be efficiently implemented using existing aggregation protocols, leaving a constant-size input to the comparison operator rather than one that grows with $n$. 
The noising and comparison operations are now a relatively simple circuit, which can be implemented using standard MPC tools. 

Finally, we remark that SVT is significantly easier to implement securely than a general selection API. 
General selection requires arbitrary score functions as input, whereas each SVT query deals with a single score function at a time. 
This allows for much simpler cryptographic implementations; for example, if garbled circuits were used to implement selection, then \textsf{Argmax} would need to be securely solved over an arbitrary number of candidates. 
Indeed, \cite{DamgardKNOP24} noted that ``the bottleneck in the secure computation protocol is the comparisons required to compute argmax.''
Each candidate set size would require a different circuit, further making verification and auditing harder. 
Moreover, the secure part of the computation invariably adds non-trivial ``constant'' overheads. 
Because SVT processes a single score function at a time and returns a boolean threshold comparison, it allows the analyst to perform the \textsf{Argmax} outside of the cryptographic boundary, resulting in significant computational savings.

\subsection{A Batch Version}
\label{sec:svt_batch}
Here, we define a batch version of the SVT primitive, where one inputs a sequence of queries $(f,\tau)$ with the expectation that the SVT outputs the index of the first query where the answer is above the threshold.
\begin{description}
\item[{\bf Input:}] A sequence of functions $f_1, \ldots, f_K : \calX \rightarrow [0,1]$; a sequence of thresholds $\tau_1, \ldots, \tau_K$; a privacy parameter $\eps$.
\item[{\bf Output:}] An output $o \in [K] \cup \{\bot\}$.
\item[{\bf $(\alpha,\beta)$-Accuracy Property:}] For $j \in [K]$, and the dataset $\vx \in \calX^n$, let  $f_j(\vx) = \sum_{i} \texttt f_j(x_i)$. We say an output $o \in [K] \cup \{\bot\}$ is $\alpha$-accurate on $\vx$ if the following hold.
If $o = \bot$, then $f_j(\vx) \leq \tau_j + \alpha$ for all $j \in [K]$. If $o \in [K]$, then $f_j(\vx) \leq \tau_j + \alpha$ for all $j < o$, and $f_o(\vx) \geq \tau_o - \alpha$. We say an oracle is $(\alpha,\beta)$-accurate if its output is $\alpha$-accurate except with probability $\beta$.
\end{description}

As in the online case, this batch version can be implemented using {\sc AboveThreshold}. 
This batch version has the advantage of not needing to retain internal state between API calls. 

\begin{algorithm}[!htb]
\caption{{\sc AboveThreshold}}
\label{alg:abovethresh}
\begin{algorithmic}[1]
\Require{Dataset $\vx\in\calX^n$. Functions $f_1, \ldots, f_K : \calX \rightarrow [0,1]$; thresholds $\tau_1, \ldots, \tau_K$; privacy parameter $\eps$.}
\Ensure{Output $o \in [K] \cup \{\bot\}$.}
\State{$\gamma \gets \Lap\left(\frac 2 \eps\right)$.}
\For{$j = 1, 2, \ldots, K$}
\State{$f_j(\vx) \gets \sum_i f_j(x_i)$.}
\State{$\nu_j \gets \Lap(\frac 4 \eps)$.}
\If{$f_j(\vx) + \nu_j \geq \tau_j + \gamma$}
\State{Return $j$}
\EndIf
\EndFor
\State{Return $\bot$}
\end{algorithmic}
\end{algorithm}

\subsection{A More Powerful Variant}
\label{sec:svt_between}

Next we present a more general primitive that may be amenable to similar implementation techniques as our simple SVT primitive.

Our more general primitive will allow the query to specify two thresholds $(\tau^L, \tau^H)$ in addition to $f$, and get back information on whether the function $f(\vx)$ is well below $\tau^L$ ($\Low$), well above $\tau^H$ ($\High$), or in between ($\Close$). The primitive can keep answering queries as long as the query answer is not $\Close$. We show an implementation of this primitive below.

\begin{algorithm}[!htb]
\caption{CompareToThresholdsInit}
\label{alg:compare_to_thresh_init}
\begin{algorithmic}[1]
\Require{Privacy parameters $\eps,\delta\in(0,1)$.}
\Ensure{Create a CompareToThresholds Object \textsf{CTT}}
\State{$\textsf{CTT}.n \gets \Lap\left(\frac{2}{\eps}\right)$} \Comment{This internal variable is private to the object. }
\State{$\textsf{CTT}.\eps = \eps$}
\State{$\textsf{CTT}.gap = \frac{12}{\eps}\left(\log \frac {10} \eps + \log \frac 1 \delta + 1\right).$}
\State{$\textsf{CTT}.Done \gets False$.}
\State{Incur privacy cost $(\eps,\delta)$, and return handle to \textsf{CTT}.} 
\end{algorithmic}
\end{algorithm}
\begin{algorithm}[!htb]
\caption{CompareToThresholdsQuery}
\label{alg:compare_to_thresh_query}
\begin{algorithmic}[1]
\Require{Handle \textsf{CTT}, Function $f : \calX \rightarrow \R$, Thresholds $\tau^L, \tau^H \in \R$, private dataset $\vx$}
\Ensure{Output $\Low$, $\High$, or $\Close$.}
\If{not $\textsf{CTT}.Done$ and $\tau^H - \tau^L \geq \textsf{CTT}.gap$}
\State{Compute $f(\vx) = \sum_i \texttt{clip}_{[0,1]}(f(x_i))$.}
\State{Sample $u$ from $\Lap\left(\frac{6}{\textsf{CTT}.\eps}\right)$.}
\If{$f(\vx)+ u > \tau^H - \textsf{CTT}.n$}
\State{Return $\High$}
\EndIf
\If{$f(\vx)+ u < \tau^L + \textsf{CTT}.n$}
\State{Return $\Low$}
\EndIf
\State{$\textsf{CTT}.Done \gets True$}
\State{Return $\Close$}
\Else
\State {Return ERROR.}
\EndIf
\end{algorithmic}
\end{algorithm}
This variant satisfies $(\eps,\delta)$-DP and the privacy bound is proved in \cite{BunSU17}.

\section{Selection in the SVT Model}
\label{sec:selection:thresh}
In this section we give differentially private algorithms for the selection problem in the SVT model. 
We first describe a simple binary search over the range of the scores (\Cref{alg:gen:api}), which uses $O(d\log n)$ SVT queries but loses a factor of $\log n$ in the error. 
We then show that a descending linear scan achieves the optimal error using a single SVT stream but $O(dn)$ queries (\Cref{alg:gen:api:linear}), and that allocating the privacy budget geometrically across the rounds of the binary search removes the $\log n$ factor from the error while keeping $O(d\log n)$ queries (\Cref{alg:geometric_bs}). 
Finally, in \Cref{sec:near-linear:alg} we combine sampling with a filtering pass to obtain the same error using $O(d\log^\ast n+\log^3n)$ queries in expectation (\Cref{alg:filter_fallback}). 
All algorithms access the data only through SVT streams. 
Thus their privacy follows from \Cref{thm:abovethresh_privacy} and basic composition. 
We analyze their accuracy on the event that all SVT answers are accurate.
Throughout this section every score is a sum of per-user values in $[0,1]$, as required by the SVT primitive. Scores with per-user values in $[0,\Delta]$ can be divided by $\Delta$, which multiplies the error bounds by $\Delta$.

\begin{algorithm}[!htb]
\caption{Private binary search}
\label{alg:gen:api}
\begin{algorithmic}[1]
\Require{Oracle $\mathbb{O}$, initial bounds $[\ell,u]=[0,n]$ for non-negative scores, accuracy $\alpha\ge 1$; preserve the last positive index across recursive calls.}
\Ensure{Index $j$ corresponding to approximately highest noisy score $s(v_j)$}
\State{$m\gets\frac{u+\ell}{2}$}
\State{Query oracle $\mathbb{O}$ for index from $1,2,\ldots,d$ with threshold $m$.}
\State{If returned index $j < d$, query oracle $\mathbb{O}$ for index from $j+1,\ldots,d$ with threshold $m$.}
\If{exactly one index in $[d]$ returned} 
\State{Return that index} \Comment{Only one positive answer; its score is within $2\alpha$ of the maximum.}
\ElsIf{two indices returned} 
\State{$\ell\gets m-\alpha$} \Comment{Multiple scores $\geq m$}
\Else
\State{$u\gets m+\alpha$} \Comment{All noisy scores $\leq m$}
\EndIf
\If{$u-\ell > 3\alpha$}
\State{Recurse to find index for upper and lower limits $[\ell,u]$}
\Else
\State{\textbf{Return} the last returned index, or arbitrary index if no indices returned.}
\EndIf
\end{algorithmic}
\end{algorithm}

The algorithm appears in full in \Cref{alg:gen:api} and can be informally interpreted as a binary search over the set of possible scores for the index with the maximum score. 
At a high level, we design a private selection procedure that repeatedly performs noisy threshold comparisons to narrow down the range of possible values for the maximum score, and isolate a high-scoring coordinate. 
Rather than explicitly evaluating all scores with independent noise, as in the na\"{i}ve Laplace mechanism, the algorithm adaptively compares against a threshold using the SVT primitive.
This approach allows us to limit privacy loss to the relatively small number of times the algorithm observes that a score exceeds a threshold, while ignoring the vast majority of coordinates whose scores are clearly suboptimal.

Concretely, the algorithm maintains a candidate interval for the maximum score and performs a logarithmic number of rounds. 
In each round, it picks the mid point of the current interval where the maximum lies and scans the coordinates in their given order using the SVT primitive.
Based on whether zero, one, or multiple coordinates cross the threshold, the algorithm refines its estimate of the maximum and continues. 

If the current width is $W$, the next width is at most $W/2+\alpha$. For $[\ell,u]=[0,n]$ and $\alpha\ge1$, at most $R=\lceil\log_2(2n)\rceil+1$ rounds suffice, each using at most two fresh SVT streams and at most $d$ queries. Allocating $\eps/(2R)$ to each stream makes the algorithm compositionally $\eps$-DP.

Assuming that the SVT oracle is $(\alpha, \beta)$-accurate, all answers in the at most $2R$ streams are correct up to an additive $\alpha$ except with probability $2R\beta$. 
There are three cases. When all $d$ queries are below the current guess $m$, it follows that the true maximum is below $m+\alpha$, so our recursive step is correct. 
When at least two queries are above the current guess $m$, it follows that the maximum is above $m-\alpha$, so once again the maximum score is in $[\ell,u]$. Finally, if there is exactly one score above $m$, the returned score is within $2\alpha$ of the maximum, so we can stop. 
The recursion stops when $u-\ell$ is smaller than $3\alpha$, and outputs the last positive index. Since the maximum score was guaranteed to be in $[\ell,u]$ and the last positive score was at least $\ell$, it follows that the returned index has score within $O(\alpha)$ of the maximum. 
Since our SVT primitive gives $(O(\frac{\log(d/\beta)}{\eps}), \beta)$-accurate answers when called with privacy parameter $\eps$, we get the following bound.
\begin{theorem}
For $d\ge2$, $n\ge1$, $0<\eps\le1$, and $\beta\in(0,1)$, \Cref{alg:gen:api}, with privacy budget $\eps/(2R)$ for each SVT stream, is $\eps$-DP and has additive error $O(\log(2n)\log(2d\log(2n)/\beta)/\eps)$ with probability at least $1-\beta$, using $O(d\log(2n))$ queries.
\end{theorem}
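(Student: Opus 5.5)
The proof has three parts: privacy, query count, and accuracy. Privacy and the query count are bookkeeping; the accuracy argument carries the content.

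\emph{Privacy.} The algorithm touches the data only through SVT streams. In each round of the binary search it opens at most two fresh streams, each initialized with budget $\eps/(2R)$, where $R=\lceil\log_2(2n)\rceil+1$. By \Cref{thm:abovethresh_privacy}, each stream is $\eps/(2R)$-DP. The choice of which streams to open, and with which thresholds, depends only on previous outputs. Basic (adaptive) composition over at most $2R$ streams therefore gives $\eps$-DP.

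To justify the bound on $R$, I would show that the interval width satisfies $W_{t+1}\le W_t/2+\alpha$. Indeed, after setting $\ell\gets m-\alpha$ or $u\gets m+\alpha$, the new width is $(u-\ell)/2+\alpha$. Unrolling from $W_0=n$ gives $W_t\le n2^{-t}+2\alpha$. So after $\lceil\log_2(2n)\rceil$ rounds, $W_t\le 1/2+2\alpha<3\alpha$ because $\alpha\ge1$, and the stopping rule fires. Each round issues at most $d$ queries in total across its two streams, since the second stream scans only indices $j+1,\ldots,d$. This gives $O(d\log(2n))$ queries.

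\emph{Accuracy.} Let $K=d$ queries per stream. By \Cref{thm:abovethresh_utility}, with per-stream failure probability $\beta/(2R)$, every stream is $\alpha$-accurate with
\[
\alpha=\frac{16R}{\eps}\log\frac{4dR}{\beta}=O\!\left(\frac{\log(2n)\log(2d\log(2n)/\beta)}{\eps}\right).
\]
A union bound over the at most $2R$ streams shows that all answers are $\alpha$-accurate with probability at least $1-\beta$; we condition on this event. (If this $\alpha$ is below $1$, replace it by $\max(\alpha,1)$, which changes nothing asymptotically.) Let $s^*=\max_k s_k$.

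I would maintain the invariant $\ell-\alpha\le s^*\le u+\alpha$, and additionally that any previously returned index $j$ has $s_j\ge \ell_{\rm old}\ldots$. The precise form of this second invariant is the delicate point; I state it as part of the case analysis below. Initially $0\le s^*\le n$. Consider one round with midpoint $m$.

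\begin{itemize}
\item \textbf{No index returned.} Accuracy gives every $s_k\le m+\alpha$. Hence $s^*\le m+\alpha=u_{\rm new}$.
\item \textbf{Two indices returned.} Accuracy gives two scores $\ge m-\alpha=\ell_{\rm new}$. Hence $s^*\ge \ell_{\rm new}$, and the last returned index has score at least $\ell_{\rm new}$.
\item \textbf{Exactly one index $j$ returned.} Then $s_j\ge m-\alpha$, and every other score is $\le m+\alpha$, since the second stream answered No on all indices after $j$ and the first stream answered No on all indices before $j$. So either $j$ is the maximizer, or $s^*\le m+\alpha\le s_j+2\alpha$. Either way the error is at most $2\alpha$.
\end{itemize}

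\emph{Termination.} At termination $u-\ell\le3\alpha$ and $s^*\le u$. The last positive index $j$, fixed the most recent time $\ell$ was raised (to $m-\alpha\ge$ its score lower bound), satisfies $s_j\ge \ell'$ for an $\ell'$ that only increased afterwards through that same assignment. Hence $s_j\ge \ell$ at the final step, and
\[
s^*-s_j\le u-\ell\le 3\alpha.
\]

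\emph{Main obstacle.} The delicate case is when no index was ever returned. Then $\ell$ stayed $0$, so $u\le3\alpha$ and any index has score $\ge0\ge s^*-3\alpha$. Bookkeeping the "last returned index" invariant, including the $\pm\alpha$ slack from accuracy, is the only place needing care. All total errors are $O(\alpha)$, which gives the stated bound.
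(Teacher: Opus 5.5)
Your proposal is correct and follows essentially the same route as the paper. The deterministic width recursion $W\mapsto W/2+\alpha$ bounds the number of rounds, streams and queries independently of the data; basic composition over at most $2R$ streams of budget $\eps/(2R)$ gives privacy; and the three-case analysis with the stopping rule $u-\ell\le 3\alpha$ gives error $O(\alpha)$ for $\alpha=\frac{16R}{\eps}\log\frac{4dR}{\beta}$.

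The bookkeeping you flag as delicate is simpler than you make it. Your case analysis already maintains $\ell\le s^*\le u$ exactly, so the $\pm\alpha$ slack in your stated invariant is unnecessary. Also, the last positive index changes only in rounds where $\ell$ is reassigned to $m-\alpha$, so $s_j\ge\ell$ holds throughout, including the corner case where that assignment lowers $\ell$.
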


\begin{corollary}
Given an oracle $\mathbb{O}$ that is $(\alpha,\beta)$-accurate for $\alpha=\O{\frac{\log d}{\eps}}$ and $\beta=\frac{1}{\poly(d)}$, there exists an $\eps$-DP algorithm that performs approximate maximization for decomposable score functions. 
\end{corollary}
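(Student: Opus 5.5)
The corollary follows by instantiating the analysis of \Cref{alg:gen:api} with the given oracle. The plan is to account for privacy via \Cref{thm:abovethresh_privacy} and composition, and accuracy via a union bound over streams plus an invariant on the search interval. The invariant to maintain is $\max_k s_k(\vx)\in[\ell,u]$.

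\textbf{Privacy.} The algorithm touches the data only through SVT streams. There are at most $R=\lceil\log_2(2n)\rceil+1$ rounds, since the width $W$ shrinks to at most $W/2+\alpha$ and stops once it is at most $3\alpha$. Each round opens at most two streams. Running each stream with parameter $\eps/(2R)$ and applying \Cref{thm:abovethresh_privacy} with basic composition gives $\eps$-DP.

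\textbf{Accuracy.} Each stream makes at most $d$ queries. With per-stream budget $\eps/(2R)$, \Cref{thm:abovethresh_utility} makes each stream $(\alpha',\beta)$-accurate with $\alpha'=O(R\log(d/\beta)/\eps)$. A union bound over the $2R$ streams gives failure probability $2R\beta$, which is $1/\poly(d)$ once $\beta=1/\poly(d)$.

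On the good event, I would prove the invariant by induction over the three cases.
\begin{itemize}
\item No index returned: every score is below $m+\alpha'$, so setting $u\gets m+\alpha'$ preserves the invariant.
\item Two indices returned: some score is at least $m-\alpha'$, so setting $\ell\gets m-\alpha'$ preserves it.
\item Exactly one index $j$ returned: $s_j\ge m-\alpha'$ and all other scores are at most $m+\alpha'$, so $j$ is within $2\alpha'$ of the maximum.
\end{itemize}
At termination $u-\ell\le 3\alpha'$. The last positive index has score at least the $\ell$ at the time it fired, and $\ell$ only increases, so its score is at least $\ell-\alpha'$. Its error is therefore $O(\alpha')$.

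\textbf{Main obstacle.} The issue is reconciling the factor $R$ with the claimed $\alpha=O(\log d/\eps)$. The corollary treats the oracle's accuracy $\alpha$ as a black-box parameter, so the clean reading is that the final error is $O(\alpha)$ per stream budget. I would state it as: if each stream of budget $\eps/(2R)$ is $(\alpha,\beta)$-accurate, then the output has error at most $3\alpha$. The $\log n$ loss from budget splitting is absorbed into $\alpha$ and removed later by the geometric allocation in \Cref{alg:geometric_bs}.

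The edge case where no index is ever returned needs separate handling. In that case $u$ only decreases while the maximum stays $\ge 0=\ell$, so the final interval has width $O(\alpha)$. Every index then lies within $O(\alpha)$ of the maximum, and an arbitrary output suffices.
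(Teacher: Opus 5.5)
Your proposal is correct and follows essentially the same route as the paper. The paper derives the corollary from the analysis of \Cref{alg:gen:api}: basic composition over at most $2R$ streams of budget $\eps/(2R)$, a union bound, the three-case invariant on $[\ell,u]$, and $\alpha$ read as the per-stream accuracy, so the $\log n$ loss is acknowledged and later removed by \Cref{alg:geometric_bs}.

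Two small fixes are needed. First, the last positive index has score at least the \emph{final} $\ell$ because $\ell$ changes only in rounds that also record a new positive index; the fact that $\ell$ only increases does not by itself give this. Second, $2R\beta$ is $1/\poly(d)$ only if you set the per-stream failure probability to $\beta/(2R)$ or assume $\log n\le\poly(d)$.
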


Unfortunately, the additive error of this binary search approach is $\O{\frac{\log d\log n}{\eps}}$ rather than the optimal error of $\O{\frac{\log d}{\eps}}$ achievable in the central setting. 
This is because the binary search procedure may need to split the privacy budget small enough to invoke the SVT mechanism $O(\log n)$ times. 
Thus one may ask whether it is possible to achieve the optimal additive error of $\O{\frac{\log d}{\eps}}$. 

\subsection{Linear Scan}
We first observe that a simple linear scan can achieve an optimal additive error. 
Namely, the protocol simply starts from the maximum possible score of an index and repeatedly calls the SVT API. 
For each query where the SVT API does not return a coordinate, the protocol decreases the guess for the maximum possible score of an index, repeating until the guess is within the typical bounds for Laplacian noise added to the score of the maximum coordinate. 
Since this approach will report only a single coordinate from the SVT API, it spends less privacy budget than the binary search approach. 
Assuming the score for each user is normalized to $[0,1]$, we have the following guarantees:
\begin{restatable}{theorem}{thmselectionlinearscan}
For $n\ge1$, $d\ge2$, and $\beta\in(0,1)$, \Cref{alg:gen:api:linear} with $[\ell,u]=[0,n]$ and a single $\eps$-DP SVT stream is an $\eps$-DP selection algorithm with additive error $O(1+\log(d(n+1)/\beta)/\eps)$ with probability at least $1-\beta$, using $O(d(n+1))$ queries.
\end{restatable}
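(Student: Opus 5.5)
The algorithm is not shown, so I will fix a concrete version of \Cref{alg:gen:api:linear} and analyze it. Open a single SVT stream with budget $\eps$. Set thresholds $\tau_t = n - t$ for $t = 0,1,\ldots,n$. For each $t$ in decreasing order of threshold, scan the coordinates $j=1,\ldots,d$ and issue the query $(s_j,\tau_t)$, where $s_j(\vx)=\sum_i f_j(x_i)$. The first time the stream answers {\sc Yes}, say on $(j,\tau_t)$, output $j$. If no {\sc Yes} ever occurs, output an arbitrary index.

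The total number of queries is at most $K=d(n+1)$, which gives the query bound. Privacy is immediate: all data access goes through one SVT stream initialized with $\eps$, so \Cref{thm:abovethresh_privacy} gives $\eps$-DP. This holds regardless of how adaptively the queries are formed, since the query sequence is fixed in advance.

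For accuracy, invoke \Cref{thm:abovethresh_utility} with $K=d(n+1)$ queries. Except with probability $\beta$, every answer is $\alpha$-accurate with $\alpha=\frac{8}{\eps}\log\frac{2d(n+1)}{\beta}$. On this event, let $s^*=\max_k s_k(\vx)\in[0,n]$. There are two cases.

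\textbf{Case 1: $s^*\ge \alpha$.} Let $t^*$ be the smallest $t$ with $\tau_t\le s^*-\alpha$. Such a $t$ exists because $\tau_n=0\le s^*-\alpha$. By minimality and the unit step size, $\tau_{t^*}> s^*-\alpha-1$. At round $t^*$, the query on the maximizing coordinate satisfies condition (a), so it must return {\sc Yes} unless an earlier query already did. Hence some {\sc Yes} occurs at a threshold $\tau_t\ge\tau_{t^*}>s^*-\alpha-1$. By condition (b), the answered coordinate $j$ cannot have $s_j\le\tau_t-\alpha$. Therefore $s_j>\tau_t-\alpha> s^*-2\alpha-1$.

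\textbf{Case 2: $s^*<\alpha$.} Every output, including the arbitrary default, has score $\ge 0 > s^*-\alpha$.

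In both cases the additive error is at most $2\alpha+1=O\!\left(1+\frac{\log(d(n+1)/\beta)}{\eps}\right)$.

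The main subtlety I expect is the boundary handling. This means making sure a {\sc Yes} is forced before the thresholds run out, or that the fallback is harmless when scores are small. It also means correctly accounting for the $+1$ discretization loss from the integer threshold grid. Both are covered by the two cases above, using the fact that $\tau_n=0$ and that scores are nonnegative.
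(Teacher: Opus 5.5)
Your proof is correct and follows the same approach as the paper: privacy comes from the single $\eps$-DP SVT stream, the query count $d(n+1)$ comes from the $n+1$ integer threshold levels, and accuracy comes from the decreasing thresholds together with \Cref{thm:abovethresh_utility} at $K=d(n+1)$. Your two-case analysis supplies the details the paper's short proof leaves implicit: a {\sc Yes} is forced once $\tau_t\le s^\ast-\alpha$, and the fallback is harmless when $s^\ast<\alpha$, which gives error at most $2\alpha+1$.
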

\begin{proof}
Privacy follows from the fact that we run exactly one SVT oracle, and stop once we get one score above the threshold. The accuracy bound follows from the fact that we use decreasing thresholds and the SVT oracle is $\alpha$-accurate. 
Finally, note that since the score for each user is in $[0,1]$, then the score for each index is contained within the interval $[0,n]$. 
Hence, there are $n+1$ threshold levels when scanning the integer bounds from $n$ through $0$, each across $d$ coordinates, for a total of $\O{nd}$ queries. 
\end{proof}

\begin{algorithm}[!htb]
\caption{Private linear scan}
\label{alg:gen:api:linear}
\begin{algorithmic}[1]
\Require{Oracle $\mathbb{O}$, upper limit $u$, lower limit $\ell$ on score}
\Ensure{Index $j\in[d]$ with approximately maximum true score}
\For{$m \gets u, u-1, u-2, \ldots, \ell$}
\State{Query the same initialized SVT stream $\mathbb{O}$ at threshold $m$ for all indices  $j\in[d]$}
\If{an index $j\in[d]$ is returned}
\State{\textbf{Return} $j$}
\Comment{End procedure}
\EndIf
\EndFor
\State{\textbf{Return} an arbitrary index in $[d]$}
\end{algorithmic}
\end{algorithm}

\subsection{Improved Binary Search}

The downside is that to achieve the additive error $O(1+\log(d(n+1)/\beta)/\eps)$, the linear scan requires using $\O{nd}$ queries.
By comparison, binary search requires $\O{d\log n}$ queries but has additive error $\O{{\log d\log n}/{\eps}}$. 
We now show that allocating the privacy budget geometrically across the rounds of the binary search removes the $\log n$ factor from the error while keeping $O(d\log n)$ queries. Indeed, the error introduced in a round of the binary search is halved in each of the subsequent rounds. Therefore the early rounds can use a much smaller privacy budget than the last ones. Specifically, in \Cref{alg:geometric_bs} round $t$ uses budget proportional to $\gamma^{T-t}$ for $\gamma=3/4$. Thus its contribution to the final error is proportional to $(2\gamma)^{-(T-t)}$ and the total error is dominated by the last round. We state the algorithm for an arbitrary set of candidates $P$, an upper bound $u$ on the maximum score in $P$ and a witness index whose score is at least a given lower bound $\ell$, since we use it as a subroutine in \Cref{sec:near-linear:alg} and in \Cref{sec:applications}. 

\begin{algorithm}[!htb]
\caption{Private geometric binary search (\textproc{GeometricBinarySearch})}
\label{alg:geometric_bs}
\begin{algorithmic}[1]
\Require{Oracle $\mathbb{O}$, candidate set $P\subseteq[d]$, interval $[\ell,u]$ with $\max_{j\in P}s_j(D)\le u$, witness $j_0\in[d]$ with $\ell\le s_{j_0}(D)$, number of rounds $T\ge1$, privacy budget $\eps>0$, failure probability $\beta\in(0,1)$}
\Ensure{Index $j^\star$ and an interval $[\ell,u]$ such that $s_{j^\star}(D)\ge \ell$ and $\max_{j\in P}s_j(D)\le u$}
\State{$\gamma\gets 3/4$, $Z\gets \frac{1-\gamma^T}{1-\gamma}$, $j^\star\gets j_0$}
\For{$t=1,\dots,T$}
\State{$\eps_t\gets \frac{\eps}{Z}\gamma^{T-t}$}
\State{$\alpha_t\gets \frac{32}{\eps_t}\log\left(\frac{2|P|T}{\beta}\right)$}
\State{$m\gets \frac{\ell+u}{2}$}
\State{Query oracle $\mathbb{O}$ on all candidates in $P$ with threshold $m$, budget $\eps_t$, and failure probability $\beta/T$}
\If{no index returned}
\State{$u\gets \min\{u,m+\alpha_t\}$} \Comment{All noisy scores $\lesssim m$}
\Else{(oracle returns index $j$)}
\If{$m-\alpha_t>\ell$}
\State{$\ell\gets m-\alpha_t$}
\State{$j^\star\gets j$}
\EndIf
\EndIf
\EndFor
\State{\textbf{Return} $(j^\star,\ell,u)$}
\end{algorithmic}
\end{algorithm}

We next analyze \Cref{alg:geometric_bs}. 
Let $\calE$ denote the event that every oracle call satisfies its accuracy guarantee. 
Each of the $T$ calls fails with probability at most $\frac{\beta}{T}$, and thus $\PPr{\calE}\ge 1-\beta$ by a union bound.
We thus condition on the event $\calE$ for the remainder of the utility analysis. 
\begin{lemma}
For all $t\in[T+1]$, we have $s_{j^\star}(D)\ge \ell_t$ and $S^\star\le u_t$, where $S^\star=\max_{j\in P} s_j(D)$.
\end{lemma}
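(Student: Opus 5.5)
We argue by induction on $t$, where $\ell_t,u_t$ denote the values of $\ell,u$ at the start of round $t$ (so $\ell_{T+1},u_{T+1}$ are the returned values), and $j^\star$ is likewise its value at the start of round $t$. The base case $t=1$ is exactly the input contract of \Cref{alg:geometric_bs}: $j^\star=j_0$ satisfies $s_{j_0}(D)\ge \ell=\ell_1$, and $\max_{j\in P}s_j(D)\le u=u_1$. For the inductive step we assume both invariants at the start of round $t$ and check them after the round, working throughout on the event $\calE$. On $\calE$, the oracle call in round $t$, a single SVT stream over the candidates in $P$ at threshold $m$ with budget $\eps_t$, is $\alpha_t$-accurate in the batch sense of \Cref{sec:svt_batch}. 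The choice $\alpha_t=\frac{32}{\eps_t}\log\frac{2|P|T}{\beta}$ dominates the accuracy $\frac{8}{\eps_t}\log\frac{2|P|}{\beta/T}$ from \Cref{thm:abovethresh_utility} with $K=|P|$ queries and failure probability $\beta/T$.

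\emph{Case 1: no index is returned.} By the batch accuracy property with $o=\bot$, every $j\in P$ satisfies $s_j(D)\le m+\alpha_t$, so $S^\star\le m+\alpha_t$. Together with the inductive hypothesis $S^\star\le u_t$, this gives $S^\star\le\min\{u_t,m+\alpha_t\}=u_{t+1}$. Neither $\ell$ nor $j^\star$ changes, so $s_{j^\star}(D)\ge\ell_{t+1}$ is inherited.

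\emph{Case 2: an index $j$ is returned.} Here $u$ is unchanged, so the upper invariant is inherited. If $m-\alpha_t\le\ell_t$, nothing changes and both invariants are inherited. Otherwise we set $\ell_{t+1}=m-\alpha_t$ and $j^\star=j$. The batch accuracy property with $o=j$ gives $s_j(D)\ge m-\alpha_t=\ell_{t+1}$, which is the required lower invariant. The upper bound is unaffected because it was never tightened in this branch.

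The argument is essentially bookkeeping, so the only real obstacle is making the accuracy semantics line up. I would state explicitly that the returned $j$ always lies in $P$, so the ``$f_o(\vx)\ge\tau_o-\alpha$'' clause applies to $s_j$. I would also note that $\calE$ is taken to include the guarantee for each of the $T$ batch calls, which is how the union bound preceding the lemma was set up. Finally, I would remark that the conditional update $\ell\gets m-\alpha_t$ only when it increases $\ell$ is exactly what keeps $j^\star$ and $\ell$ synchronized: the lower invariant is tied to the particular witness $j^\star$, not to the maximum.
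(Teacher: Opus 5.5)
Your proof is correct and follows the paper's argument: induction on $t$, base case from the input contract, and an inductive step split on whether the oracle returns an index, using the accuracy guarantee on $\calE$ in each branch. Your extra check that $\alpha_t$ dominates the accuracy bound of \Cref{thm:abovethresh_utility} with $K=|P|$ and failure probability $\beta/T$ is a detail the paper leaves implicit.
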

\begin{proof}
We prove the claim by induction on $t$. 
For $t=1$, we have $s_{j^\star}(D)=s_{j_0}(D)\ge\ell=\ell_1$ and $S^\star\le u=u_1$ by the assumptions on the input.
Suppose the inductive hypothesis holds at step $t$ and let $m_t=(\ell_t+u_t)/2$. 
If no index is returned, then the oracle accuracy gives $S^\star\le m_t+\alpha_t$ and thus 
\[S^\star\le \min\{u_t,m_t+\alpha_t\}=u_{t+1}.\]
We also have $\ell_{t+1}=\ell_t$.

On the other hand, if index $j$ is returned, then $s_j(D)\ge m_t-\alpha_t$. 
If $m_t-\alpha_t>\ell_t$, the algorithm sets $\ell_{t+1}=m_t-\alpha_t$ and $j^\star=j$, so $s_{j^\star}(D)\ge \ell_{t+1}$. 
Otherwise, we again have $\ell_{t+1}=\ell_t$. 
Moreover, in both cases, we have $u_{t+1}=u_t$.
Therefore, the inductive step is complete. 
\end{proof}
We next bound the size of the search space, showing that it geometrically decreases at each step $t\in[T]$. 
\begin{lemma}
\label{lem:width:bound}
Let $W_t=u_t-\ell_t$. 
Then for all $t\le T$, we have $W_{t+1}\le \frac{W_t}{2}+\alpha_t$. 
\end{lemma}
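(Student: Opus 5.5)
The bound $W_{t+1}\le W_t/2+\alpha_t$ comes from a direct case analysis on one round of \Cref{alg:geometric_bs}, following the three branches of the update rule. Fix $t\le T$ and write $m_t=(\ell_t+u_t)/2$, so that $u_t-m_t=m_t-\ell_t=W_t/2$. No accuracy event is needed here: the statement only concerns how the endpoints are updated, and the bound should hold deterministically.

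\textbf{Case 1: no index is returned.} Then $\ell_{t+1}=\ell_t$ and $u_{t+1}=\min\{u_t,m_t+\alpha_t\}\le m_t+\alpha_t$. Hence
\[W_{t+1}\le m_t+\alpha_t-\ell_t=\frac{W_t}{2}+\alpha_t.\]

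\textbf{Case 2: an index $j$ is returned and $m_t-\alpha_t>\ell_t$.} Then $\ell_{t+1}=m_t-\alpha_t$ and $u_{t+1}=u_t$. This gives
\[W_{t+1}=u_t-m_t+\alpha_t=\frac{W_t}{2}+\alpha_t.\]

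\textbf{Case 3: an index is returned but $m_t-\alpha_t\le\ell_t$.} Neither endpoint changes, so $W_{t+1}=W_t$. The inequality $m_t-\alpha_t\le \ell_t$ rearranges to $W_t/2\le\alpha_t$. Therefore
\[W_{t+1}=W_t=\frac{W_t}{2}+\frac{W_t}{2}\le\frac{W_t}{2}+\alpha_t.\]

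Case 3 is the only step needing any thought, because there the interval does not shrink. The point is that its triggering condition is exactly what makes the claimed bound hold anyway.

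One might worry that $W_t$ could be negative, since $u$ is capped by a minimum. This does not affect the argument: every case is pure algebra on the update rules and never uses the sign of $W_t$.
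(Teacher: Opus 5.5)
Your proof is correct and follows the paper's argument: the same case split on whether an index is returned, using $m_t-\ell_t=u_t-m_t=W_t/2$. The only difference is cosmetic. The paper merges your Cases 2 and 3 by noting that $\ell_{t+1}\ge m_t-\alpha_t$ whenever an index is returned, which is exactly the inequality your Case 3 uses.
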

\begin{proof}
Since $m_t=\frac{\ell_t+u_t}{2}$, then we have
\[m_t-\ell_t=u_t-m_t=\frac{W_t}{2}.\]
Now, if no index is returned, then we have
\[W_{t+1}=u_{t+1}-\ell_t \le (m_t+\alpha_t)-\ell_t = \frac{W_t}{2}+\alpha_t.\]
Otherwise, if an index is returned, we have
\[W_{t+1}=u_t-\ell_{t+1} \le u_t-(m_t-\alpha_t) = \frac{W_t}{2}+\alpha_t.\]
\end{proof}

The following theorem summarizes the guarantees of \Cref{alg:geometric_bs}.
\begin{theorem}
\label{thm:gbs}
\Cref{alg:geometric_bs} is $\eps$-differentially private and makes at most $|P|T$ SVT queries. Let $W=u-\ell$. With probability at least $1-\beta$, it returns $(j^\star,\ell',u')$ such that $s_{j^\star}(D)\ge\ell'$, $\max_{j\in P}s_j(D)\le u'$, and
\[u'-\ell'\le \frac{W}{2^T}+\frac{384}{\eps}\log\left(\frac{2|P|T}{\beta}\right).\]
In particular, for $P=[d]$, $[\ell,u]=[0,n]$, $j_0=1$ and $T=\max\{1,\lceil\log_2 n\rceil\}$, with probability at least $1-\beta$ the returned index satisfies
\[S^\star-s_{j^\star}(D)\le 1+\frac{384}{\eps}\log\left(\frac{2d(1+\log_2 n)}{\beta}\right).\]
\end{theorem}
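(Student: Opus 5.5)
Privacy and query count come first, since they are immediate. Each round $t$ opens one fresh SVT stream with budget $\eps_t=\frac{\eps}{Z}\gamma^{T-t}$. By \Cref{thm:abovethresh_privacy} that stream is $\eps_t$-DP, and by basic composition the total cost is $\sum_{t=1}^T\eps_t=\frac{\eps}{Z}\sum_{s=0}^{T-1}\gamma^s=\eps$. Each round issues at most $|P|$ queries, which gives at most $|P|T$ in total. For accuracy I will condition on the event $\calE$. By \Cref{thm:abovethresh_utility} with $K=|P|$ and failure probability $\beta/T$, each round's stream is $\frac{8}{\eps_t}\log\frac{2|P|T}{\beta}$-accurate, which is at most $\alpha_t$. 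A union bound gives $\Pr[\calE]\ge1-\beta$. On $\calE$, the invariant lemma already gives $s_{j^\star}(D)\ge\ell'$ and $\max_{j\in P}s_j(D)\le u'$.

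The main step is unrolling \Cref{lem:width:bound}. From $W_{t+1}\le W_t/2+\alpha_t$ we get
\[W_{T+1}\le \frac{W}{2^T}+\sum_{t=1}^T 2^{-(T-t)}\alpha_t .\]
Write $L=\log\frac{2|P|T}{\beta}$. Then $\alpha_t=\frac{32ZL}{\eps}\gamma^{-(T-t)}$, so the sum becomes
\[\frac{32ZL}{\eps}\sum_{s=0}^{T-1}\Bigl(\frac{1}{2\gamma}\Bigr)^s .\]
Since $\frac{1}{2\gamma}=2/3$, this geometric sum is at most $3$. Also $Z\le\frac{1}{1-\gamma}=4$. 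The total is therefore at most $\frac{32\cdot4\cdot3\,L}{\eps}=\frac{384L}{\eps}$, which is exactly the claimed constant. The only delicate point is this choice of $\gamma$: we need $\gamma<1$ so that $Z$ stays bounded, and $2\gamma>1$ so that the error series converges. The value $\gamma=3/4$ meets both requirements.

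For the specialization, take $P=[d]$, $[\ell,u]=[0,n]$ and $j_0=1$. The witness condition holds because scores are nonnegative, and the upper bound holds because each score is at most $n$. With $T=\max\{1,\lceil\log_2 n\rceil\}$ we have $2^T\ge n$, so $W/2^T\le1$. Then
\[S^\star-s_{j^\star}(D)\le u'-\ell'\le 1+\frac{384}{\eps}L,\]
with $L=\log\frac{2dT}{\beta}$. Finally, $T\le 1+\log_2 n$ (this also covers $n=1$, where $T=1$), which gives the stated bound.
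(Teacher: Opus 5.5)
Your proposal is correct and follows essentially the same route as the paper's proof. Privacy comes from basic composition with $\sum_t\eps_t=\eps$. You then unroll \Cref{lem:width:bound} and use $Z\le 4$ and $\sum_k(2\gamma)^{-k}\le 3$ to obtain the constant $384$, and finish the specialization with $n\le 2^T$ and $T\le 1+\log_2 n$. Your explicit check that the per-round accuracy $\frac{8}{\eps_t}\log\frac{2|P|T}{\beta}$ from \Cref{thm:abovethresh_utility} is at most $\alpha_t$ fills in a detail the paper leaves implicit.
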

\begin{proof}
\Cref{alg:geometric_bs} accesses the data only through $T$ oracle calls, and the call in round $t$ has privacy budget $\eps_t$ and answers at most $|P|$ queries. By the choice of $Z$, $\sum_{t\in[T]}\eps_t=\eps$, and therefore privacy follows from basic composition.
Conditioned on the event $\calE$, we have
\[S^\star - s_{j^\star}(D) \le u_{T+1} - \ell_{T+1} = W_{T+1}.\]
By \Cref{lem:width:bound}, we have
\[W_{T+1} \le \frac{W_1}{2^T} + \sum_{t=1}^T \frac{\alpha_t}{2^{T-t}}.\]
Since $\alpha_t=\frac{32}{\eps_t}\log\left(\frac{2|P|T}{\beta}\right)=\frac{32Z}{\eps}\gamma^{-(T-t)}\log\left(\frac{2|P|T}{\beta}\right)$, we have
\[\sum_{t=1}^T \frac{\alpha_t}{2^{T-t}} = \frac{32Z}{\eps}\log\left(\frac{2|P|T}{\beta}\right)\sum_{k=0}^{T-1}(2\gamma)^{-k}\le \frac{384}{\eps}\log\left(\frac{2|P|T}{\beta}\right),\]
where we used that $Z\le\frac{1}{1-\gamma}=4$ and $\sum_{k\ge0}(2\gamma)^{-k}=3$. This gives the first claim. For the second claim we note that $W_1=n\le 2^T$ and $T\le 1+\log_2 n$.
\end{proof}

\begin{corollary}
If $d\ge2$, $0<\eps\le1$, $\beta = d^{-c}$ for a fixed $c>0$, and $n\le e^{\poly(d)}$, then \Cref{alg:geometric_bs} with $P=[d]$, $[\ell,u]=[0,n]$, $j_0=1$ and $T=\max\{1,\lceil\log_2 n\rceil\}$ makes $O(d\log n)$ SVT queries and, with probability at least $1-\beta$, returns $j^\star$ such that
\[S^\star - s_{j^\star}(D) = \O{\frac{\log d}{\eps}}.\]
\end{corollary}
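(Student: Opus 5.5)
The corollary follows by specializing \Cref{thm:gbs} and checking that each term in its error bound is $O(\log d/\eps)$ under the stated parameter regime.

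\textbf{Query count and privacy.} Take $P=[d]$, $[\ell,u]=[0,n]$, $j_0=1$ and $T=\max\{1,\lceil\log_2 n\rceil\}$. These inputs are valid for \Cref{alg:geometric_bs}:
\begin{itemize}
\item every score is a sum of per-user values in $[0,1]$, so $\max_j s_j(D)\le n$;
\item every score is nonnegative, so $s_{1}(D)\ge 0=\ell$.
\end{itemize}
\Cref{thm:gbs} then gives at most $|P|T=dT$ SVT queries. Since $T=O(\log n+1)$, this is $O(d\log n)$ queries, absorbing the additive constant for $n\ge 2$; the case $n=1$ is trivial. Privacy $\eps$ is also given directly by \Cref{thm:gbs}.

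\textbf{Error bound.} The second claim of \Cref{thm:gbs} gives, with probability at least $1-\beta$,
\[
S^\star-s_{j^\star}(D)\le 1+\frac{384}{\eps}\log\left(\frac{2d(1+\log_2 n)}{\beta}\right).
\]
We bound the logarithm term by term.
\begin{itemize}
\item Since $\beta=d^{-c}$, we have $\log(1/\beta)=c\log d$.
\item Since $n\le e^{\poly(d)}$, we have $\log_2 n\le d^{C}$ for some constant $C$, so $1+\log_2 n\le 2d^{C}$ and $\log(1+\log_2 n)=O(\log d)$.
\item Since $d\ge 2$, we have $\log(2d)=O(\log d)$.
\end{itemize}
Together these show the logarithm is $O(\log d)$, with the constant depending on $c$ and the polynomial exponent. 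Finally, $\eps\le1$ and $d\ge2$ give $1\le \log d/(\eps\log 2)$, so the additive constant $1$ is also absorbed. Hence $S^\star-s_{j^\star}(D)=O(\log d/\eps)$.

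\textbf{Main obstacle.} The only point needing care is the $\log\log n$ term. It is exactly the hypothesis $n\le e^{\poly(d)}$ that makes $\log(1+\log_2 n)=O(\log d)$. Everything else is a direct substitution into \Cref{thm:gbs}.
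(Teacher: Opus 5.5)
Your proposal is correct and matches the paper's argument, which is left implicit: the corollary is a direct specialization of the second claim of \Cref{thm:gbs}. The query bound $|P|T=dT$ comes straight from that theorem, $\beta=d^{-c}$ and $n\le e^{\poly(d)}$ make $\log\bigl(2d(1+\log_2 n)/\beta\bigr)=O(\log d)$, and $\eps\le 1$, $d\ge 2$ absorb the additive $1$.
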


Running \Cref{alg:geometric_bs} on a single candidate gives a simple way to privately estimate a sum using the SVT API, which we use in \Cref{sec:applications}.
\begin{corollary}
\label{cor:gbs:estimate}
Let $f:\calX\to[0,1]$, $\eps>0$ and $\beta\in(0,1)$, and let $T=\max\{1,\lceil\log_2 n\rceil\}$. There is an $\eps$-differentially private algorithm that makes $T$ SVT queries and, with probability at least $1-\beta$, outputs an interval of length at most $1+\frac{384}{\eps}\log\frac{2T}{\beta}$ that contains $\sum_{i\in[n]}f(x_i)$.
\end{corollary}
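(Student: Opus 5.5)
This follows directly from \Cref{thm:gbs} applied to the single-candidate case. We set $P=\{1\}$, take the single score $s_1(D)=\sum_{i\in[n]}f(x_i)$, and use the interval $[\ell,u]=[0,n]$ with witness $j_0=1$. The input preconditions of \Cref{alg:geometric_bs} are then met: since $f$ takes values in $[0,1]$, we have $0\le s_1(D)\le n$, so $\max_{j\in P}s_j(D)\le u$ and $\ell\le s_{j_0}(D)$. We run the algorithm with budget $\eps$, failure probability $\beta$, and $T=\max\{1,\lceil\log_2 n\rceil\}$ rounds. The output is the final interval $[\ell',u']$; the index $j^\star$ is necessarily $1$ and carries no information.

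Privacy and query count come straight from \Cref{thm:gbs}. The algorithm is $\eps$-DP by basic composition over the $T$ streams, and it makes at most $|P|T=T$ SVT queries. Releasing $[\ell',u']$ is post-processing of the SVT answers, because the updates to $\ell$ and $u$ depend only on the oracle outputs and on public quantities ($\alpha_t$, $m$).

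For accuracy, \Cref{thm:gbs} gives, with probability at least $1-\beta$,
\[
s_1(D)\ge\ell' \quad\text{and}\quad s_1(D)\le u'.
\]
So the interval contains $\sum_i f(x_i)$, and its length satisfies
\[
u'-\ell'\le n/2^T+\frac{384}{\eps}\log\frac{2T}{\beta}.
\]
It remains to check that $n/2^T\le 1$. When $n\ge2$ this holds because $2^{\lceil\log_2 n\rceil}\ge n$. When $n=1$, $T=1$ and $n/2^T=1/2\le1$. This yields the stated length bound $1+\frac{384}{\eps}\log\frac{2T}{\beta}$.

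There is no real obstacle here. The only points to check are that the trivial witness and the interval $[0,n]$ satisfy the preconditions of \Cref{alg:geometric_bs}, and that outputting the interval rather than the index is still covered by the privacy argument. Both hold as shown above.
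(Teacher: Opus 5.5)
Your proposal is correct and follows the paper's proof. Like the paper, you run \Cref{alg:geometric_bs} on the single candidate $f$ with interval $[0,n]$, witness $f$, and $T$ rounds, and then read off privacy, the query count, containment of $\sum_{i\in[n]}f(x_i)$, and the length bound $n/2^T+\frac{384}{\eps}\log\frac{2T}{\beta}$ from \Cref{thm:gbs}. Your explicit check that $n/2^T\le 1$, including the case $n=1$ where $T=1$, spells out a step the paper leaves implicit.
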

\begin{proof}
We run \Cref{alg:geometric_bs} with the single candidate $f$, the interval $[0,n]$, the witness $f$ (whose score is nonnegative) and $T$ rounds. By \Cref{thm:gbs}, the returned interval $[\ell',u']$ contains $\sum_{i\in[n]}f(x_i)$ and has length at most $n/2^T+\frac{384}{\eps}\log\frac{2T}{\beta}$.
\end{proof}

\subsection{Near-Linear Algorithm}
\label{sec:near-linear:alg}
We now describe an algorithm that achieves the error of \Cref{alg:geometric_bs} using only $O(d\log^\ast n+\log^3 n)$ SVT queries in expectation. For simplicity of notation we write $s_j$ for $s_j(D)$ and $S^\star=\max_{j\in[d]}s_j$.

Like \Cref{alg:geometric_bs}, our algorithm maintains an interval $[\ell,u]$ that contains $S^\star$ together with a witness index $j_w$ whose score is at least $\ell$. \Cref{alg:geometric_bs} halves the length of this interval in each round and queries all $d$ candidates in every round. Our algorithm instead proceeds in {\em phases}, each of which reduces the length of the interval from $x\alpha$ to $O(\alpha\log^3x)$ using $O(d+\log^3x)$ queries in expectation, where $\alpha$ is (up to a constant factor) the target error. Since the length decreases so rapidly, $O(\log^\ast n)$ phases suffice to reduce it to $O(\alpha)$, and a final run of \Cref{alg:geometric_bs} with a constant number of rounds completes the selection.

The main step of a phase is a {\em filtering pass} in which we query every candidate once against a fixed threshold $\tau$. Every candidate that answers {\sc No} has score at most $\tau$ (up to the accuracy of the SVT). Thus it remains to search the set $Y$ of candidates that answer {\sc Yes}, which is cheap when $Y$ is small. At the same time, each {\sc Yes} answer closes the SVT stream and therefore the privacy cost of a filtering pass grows linearly with $|Y|$. Thus we need a threshold that only few candidates exceed. To construct this threshold, we first form a random sample $P$ by including each candidate independently with probability $1/t$, where $t=\lceil\log_2x\rceil$. We estimate the maximum score in $P$ by running \Cref{alg:geometric_bs} on $P$ with $t$ rounds, which makes at most $|P|t$ queries, or about $d$ queries in expectation. We then set $\tau$ slightly above this estimate, with enough slack to account for both the estimation error and the SVT error, and run the filtering pass. When all SVT answers are accurate, only candidates whose scores exceed the maximum score in $P$ can answer {\sc Yes}. The number of such candidates is dominated by a geometric random variable with mean $t$ and thus it is less than $t^2$ except with probability $e^{-t}$. When $|Y|<t^2$, we search $Y$ using \Cref{alg:geometric_bs} with $t$ rounds, which makes at most $t^3$ queries.

We use privacy budget $\eps/t^3$ for every stream of the filtering pass and for each of the two runs of \Cref{alg:geometric_bs} in a phase. Therefore the accuracy of the SVT answers, and also the length of the new interval, is $O(\alpha t^3)=O(\alpha\log^3x)$. The filtering pass initializes at most $t^2$ streams and thus a phase uses budget $O(\eps/t)$ in total. This budget at least doubles from one phase to the next and thus the total budget of all phases is dominated by the last one. Two additional details ensure that the privacy guarantee holds on every execution, and not only when all SVT answers are accurate. First, the parameters of every phase are fixed in advance and do not depend on the data. Second, the filtering pass stops after $t^2$ {\sc Yes} answers and we then search all $d$ candidates instead of $Y$. This makes $dt$ queries but happens with probability at most $e^{-t}$ when all SVT answers are accurate. We present the full details in \Cref{alg:filter_fallback}. There, $\textproc{GeometricBinarySearch}(P,[\ell,u],j,T,\eps,\beta)$ denotes a run of \Cref{alg:geometric_bs} on the candidate set $P$ with the interval $[\ell,u]$, the witness $j$, $T$ rounds, privacy budget $\eps$ and failure probability $\beta$.

\begin{algorithm}[!htb]
\caption{Private selection by sampling and filtering}
\label{alg:filter_fallback}
\begin{algorithmic}[1]
\Require{Oracle $\mathbb{O}$, candidates $[d]$ with scores in $[0,n]$, privacy budget $\eps\in(0,1]$, failure probability $\beta\in(0,1)$}
\Ensure{Index $j^\star$ with approximately maximum score}
\State{$N\gets3\log_2^2(2n)$, $\beta'\gets\beta/N$, $\alpha\gets\frac{8}{\eps}\log\frac{384d\log_2^3(2n)}{\beta}$}
\State{$\ell\gets0$, $u\gets n$, $j_w\gets1$, $x\gets n/\alpha$}
\While{$x\ge2^{64}$}\Comment{one phase}
\State{$t\gets\lceil\log_2x\rceil$, $\eps'\gets\eps/t^3$, $a\gets t^3\alpha$}
\State{Sample $P'\subseteq[d]$ by including each candidate independently with probability $1/t$, and let $P\gets P'\cup\{j_w\}$}
\State{$(j_P,\ell_P,\cdot)\gets\textproc{GeometricBinarySearch}(P,[\ell,u],j_w,t,\eps',\beta')$}
\State{$\tau\gets\ell_P+50a$, $Y\gets\emptyset$, and initialize an SVT stream with budget $\eps'$} \Comment{filtering pass}
\For{$j=1,\ldots,d$}
\If{the query $(f_j,\tau)$ to the current stream returns {\sc Yes}}
\State{$Y\gets Y\cup\{j\}$}
\State{\textbf{if} $|Y|=t^2$ \textbf{then break}, \textbf{else} initialize a new SVT stream with budget $\eps'$}
\EndIf
\EndFor
\State{$C\gets[d]$ if $|Y|=t^2$, and $C\gets Y\cup\{j_P\}$ otherwise}
\State{$(j_w,\ell,u')\gets\textproc{GeometricBinarySearch}(C,[\ell_P,u],j_P,t,\eps',\beta')$}
\State{$u\gets\min\{u,\max\{u',\tau+a\}\}$, $x\gets64t^3$}
\EndWhile
\State{$(j^\star,\cdot,\cdot)\gets\textproc{GeometricBinarySearch}([d],[\ell,u],j_w,64,\eps/2,\beta')$}
\State{\Return $j^\star$}
\end{algorithmic}
\end{algorithm}

We now analyze \Cref{alg:filter_fallback}. Let $k$ denote the number of phases and, for $i<k$, let $x_i$ and $t_i=\lceil\log_2x_i\rceil$ denote the values of $x$ and $t$ in phase $i$ (starting with $i=0$). We also denote by $x_k$ the value of $x$ after the last phase. Note that these values depend only on $n$, $d$, $\eps$ and $\beta$. By definition, $x_{i+1}=64t_i^3$ and $t_i\ge64$ for every $i<k$. This implies that $t_{i+1}\le7+3\log_2t_i\le t_i/2$. Therefore
\begin{equation}
\sum_{i<k}\frac{1}{t_i}\le\frac{2}{t_{k-1}}\le\frac{1}{32}.
\label{nl:sum}
\end{equation}
Further, it is easy to see that two consecutive phases reduce any sufficiently large value of $x$ to at most $\log_2x$ and thus $k=O(\log^\ast n)$.

We first show that the algorithm is private, which follows from basic composition since the budgets of all phases are fixed in advance.
\begin{lemma}
\label{lem:priv:filter_fallback}
\Cref{alg:filter_fallback} is $\eps$-differentially private.
\end{lemma}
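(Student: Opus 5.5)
The plan is to account for privacy by basic composition over all SVT streams the algorithm can ever open. The key point is that the per-stream budgets and the maximum number of streams in each phase are data-independent. The number of phases $k$ and the values $x_i$, $t_i$ depend only on $n,d,\eps,\beta$, as noted above. So the sequence of budgets is fixed in advance, and composition applies even though queries and thresholds are chosen adaptively. Every access to the data goes through an SVT stream (\Cref{alg:svt_init}, \Cref{alg:svt_query}), and each stream is $\eps'$-DP by \Cref{thm:abovethresh_privacy} regardless of how many queries it answers. Each invocation of \textproc{GeometricBinarySearch} with budget $\eps'$ is $\eps'$-DP by \Cref{thm:gbs}, whatever its candidate set. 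The sampling of $P'$ uses only public randomness, so it does not affect the accounting.

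Fix phase $i$ with $t=t_i$. It runs two calls to \Cref{alg:geometric_bs}, each with budget $\eps/t^3$, plus the filtering pass. The filtering pass opens one stream initially and one more after each {\sc Yes} answer, except the $t^2$-th, after which it breaks. Hence it opens at most $t^2$ streams on every execution, not only on the accurate event. This is exactly what the hard cap $|Y|=t^2$ buys. The filtering pass therefore costs at most $t^2\cdot \eps/t^3=\eps/t$, and the whole phase costs at most
\[
\frac{2\eps}{t^3}+\frac{\eps}{t}\le\frac{2\eps}{t},
\]
using $t\ge 64$. One detail needs checking: the fallback $C=[d]$ changes only which candidates the second search queries, not its budget.

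Summing over phases with \eqref{nl:sum} gives a total of at most $\sum_{i<k}2\eps/t_i\le 2\eps/32=\eps/16$. The final call to \Cref{alg:geometric_bs} uses $\eps/2$. The total is therefore at most $\eps/16+\eps/2<\eps$, and basic composition (adaptive, with fixed budgets) gives $\eps$-DP.

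The only step needing care is the argument that the number of streams is bounded deterministically. The break condition ensures it, and no step relies on the accuracy event.
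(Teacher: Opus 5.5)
Your proof is correct and follows the paper's argument. You use basic composition over data-independent per-stream budgets, bound the cost of phase $i$ by $(t_i^2+2)\eps/t_i^3\le 2\eps/t_i$, sum over phases via \eqref{nl:sum} to get $\eps/16$, and add $\eps/2$ for the final run; your remarks that the cap $|Y|=t^2$ bounds the number of streams on every execution and that the fallback $C=[d]$ leaves the budget unchanged make explicit what the paper uses implicitly.
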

\begin{proof}
The algorithm accesses the data only through SVT streams, either in a filtering pass or in a run of \Cref{alg:geometric_bs}. In phase $i$ each of the two runs of \Cref{alg:geometric_bs} uses budget $\eps/t_i^3$, and the filtering pass initializes at most $t_i^2$ streams with budget $\eps/t_i^3$ each. Thus phase $i$ uses budget at most $(t_i^2+2)\eps/t_i^3\le2\eps/t_i$. By Equation~\eqref{nl:sum}, all phases together use budget at most $\eps/16$. The final run of \Cref{alg:geometric_bs} uses budget $\eps/2$ and therefore the claim follows from basic composition.
\end{proof}

Next we analyze the accuracy of the algorithm. Every run of \Cref{alg:geometric_bs} is made with failure probability $\beta'=\beta/N$. Every stream of a filtering pass answers at most $d$ queries and thus, by \Cref{thm:abovethresh_utility}, it is $\frac{8}{\eps'}\log\frac{2dN}{\beta}$-accurate except with probability $\beta/N$. The total number of runs of \Cref{alg:geometric_bs} and streams of filtering passes is at most $1+\sum_{i<k}(t_i^2+2)\le N$, since $t_{i+1}\le t_i/2$ and $t_0\le\log_2(2n)$ (as $\alpha\ge1$). Let $\calE$ denote the event that every run of \Cref{alg:geometric_bs} satisfies the guarantee of \Cref{thm:gbs} and every stream of a filtering pass is accurate as above. By a union bound, $\Pr[\calE]\ge1-\beta$. By our choice of $\alpha$, on the event $\calE$ every stream of the filtering pass in phase $i$ is $t_i^3\alpha$-accurate, and every run of \Cref{alg:geometric_bs} in phase $i$ with input interval $[\ell',u']$ returns an interval of length at most $(u'-\ell')/2^{t_i}+48t_i^3\alpha$. Similarly, the final run returns an interval of length at most $(u-\ell)/2^{64}+96\alpha$. 

We now show that on the event $\calE$ every phase maintains the interval and the witness, and reduces the length of the interval as intended. The proof formalizes the description above. The threshold of the filtering pass exceeds the maximum score in $P$ by more than the accuracy of the stream. Thus only candidates whose scores exceed this maximum can answer {\sc Yes}. Every candidate that answers {\sc No} is certified to have score at most $\tau+a$, and the run of \Cref{alg:geometric_bs} on $C$ certifies the scores of the remaining candidates. 

\begin{lemma}
\label{lem:util:filter_fallback}
On the event $\calE$, for every $i\le k$, at the beginning of phase $i$ (or of the final run of \Cref{alg:geometric_bs} if $i=k$) we have $S^\star\le u$, $s_{j_w}\ge\ell$ and $u-\ell\le x_i\alpha$. Moreover, the filtering pass of phase $i<k$ stops after $t_i^2$ {\sc Yes} answers only if at least $t_i^2$ candidates have score larger than $\max_{j\in P}s_j$.
\end{lemma}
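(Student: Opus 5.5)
We argue by induction on $i$. For the base case $i=0$ we have $\ell=0$, $u=n$ and $j_w=1$. Every score lies in $[0,n]$, so $S^\star\le u$ and $s_{j_w}\ge0=\ell$. Also $u-\ell=n=x_0\alpha$.

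For the inductive step, fix a phase $i<k$ with $t=t_i$ and $a=t^3\alpha$, and assume the invariant holds at its start. We track the quantities through the phase in the following order.

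\textbf{First run of \Cref{alg:geometric_bs}.} Since $j_w\in P$ and $s_{j_w}\ge\ell$, the witness hypothesis of \Cref{alg:geometric_bs} holds. Also $\max_{j\in P}s_j\le S^\star\le u$. So on $\calE$, \Cref{thm:gbs} returns $j_P$ and $\ell_P$ with $s_{j_P}\ge\ell_P$ and $\max_{j\in P}s_j\le \ell_P+(u-\ell)/2^{t}+48a$. Since $u-\ell\le x_i\alpha\le 2^{t}\alpha$, this gives $\max_{j\in P}s_j\le \ell_P+49a$. Hence $\tau=\ell_P+50a\ge \max_{j\in P}s_j+a$.

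\textbf{Filtering pass.} Every stream is $a$-accurate on $\calE$. Accuracy property (b) says a candidate $j$ with $s_j\le\tau-a$ answers {\sc No}. So every candidate answering {\sc Yes} has $s_j>\tau-a\ge\max_{j'\in P}s_{j'}$. This proves the \emph{moreover} claim: $t^2$ {\sc Yes} answers require at least $t^2$ candidates whose score exceeds $\max_{j\in P}s_j$. Conversely, by property (a), every candidate answering {\sc No} has $s_j<\tau+a$. Candidates after a break are not queried, but in that case $C=[d]$ and they are covered anyway.

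\textbf{Second run of \Cref{alg:geometric_bs}.} Its hypotheses hold: $s_{j_P}\ge\ell_P$ and $\max_{j\in C}s_j\le u$. It returns $(j_w,\ell,u')$ with $s_{j_w}\ge\ell$ and $\max_{j\in C}s_j\le u'$. The length bound is $u'-\ell\le (u-\ell_P)/2^t+48a$. To bound $u-\ell_P$, use $\ell_P\ge\ell$ (the search only raises $\ell$), so $u-\ell_P\le u-\ell\le 2^t\alpha$, giving $u'-\ell\le 49a$. This bound is also needed to make the new $u$ consistent with $\ell$.

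\textbf{Verifying $S^\star\le u_{\mathrm{new}}$.} Recall $u_{\mathrm{new}}=\min\{u,\max\{u',\tau+a\}\}$. Each $j\in[d]$ is either in $C$, so $s_j\le u'$, or answered {\sc No}, so $s_j\le\tau+a$. Combined with $S^\star\le u$, this gives $S^\star\le u_{\mathrm{new}}$.

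\textbf{Bounding the new length.} We bound $u_{\mathrm{new}}-\ell_{\mathrm{new}}\le\max\{u'-\ell,\ \tau+a-\ell\}$. Since $\ell\ge\ell_P$ (the second run starts from $\ell_P$ and only raises it), we get $\tau+a-\ell\le51a$. Both terms are therefore at most $51a\le 64t^3\alpha=x_{i+1}\alpha$. This completes the induction, and the case $i=k$ is the state after the last phase.

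The main obstacle is the bookkeeping of which interval each \Cref{alg:geometric_bs} run receives. We need $\ell_P\ge\ell$ and $\ell_{\mathrm{new}}\ge\ell_P$, which follow because \Cref{alg:geometric_bs} never lowers $\ell$. We also need the length guarantee of \Cref{thm:gbs} with the constant $48t^3\alpha$ already stated above, which relies on $2^{t}\ge x$ so that the $W/2^T$ term is at most $\alpha$.
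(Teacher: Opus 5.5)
Your proof is correct and follows the paper's induction essentially step for step: the first run of \Cref{alg:geometric_bs} gives $\max_{j\in P}s_j\le\ell_P+49a$; accuracy of the filtering streams then yields both the \emph{moreover} claim and the bound $s_j\le\tau+a$ for every \textsc{No} answer; and the second run on $C$ from $[\ell_P,u]$ gives the new witness and the new length bound. The only difference is that you bound $\tau+a-\ell_{\mathrm{new}}\le 51a$ directly from $\ell_{\mathrm{new}}\ge\ell_P$, which is valid and makes the paper's case split on whether $Y$ is empty unnecessary (when writing it up, keep the old and new $\ell$ notationally distinct).
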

\begin{proof}
We prove the claim by induction on $i$. For $i=0$ it holds since all scores lie in $[0,n]$ and $x_0=n/\alpha$. Now consider phase $i<k$ and let $t=t_i$ and $a=t^3\alpha$. 
By the inductive hypothesis, the run of \Cref{alg:geometric_bs} on $P$ returns $\ell_P\ge\ell$ (the lower end of the interval never decreases in \Cref{alg:geometric_bs}) such that $s_{j_P}\ge\ell_P$ and $S^\star_P=\max_{j\in P}s_j\le\ell_P+(u-\ell)/2^t+48a\le\ell_P+49a$. 
Therefore every candidate $j$ with $s_j\le S^\star_P$ satisfies $s_j\le\tau-a$ and answers {\sc No} in the filtering pass. 
This proves the second claim since every candidate is queried at most once. 
Further, every candidate $j\notin Y$ that is queried in the filtering pass answers {\sc No} and thus satisfies $s_j\le\tau+a$.

Let $(j',\ell',u')$ denote the output of the run of \Cref{alg:geometric_bs} on $C$, so that the new values are $j_w=j'$, $\ell=\ell'$ and $u=\min\{u,\max\{u',\tau+a\}\}$. 
This run starts from the interval $[\ell_P,u]$ with the witness $j_P$ and thus $s_{j'}\ge\ell'$, $S^\star_C=\max_{j\in C}s_j\le u'$ and $u'-\ell'\le49a$. 
If the filtering pass does not stop early then every candidate outside of $C$ answered {\sc No} and thus $S^\star\le\max\{u',\tau+a\}$. Otherwise $C=[d]$ and $S^\star=S^\star_C\le u'$. In both cases the new value of $u$ is at least $S^\star$. 
The length of the new interval is at most $\max\{u'-\ell',\tau+a-\ell'\}$. 
If $Y\ne\emptyset$ then some candidate in $C$ answered {\sc Yes}. 
Thus $S^\star_C\ge\tau-a$ and $\ell'\ge S^\star_C-49a\ge\tau-50a$. 
If $Y=\emptyset$ then $\ell'\ge\ell_P=\tau-50a$. 
In both cases the new interval has length at most $51a\le x_{i+1}\alpha$.
\end{proof}

Finally, we bound the expected number of queries. 
The run of \Cref{alg:geometric_bs} on $P$ makes about $d$ queries in expectation and the filtering pass makes at most $d$ queries. 
The run on $C$ makes at most $t^3$ queries, unless the filtering pass stops early. 
By \Cref{lem:util:filter_fallback}, on the event $\calE$ this happens only if none of the $t^2$ candidates with the largest scores is in the sample, which has probability at most $e^{-t}$.
\begin{lemma}
\label{lem:queries:filter_fallback}
If $\beta\le1/d$ then the expected number of SVT queries made by \Cref{alg:filter_fallback} is at most $3dk+2\log_2^3(2n)+64d=O(d\log^\ast n+\log^3n)$.
\end{lemma}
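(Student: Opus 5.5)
We will bound the number of SVT queries of each part of \Cref{alg:filter_fallback} separately and sum them. The parts are the run of \Cref{alg:geometric_bs} on the sample $P$, the filtering pass, and the run on $C$ in each phase $i<k$, plus the final run. The final run uses $T=64$ rounds on $[d]$ and so makes exactly $64d$ queries. The filtering pass of each phase queries every candidate at most once, so it makes at most $d$ queries. The run on $P$ makes at most $|P|t_i$ queries. Since $|P|\le 1+|P'|$ and $\E|P'|=d/t_i$, this is at most $d+t_i$ in expectation. Together these give at most $2d+t_i$ per phase, before the run on $C$.

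For the run on $C$, we split according to whether the filtering pass stopped early. If it did not, then $|C|\le |Y|+1\le t_i^2$, so this run makes at most $t_i^3$ queries. If it did, then $C=[d]$ and the run makes $dt_i$ queries. We bound the probability of early stopping by splitting on the event $\calE$.

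On $\calE$, \Cref{lem:util:filter_fallback} says early stopping requires at least $t_i^2$ candidates to have score larger than $\max_{j\in P}s_j$. Fix the $t_i^2$ candidates of largest score, with ties broken arbitrarily. If any one of them lies in $P'$, at most $t_i^2-1$ candidates can exceed the sample maximum. So early stopping requires that none of them is sampled, which has probability $(1-1/t_i)^{t_i^2}\le e^{-t_i}$. The sample $P'$ is independent of the SVT noise, but $\calE$ is not independent of $P'$. To avoid this issue, we write
\[
\Pr[\text{stop early}]\le \Pr[\text{stop early}\wedge\calE]+\Pr[\neg\calE].
\]
The first term is at most the probability that none of the top $t_i^2$ candidates is sampled, since on $\calE$ early stopping implies that sampling event. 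Hence $\Pr[\text{stop early}]\le e^{-t_i}+\beta$. The expected cost of the run on $C$ is then at most
\[
t_i^3+dt_i\left(e^{-t_i}+\beta\right)\le t_i^3+dt_ie^{-t_i}+t_i,
\]
using $\beta\le 1/d$. Since $t_i\ge 64$, we have $t_ie^{-t_i}\le 1$, so this contributes at most $d$.

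Summing over phases gives at most $3d+t_i^3+2t_i$ per phase. It remains to check $\sum_{i<k}(t_i^3+2t_i)\le 2\log_2^3(2n)$. This follows from $t_{i+1}\le t_i/2$ and $t_0\le\log_2(2n)$: the cubes form a geometric series with ratio at most $1/8$, and the linear terms are negligible. Adding the final $64d$ gives the claimed bound $3dk+2\log_2^3(2n)+64d$. With $k=O(\log^\ast n)$, this is $O(d\log^\ast n+\log^3 n)$.

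The main obstacle is the early-stopping probability. The rare fallback costs $dt_i$, so it must be controlled both on $\calE$, through the sampling argument, and off $\calE$, through $\beta\le1/d$. The bookkeeping must also handle the dependence between $\calE$ and $P'$, which the decomposition above does.
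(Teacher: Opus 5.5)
Your proposal is correct and follows the paper's proof step for step. It uses the same per-phase accounting ($d+t_i$ in expectation for the run on $P$, $d$ for the filtering pass, and $t_i^3+dt_i(\beta+e^{-t_i})$ for the run on $C$), the same argument that on $\calE$ early stopping forces $P'$ to miss all $t_i^2$ top-scoring candidates (with $\neg\calE$ absorbed via $\beta\le 1/d$), and the same geometric summation via $t_{i+1}\le t_i/2$ and $t_0\le\log_2(2n)$. The only harmless imprecision is that the final run makes at most, not exactly, $64d$ queries, since each SVT stream closes at its first {\sc Yes}.
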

\begin{proof}
Consider phase $i<k$ and let $t=t_i$. The run of \Cref{alg:geometric_bs} on $P$ makes $|P|t$ queries and $\E[|P|]\le1+d/t$. The filtering pass makes at most $d$ queries. The run on $C$ makes at most $t^3$ queries if the filtering pass does not stop early, and $dt$ queries otherwise. Let $j_1,\ldots,j_d$ denote the candidates sorted by their scores in decreasing order (with ties broken arbitrarily). If $P'$ includes one of $j_1,\ldots,j_{t^2}$ then fewer than $t^2$ candidates have score larger than $\max_{j\in P}s_j$. By \Cref{lem:util:filter_fallback}, this implies that the filtering pass stops early only if $\calE$ does not hold or $P'$ includes none of $j_1,\ldots,j_{t^2}$, which happens with probability at most $\beta+(1-1/t)^{t^2}\le\beta+e^{-t}$. Therefore the expected number of queries in phase $i$ is at most
\[(1+d/t)t+d+t^3+(\beta+e^{-t})dt\le3d+t^3+2t,\]
where we used that $\beta d\le1$ and $te^{-t}\le1$. Summing over the phases and using that $t_{i+1}\le t_i/2$ and $t_0\le\log_2(2n)$, we get that all phases together make at most $3dk+2\log_2^3(2n)$ queries in expectation. The final run of \Cref{alg:geometric_bs} makes $64d$ queries.
\end{proof}

By \Cref{lem:util:filter_fallback}, on the event $\calE$ the final run of \Cref{alg:geometric_bs} starts from an interval of length $u-\ell\le x_k\alpha<2^{64}\alpha$ that contains $S^\star$, with a witness whose score is at least $\ell$. Therefore it returns $j^\star$ such that $S^\star-s_{j^\star}\le\alpha+96\alpha=97\alpha$. Together with \Cref{lem:priv:filter_fallback,lem:queries:filter_fallback}, this gives the following more general version of \Cref{thm:select:thresh}.
\begin{restatable}{theorem}{thmlogstar}
\label{thm:logstar}
Let $n\ge2$, $d\ge1$, $\eps\in(0,1]$ and $\beta\in(0,1/d]$. \Cref{alg:filter_fallback} is $\eps$-differentially private and, with probability at least $1-\beta$, it outputs an index $j^\star$ such that
\[\max_{j\in[d]}s_j(D)-s_{j^\star}(D)\le\frac{776}{\eps}\log\frac{384d\log_2^3(2n)}{\beta}=O\left(\frac{1}{\eps}\log\frac{d\log n}{\beta}\right).\]
The expected number of SVT queries made by the algorithm is $O(d\log^\ast n+\log^3n)$. In particular, for $\beta=1/(dn)$ the expected additive error is $O\left(\frac{\log(dn)}{\eps}\right)$, and if $d\ge\log^3n$ then the expected number of queries is $O(d\log^\ast d)$.
\end{restatable}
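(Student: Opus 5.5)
The theorem follows by assembling \Cref{lem:priv:filter_fallback}, \Cref{lem:util:filter_fallback} and \Cref{lem:queries:filter_fallback}, plus a short computation for the final run and for the stated corollaries.

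\emph{Privacy.} This is exactly \Cref{lem:priv:filter_fallback}. The phase parameters $t_i$ are fixed in advance, so the total budget is at most $\eps/16+\eps/2\le\eps$ by basic composition.

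\emph{Accuracy.} I will work on the event $\calE$, which has probability at least $1-\beta$ by the union bound over at most $N$ runs and streams. Applying \Cref{lem:util:filter_fallback} with $i=k$, the final call to \Cref{alg:geometric_bs} has the following input guarantees:
\begin{itemize}
\item the interval $[\ell,u]$ contains $S^\star$;
\item the witness $j_w$ satisfies $s_{j_w}\ge\ell$;
\item the interval length satisfies $u-\ell\le x_k\alpha<2^{64}\alpha$.
\end{itemize}
This call runs $64$ rounds with budget $\eps/2$ on $|P|=d$ candidates. By \Cref{thm:gbs}, it returns $j^\star$ with
\[
S^\star-s_{j^\star}\le \frac{2^{64}\alpha}{2^{64}}+\frac{768}{\eps}\log\frac{128dN}{\beta}.
\]
It remains to check that the second term is at most $96\alpha$, which amounts to comparing logarithm arguments. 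We have $N=3\log_2^2(2n)$, so $128dN\le 384d\log_2^3(2n)$, and $96\alpha=\frac{768}{\eps}\log\frac{384d\log_2^3(2n)}{\beta}$. The total error is therefore at most $97\alpha=\frac{776}{\eps}\log(\cdots)$, which is the stated bound.

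\emph{Query count.} Since $\beta\le 1/d$, the expected number of queries is $O(d\log^\ast n+\log^3 n)$ directly from \Cref{lem:queries:filter_fallback}, using $k=O(\log^\ast n)$.

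\emph{Corollaries.} Take $\beta=1/(dn)\le 1/d$.
\begin{itemize}
\item On $\calE$, the error is $O(\log(d^2n\log^3 n)/\eps)=O(\log(dn)/\eps)$.
\item On the complement of $\calE$, which has probability at most $1/(dn)$, the error is at most $n$ since scores lie in $[0,n]$. This contributes at most $1/d\le1$ to the expected error.
\item So the expected error is $O(\log(dn)/\eps)$, absorbing the constant using $\eps\le1$.
\item If $d\ge\log^3 n$, then $\log^3 n\le d$ and $\log^\ast n\le\log^\ast(2^{d^{1/3}})=O(\log^\ast d)$. Hence the expected number of queries is $O(d\log^\ast d)$.
\end{itemize}

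\emph{Main obstacle.} The delicate points are the bookkeeping inside the lemmas already proved: that each phase shrinks the interval to $51a\le x_{i+1}\alpha$, and that $\sum_i(t_i^2+2)+1\le N$. For the theorem itself, the only step needing care is matching the logarithmic constants in the final run against $\alpha$, together with the conversion from the high-probability bound to an expected-error bound.
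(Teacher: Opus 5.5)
Your proposal is correct and follows the paper's argument. The paper likewise combines \Cref{lem:priv:filter_fallback}, \Cref{lem:util:filter_fallback} (applied at $i=k$) and \Cref{lem:queries:filter_fallback}, and bounds the final run of \Cref{alg:geometric_bs} by $(u-\ell)/2^{64}+96\alpha\le 97\alpha$. Your explicit check that $128dN\le 384d\log_2^3(2n)$, and your conversion of the high-probability bound into an expected-error bound for $\beta=1/(dn)$ (using that the error is at most $n$ off $\calE$), fill in details the paper leaves implicit.
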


\section{Applications in the SVT Model}
\label{sec:applications}
Because private selection appears in many settings, \Cref{thm:select:thresh} has a wide range of applications. 
In this section, we present several examples demonstrating how the SVT primitive substantially improves upon existing models.

\paragraph{$F_p$ losses.}
We first note that our framework extends naturally to the setting where the score function of a coordinate $i\in[d]$ is either the $F_p$ moment or the $L_p$ norm of the individual components. 
For $p\ge1$, in the first case we have $s(i)=\sum_{j\in[n]}|v_i^{(j)}|^p$, and in the second case we have $s(i)=\left(\sum_{j\in[n]}|v_i^{(j)}|^p\right)^{1/p}$.
Observe that each user $j\in[n]$ can compute the vector $w^{(j)}$, defined so that $w^{(j)}_i=|v^{(j)}_i|^p$ for all $i\in[d]$.
Then the $F_p$ score is $s(i)=\sum_{j\in[n]}w_i^{(j)}$. If all the residuals satisfy $|v_i^{(j)}|\le b$, then every $w_i^{(j)}$ lies in $[0,b^p]$ and the users can run our selection algorithm on the vectors $w^{(1)}/b^p,\ldots,w^{(n)}/b^p$, thereby achieving additive error $O\left(\frac{b^p\log d}{\eps}\right)$ for the $F_p$ score (when $n$ is polynomial in $d$). Since $(y-a)^{1/p}\ge y^{1/p}-a^{1/p}$ for all $0\le a\le y$, an additive error $a$ for the $F_p$ score implies additive error at most $a^{1/p}$ for the $L_p$ score.
This formulation captures a number of standard loss measures, including:
\begin{itemize}
\item 
\emph{Squared Euclidean loss, i.e., $F_2$ loss:}  
If $v_i^{(j)}$ represents a residual or prediction error, then $s(i)=\sum_{j\in[n]}(v_i^{(j)})^2$ measures the total squared error at coordinate $i$. 
As the score function is the squared $\ell_2$ norm of the vector $(v_i^{(1)},\ldots,v_i^{(n)})$, it penalizes large deviations more heavily than small ones and thus selection based on $s(i)$ identifies coordinates where the error vector has large length. 
This loss is widely used in regression, least-squares estimation, and many machine learning tasks. 
\item 
\emph{Absolute deviation, i.e., $F_1$ or $L_1$ loss:}  
Taking $p=1$ results in the score function $s(i)=\sum_{j\in[n]}|v_i^{(j)}|$, a standard robust alternative to squared loss. 
The $L_1$ loss is less sensitive to outliers and is commonly used in robust estimation and sparse recovery when occasional large deviations are expected.
\item 
\emph{Higher-order moments, i.e., $F_p$ loss for $p>2$:}  
For larger $p$, the score $s(i)=\sum_{j\in[n]}|v_i^{(j)}|^p$ emphasizes rare but large deviations. 
The corresponding $L_p$ score $s(i)=\left(\sum_{j\in[n]}|v_i^{(j)}|^p\right)^{1/p}$ serves as a smooth proxy for the $\ell_\infty$ loss $s_\infty(i)=\max_{j\in[n]}|v_i^{(j)}|$, which captures worst-case error. 
This is useful in applications requiring uniform accuracy or robustness guarantees, while using finite $p$ retains computational feasibility in many applications.  
For instance, in distributed or federated learning, one might want to identify coordinates where the largest per-user errors occur, ensuring that no single coordinate becomes a bottleneck or source of extreme error.  
Similarly, in sensor networks or traffic monitoring, large $F_p$ or $L_p$ scores correspond to coordinates with unusually high variability or spikes, enabling anomaly detection and more robust decision-making.
\end{itemize}

\paragraph{Heavy-hitters with categories.}
In many real-world datasets, items are associated with multiple categories, forming sparse binary vectors where each item is represented by a vector indicating its presence across various categories. 
The goal is to identify the top $k$ categories that contain the most items. 
This problem arises in a wide range of applications. 
For instance, in e-commerce, determining the most popular product categories can inform inventory management and marketing strategies. 
In social media, identifying trending topics or hashtags helps improve content curation and user engagement. 
Similarly, in healthcare, analyzing patient data to find prevalent disease categories can aid in resource allocation and treatment planning. 
Efficiently solving this problem is particularly relevant in these applications where both accuracy and privacy are important.  

Formally, suppose there are $n$ items, which can each be mapped to up to $r$ categories in a universe of $d$ categories.  
In other words, there are $n$ vectors that are each $r$-sparse binary vectors $v^{(1)},\ldots,v^{(n)}\in\{0,1\}^d$ such that $\|v^{(i)}\|_0\le r$ for all $i\in[n]$.
The goal is to find the top $k$ categories with the largest number of items. 
Specifically, the goal is to find the largest $k$ coordinates of the vector $v=\sum_{i\in[n]}v^{(i)}\in\mathbb{R}^d$. 

We first observe that because each item can be in up to $r$ categories, the $L_1$ sensitivity of the count vector $v$ is $2r$ (when one item is replaced). Thus the standard approach of releasing all $d$ counts with Laplace noise calibrated to this sensitivity adds noise of scale $2r/\eps$ to every count, and the resulting error in the largest counts is $\Theta(r\log d/\eps)$ with constant probability. On the other hand, each individual count changes by at most $1$ when an item is replaced. Therefore we can apply our selection algorithm directly to the score functions $f_j(v^{(i)})=v^{(i)}_j$ and avoid the dependence on $r$. Applying \Cref{thm:logstar} $k$ times, each time with privacy budget $\eps/k$, failure probability $1/d$ and with the previously selected categories removed from the candidate set, we obtain the following guarantee. 
\begin{theorem}
Let $\eps\in(0,1]$, $n\ge2$ and $k\le d$. There is an $\eps$-DP algorithm in the SVT model that outputs distinct categories $j_1,\ldots,j_k$ such that, with probability at least $1-k/d$, for every $t\in[k]$ the count $v_{j_t}$ is at least the $t$-th largest count minus $O\left(\frac{k\log(d\log n)}{\eps}\right)$. The algorithm makes $O(k(d\log^\ast n+\log^3n))$ SVT queries in expectation.
\end{theorem}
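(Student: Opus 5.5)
The proof runs \Cref{alg:filter_fallback} $k$ times in sequence. In round $t\in[k]$ the candidate set is $[d]\setminus\{j_1,\ldots,j_{t-1}\}$, the privacy budget is $\eps/k$, and the failure probability is $\beta=1/d$. The round outputs $j_t$. Each score $f_j(v^{(i)})=v^{(i)}_j$ lies in $\{0,1\}\subseteq[0,1]$, so it is an additive score valid for the SVT primitive. Replacing one item changes each individual count by at most $1$, so this sensitivity is what the SVT guarantees use; the per-item sparsity $r$ never enters.

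Privacy follows from basic composition. By \Cref{thm:logstar}, each round is $(\eps/k)$-DP for any fixed candidate set. The candidate set in round $t$ depends only on earlier outputs, so adaptive basic composition over the $k$ rounds gives $\eps$-DP. Running \Cref{thm:logstar} on a candidate set of size $d'\le d$ is legitimate: relabel the candidates, and note that the bounds are monotone in $d'$. The one side condition to check is $\beta\le 1/d'$, which holds because $1/d\le1/d'$.

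For accuracy, apply \Cref{thm:logstar} in each round. Except with probability $1/d$, round $t$ outputs $j_t$ with
\[v_{j_t}\ge \max_{j\notin\{j_1,\ldots,j_{t-1}\}} v_j - \frac{776k}{\eps}\log\bigl(384d^2\log_2^3(2n)\bigr).\]
This error is $O\bigl(k\log(d\log n)/\eps\bigr)$, since $\log(d^2\log^3 n)=O(\log(d\log n))$. A union bound over the $k$ rounds gives overall success probability at least $1-k/d$.

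The remaining step, and the one needing care, turns this into the rank statement. At most $t-1$ categories have been removed before round $t$. Among the $t$ categories with the largest counts, at least one therefore remains, so the maximum over the remaining candidates is at least the $t$-th largest count. Combining this with the displayed bound gives $v_{j_t}\ge(\text{$t$-th largest count})-O(k\log(d\log n)/\eps)$ for every $t$.

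Finally, the query complexity is the sum of the $k$ per-round expectations from \Cref{lem:queries:filter_fallback}. Each round makes $O(d\log^\ast n+\log^3 n)$ queries in expectation, for a total of $O(k(d\log^\ast n+\log^3n))$.
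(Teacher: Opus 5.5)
Your proposal is correct and matches the paper's proof. Both run \Cref{alg:filter_fallback} $k$ times in sequence, with budget $\eps/k$ and failure probability $1/d$, each time removing the previously chosen categories from the candidate set, then use the fact that one of the top $t$ categories survives into round $t$, a union bound over rounds, and basic composition; you additionally check the side condition $\beta\le 1/d'$ for the shrunken candidate set and spell out the constants and the linearity-of-expectation query count, which the paper leaves implicit.
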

\begin{proof}
In round $t$ the candidate set excludes only the $t-1$ previously selected categories and thus its maximum count is at least the $t$-th largest count overall. By \Cref{thm:logstar} with budget $\eps/k$ and failure probability $1/d$, with probability at least $1-1/d$ the selected category has count within $O(k\log(d\log n)/\eps)$ of this maximum. The claim follows by a union bound over the $k$ rounds and basic composition.
\end{proof}

\paragraph{Sparse mean estimation.}
Let $X_1,\ldots,X_n\in\R^d$ be independent samples from $\calN(\mu,I_d)$, where the mean $\mu$ is $k$-sparse (that is, $\|\mu\|_0\le k$ for some $k\ll d$) and satisfies $\|\mu\|_\infty\le1$. The goal is to estimate $\mu$. Without privacy, the standard estimator computes the empirical mean and zeroes out all but its $k$ entries of largest magnitude, which gives $\ell_2$ error $O(\sqrt{k\log(d)/n})$ with high probability. We assume Gaussian data and a known bound on the entries of $\mu$ for simplicity of presentation. They allow us to clip the entries of the samples to a bounded range that, with high probability, does not change the samples. Some bound on the range of $\mu$ is necessary for this since the SVT API only accepts bounded contributions. Both assumptions can be relaxed using standard techniques (for example, by first computing a coarse private estimate of the range of every coordinate).

Our estimator follows the non-private one. We first privately select $k$ coordinates whose empirical sums have the largest magnitude, and then privately estimate the sums of the selected coordinates. Both steps use only sums of bounded functions of the samples, which we obtain by clipping every entry to the interval $[-B,B]$ for $B=1+\sqrt{2\log(8nd/\beta)}$, where $\beta$ is the failure probability. We denote the clipped samples by $\widetilde X_1,\ldots,\widetilde X_n$.

To select coordinates we use the $2d$ score functions $f_{j,\sigma}(\widetilde X_i)=\frac{B+\sigma\widetilde X_{ij}}{2B}\in[0,1]$ for $j\in[d]$ and $\sigma\in\{-1,+1\}$, so that $\max_{\sigma}\sum_{i}f_{j,\sigma}(\widetilde X_i)=\frac{n}{2}+\frac{1}{2B}\left|\sum_{i}\widetilde X_{ij}\right|$. We select $k$ coordinates in $k$ rounds. In every round we run \Cref{alg:filter_fallback} with privacy budget $\eps/(2k)$ and failure probability $\min\{\beta/(4k),1/(2d)\}$, and then remove both score functions of the selected coordinate from the candidate set. We denote the set of selected coordinates by $\widehat S$. For every $j\in\widehat S$, we then estimate $\sum_{i}\widetilde X_{ij}$ by applying \Cref{cor:gbs:estimate} to the function $g_j(\widetilde X_i)=\frac{B+\widetilde X_{ij}}{2B}$ with privacy budget $\eps/(2k)$ and failure probability $\beta/(4k)$. If $\hat g_j$ is the midpoint of the returned interval, we output $\widehat\mu_j=\frac{B(2\hat g_j-n)}{n}$. For $j\notin\widehat S$ we output $\widehat\mu_j=0$. By basic composition, this algorithm is $\eps$-differentially private.

\begin{theorem}
Let $\eps\in(0,1]$, $\beta\in(0,1)$, $n\ge2$ and $k\le d$. The algorithm above is $\eps$-DP in the SVT model, makes $O(k(d\log^\ast n+\log^3n))$ SVT queries in expectation and, with probability at least $1-\beta$, its output satisfies
\[\|\widehat\mu-\mu\|_\infty=O\left(\sqrt{\frac{\log(d/\beta)}{n}}+\frac{k\sqrt{\log(nd/\beta)}\,\log\frac{d\log n}{\beta}}{n\eps}\right),\qquad\qquad\|\widehat\mu-\mu\|_2\le\sqrt{2k}\,\|\widehat\mu-\mu\|_\infty.\]
For $\delta\in(0,1/2]$, using advanced composition instead of basic composition, the factor $k$ in the second term can be replaced by $O(\sqrt{k\log(1/\delta)})$ at the cost of $(\eps,\delta)$-DP.
\end{theorem}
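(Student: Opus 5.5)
We prove the three parts of the statement (privacy, query count, accuracy) separately and then intersect a few high-probability events.

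\emph{Privacy and query count.} Privacy is immediate from basic composition. There are $k$ runs of \Cref{alg:filter_fallback} at budget $\eps/(2k)$ each, which are $\eps/(2k)$-DP by \Cref{thm:logstar}. There are at most $k$ runs of the estimator of \Cref{cor:gbs:estimate} at budget $\eps/(2k)$ each. Clipping is a per-user map, so it does not affect privacy. The query bound is obtained by summing \Cref{thm:logstar} over the $k$ selection rounds, each using $2d$ candidates, and adding the $k\lceil\log_2 n\rceil$ queries of the estimation step. For the advanced-composition variant we apply the $(\eps,\delta)$ advanced composition theorem to the $2k$ mechanisms with per-mechanism budget $\eps/\sqrt{k\log(1/\delta)}$ (up to constants). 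This replaces $k$ by $O(\sqrt{k\log(1/\delta)})$ in every error term that scales as $1/\eps_{\text{per-call}}$.

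\emph{Accuracy: good events.} We work on the intersection of four events, each of probability at least $1-\beta/4$.
\begin{enumerate}
\item[(i)] All $nd$ entries satisfy $|X_{ij}-\mu_j|\le B-1$. Since $B-1=\sqrt{2\log(8nd/\beta)}$, a Gaussian tail bound and a union bound give this, and then no clipping occurs, so $\widetilde X=X$.
\item[(ii)] For every $j$, $|\frac1n\sum_i X_{ij}-\mu_j|\le\sqrt{2\log(8d/\beta)/n}=:\lambda$.
\item[(iii)] All $k$ selection runs succeed. Each fails with probability at most $\beta/(4k)$, and $\beta\le 1/d$ per run holds by the choice $\min\{\beta/(4k),1/(2d)\}$.
\item[(iv)] All estimation intervals are valid.
\end{enumerate}
Write $\bar X_j=\frac1n\sum_i X_{ij}$. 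By (iv) and \Cref{cor:gbs:estimate}, each selected $j$ has $|\widehat\mu_j-\bar X_j|\le \frac{2B}{n}\cdot O\big(\frac{k}{\eps}\log\frac{k\log n}{\beta}\big)$, which is within the claimed second term since $B=O(\sqrt{\log(nd/\beta)})$. Combining with (ii) bounds $|\widehat\mu_j-\mu_j|$ for $j\in\widehat S$.

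\emph{Accuracy: coordinates not selected (main obstacle).} The hardest step is bounding $|\mu_j|$ for $j\notin\widehat S$, which requires a counting argument on the selection rounds. The score $\max_\sigma\sum_i f_{j,\sigma}$ is $\frac n2+\frac{n}{2B}|\bar X_j|$. On (iii), the coordinate chosen in round $t$ therefore has $|\bar X_j|$ within $\Delta:=\frac{2B}{n}\cdot O\big(\frac{k}{\eps}\log\frac{d\log n}{\beta}\big)$ of the largest $|\bar X_{j'}|$ among the remaining coordinates.

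Suppose $j\notin\widehat S$ and $\mu_j\neq0$. Since $\mu$ is $k$-sparse and $|\widehat S|=k$, some selected coordinate $j'$ has $\mu_{j'}=0$. If instead every selected coordinate lies in $\mathrm{supp}(\mu)$, then $\widehat S=\mathrm{supp}(\mu)$ and there is nothing to prove. The coordinate $j$ was still available in the round where $j'$ was picked, so $|\bar X_j|\le|\bar X_{j'}|+\Delta\le\lambda+\Delta$. Hence $|\mu_j|\le 2\lambda+\Delta$, and for $j\notin\widehat S$ with $\mu_j=0$ the error is $0$. This gives the $\ell_\infty$ bound.

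\emph{The $\ell_2$ bound.} The $\ell_2$ bound follows because $\widehat\mu-\mu$ is supported on $\widehat S\cup\mathrm{supp}(\mu)$, which has size at most $2k$.

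The one point to check carefully is that the union bound over failure probabilities, including the event that clipping is inactive, sums to at most $\beta$.
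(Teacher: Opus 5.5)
Your proposal is correct and follows essentially the same route as the paper's proof. It uses the same good events (no clipping and concentration of the empirical sums, success of the $k$ selection runs, validity of the $k$ estimation intervals), the same pigeonhole step producing a selected coordinate $j'\notin\mathrm{supp}(\mu)$ that was chosen while the missed support coordinate was still a candidate, and the same $2k$-support argument for the $\ell_2$ bound; the only differences are cosmetic, namely that you work with averages $\bar X_j$ instead of sums $S_j$ and split the Gaussian event into two pieces of probability $\beta/4$ each.
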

The analysis follows the non-private one. With high probability no entry is clipped and all the empirical sums are within $A=\sqrt{2n\log(8d/\beta)}$ of their expectations. Therefore every selected coordinate is estimated with error $O(A)$ plus the error of the private estimation. A coordinate in the support of $\mu$ that is not selected must have a small mean, since a coordinate outside of the support, whose empirical sum has magnitude at most $A$, was selected instead.
\begin{proof}
Let $S_j=\sum_{i\in[n]}X_{ij}$. Since $S_j-n\mu_j\sim\calN(0,n)$ and $X_{ij}-\mu_j\sim\calN(0,1)$, standard Gaussian tail bounds and a union bound imply that, with probability at least $1-\beta/2$, $|S_j-n\mu_j|\le A$ for all $j\in[d]$ and $\widetilde X_i=X_i$ for all $i\in[n]$. By \Cref{thm:logstar} and a union bound over the $k$ rounds, with probability at least $1-\beta/4$, the coordinate $j_t$ selected in round $t$ satisfies $|S_{j_t}|\ge|S_j|-E_1$ for every coordinate $j$ that was not selected before round $t$, where $E_1=O\left(\frac{kB}{\eps}\log\frac{d\log n}{\beta}\right)$ is $2B$ times the error bound of \Cref{thm:logstar}. By \Cref{cor:gbs:estimate}, with probability at least $1-\beta/4$ we have $|n\widehat\mu_j-S_j|\le E_2$ for all $j\in\widehat S$, where $E_2=O\left(\frac{kB}{\eps}\log\frac{k\log n}{\beta}\right)$. We assume that all these events hold, which happens with probability at least $1-\beta$.

For $j\in\widehat S$ we have $|\widehat\mu_j-\mu_j|\le(|n\widehat\mu_j-S_j|+|S_j-n\mu_j|)/n\le(E_2+A)/n$. Now let $j^\ast$ be a coordinate in the support of $\mu$ that is not in $\widehat S$. Since $|\widehat S|=k\ge\|\mu\|_0$, some selected coordinate $j'$ is outside of the support of $\mu$, and $j^\ast$ was a candidate in the round in which $j'$ was selected. Therefore
\[A\ge|S_{j'}|\ge|S_{j^\ast}|-E_1\ge n|\mu_{j^\ast}|-A-E_1,\]
that is, $|\widehat\mu_{j^\ast}-\mu_{j^\ast}|=|\mu_{j^\ast}|\le(2A+E_1)/n$. All other coordinates satisfy $\widehat\mu_j=\mu_j=0$. This gives the bound on $\|\widehat\mu-\mu\|_\infty$ since $B=O(\sqrt{\log(nd/\beta)})$. The bound on $\|\widehat\mu-\mu\|_2$ follows since $\widehat\mu-\mu$ is supported on at most $2k$ coordinates. Finally, the bound on the number of queries follows from \Cref{thm:logstar} and \Cref{cor:gbs:estimate}.
\end{proof}

\paragraph{Multiplicative weights and exponential mechanism (MWEM).}
We next consider MWEM, a framework introduced by \cite{HardtLM12} based on multiplicative weights and the exponential mechanism, for privately producing synthetic datasets that accurately answer any set of queries on a protected dataset.
In particular, we assume there exists a set $Q$ of queries $q$, where each query $q: \mathcal{D} \to [-1,+1]$ is a function from the record domain $\mathcal{D}$ to the interval $[-1,+1]$, and extended to datasets $A$ by
\[q(A) = \sum_{x \in \calD} q(x) \cdot A(x).\]
The goal is to produce a synthetic dataset $B$ such that $q(B)\approx q(A)$ for all $q\in Q$. 

We first recall a standard approach toward achieving this goal in the central model of differential privacy, as depicted in \Cref{alg:mwem}. 
The algorithm, known as MWEM, maintains a synthetic dataset that is initialized uniformly over the domain $\calD$ and is iteratively refined.  
At each iteration, it privately selects a query whose current synthetic answer has the largest deviation from the true dataset, measures the true answer with added Laplace noise for privacy, and updates the synthetic dataset multiplicatively to better match this noisy measurement.  
After $T$ iterations, the resulting averaged synthetic dataset $A$ approximates the answers to all queries in $Q$ while preserving differential privacy.

\begin{algorithm}[!htb]
\caption{MWEM Algorithm \cite{HardtLM12}}
\label{alg:mwem}
\begin{algorithmic}[1]
\Require{Private dataset $B$ with $n$ records and $d$ features, query set $Q$, number $T$ of iterations, privacy budget $\eps>0$}
\Ensure{Synthetic histogram $A$ of total mass $n$ over $\calD$}
\State{Initialize $A_0$ as a uniform distribution over the domain $\calD$, scaled by $n$}
\For{$t=1$ to $T$}
\State{Use exponential mechanism to privately select the query $q_t \in Q$ that maximizes the score function $|q(B) - q(A_{t-1})|$, with privacy budget $\frac{\eps}{2T}$}    
\State{$m_t \gets q_t(B) + \Lap\left(\frac{4T}{\eps}\right)$}
\State{Let $A_t$ be $n$ times the distribution whose entries satisfy 
\[A_t(x) \propto A_{t-1}(x) \times \exp\left(q_t(x) \times \frac{m_t - q_t(A_{t-1})}{2n}\right)\]}
\EndFor
\State{\textbf{Return} $A=\frac{1}{T}\sum_{t\in[T]}A_t$}
\Comment{The average of the $T$ updated synthetic histograms}
\end{algorithmic}
\end{algorithm}

Observe that the key challenge to implementing \Cref{alg:mwem} in distributed models is implementing the exponential mechanism to privately select the query $q_t \in Q$ that maximizes the score function $|q(B) - q(A_{t-1})|$. 
Suppose there exists a mapping $f:\mathbb{R}^d\to\mathbb{R}^{d'}$ so that there exists a bijection between queries $q\in Q$ and coordinates $k\in[d']$ so that the query $q$ can be decomposed as the sum $q(B)=\sum_{i\in[n]}(f(v_i))_k$. 
However, this does not quite allow us to sample using score function $|q(B)-q(A_{t-1})|$ due to the term $q(A_{t-1})$. 
Thus, for the embedded dimension $d'$, we use the public offset vector $w\in\mathbb{R}^{d'}$ so that for each query $q$, there exists $k\in[d']$ such that
\[q(B)-q(A_{t-1})=w_k+\sum_{i\in[n]}(f(v_i))_k.\]

For example, consider the class of two-way marginal queries, defined as follows. 
A two-way marginal query $q_{(a,b)}$ is specified by a pair of attributes $a,b\in[d]$, and returns the number of records in the dataset whose attribute $a$ and attribute $b$ are both equal to $1$, i.e., $q_{(a,b)}(B)=\sum_{i\in[n]}\mathbf{1}[v_i(a)=v_i(b)=1]$. 
We define the embedding as follows.  
Let each user record $v_i\in\{0,1\}^d$ encode its attribute values, and let the embedded dimension $d'$ index all pairs $(a,b)$, so that $d'=\O{d^2}$.  
Define $f(v_i)\in\mathbb{R}^{d'}$ by 
\[(f(v_i))_{(a,b)}=\mathbf{1}[v_i(a)=v_i(b)=1].\]
Then for any two-way marginal query $q_{(a,b)}$,
\[q_{(a,b)}(B)=\sum_{i\in[n]} (f(v_i))_{(a,b)}.\]
Moreover, since $A_{t-1}$ is fixed at round $t$, we can introduce a public offset vector $w\in\mathbb{R}^{d'}$ defined by $w_{(a,b)} = -q_{(a,b)}(A_{t-1})$, so that
\[q_{(a,b)}(B)-q_{(a,b)}(A_{t-1}) = w_{(a,b)} + \sum_{i\in[n]} (f(v_i))_{(a,b)}\]
Thus, for two-way marginal queries, the signed MWEM discrepancy can be realized as a coordinate of an aggregated embedded vector plus a public offset. After analyzing \Cref{alg:mwem}, we show how this representation allows us to implement the selection step in the SVT model, using approximate selection in place of the exponential mechanism.

We define $\Delta_t=\max_{q\in Q}|q(A_{t-1})-q(B)|$ and, for a confidence parameter $\beta\in(0,1)$, $\eta=\frac{8T}{\eps}\ln\frac{2T|Q|}{\beta}$.
\begin{restatable}{lemma}{lemmwemmultadd}
\label{lem:mwem:mult:add}
With probability at least $1-\beta$, we have for all $t\in[T]$,
\[|q_t(A_{t-1})-q_t(B)|\ge\Delta_t-\eta,\qquad |m_t-q_t(B)|\le\eta.\]
\end{restatable}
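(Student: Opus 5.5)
We plan to derive both inequalities from standard tail bounds for the two randomized steps in each of the $T$ rounds of \Cref{alg:mwem}, and then take a union bound over the $2T$ resulting events. The structure is the same in every round, so we fix one $t\in[T]$ and condition on the history up to round $t-1$. Conditioned on this history, $A_{t-1}$ is fixed, and the only randomness in round $t$ comes from the exponential mechanism and the Laplace draw.

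For the selection step, the exponential mechanism is run with privacy budget $\eps/(2T)$. The candidate set is $Q$, and the score is $u(q)=|q(B)-q(A_{t-1})|$. Changing one record changes $q(B)$ by at most $2$, since $q$ takes values in $[-1,1]$ under replacement; hence $u$ has sensitivity $\Delta u\le 2$. We will invoke the standard utility guarantee of the exponential mechanism: for $\eps_0$-DP with sensitivity $\Delta u$, with probability at least $1-\beta_0$ the selected $q_t$ satisfies
\[u(q_t)\ge \max_q u(q)-\frac{2\Delta u}{\eps_0}\ln\frac{|Q|}{\beta_0}.\]
Plugging in $\eps_0=\eps/(2T)$, $\Delta u=2$ and $\beta_0=\beta/(2T)$ gives an error of $\frac{8T}{\eps}\ln\frac{2T|Q|}{\beta}=\eta$, which is exactly the first claimed inequality. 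The constant is precisely where the sampling probability $\propto\exp(\eps_0 u/(2\Delta u))$ enters, and this bookkeeping is the only delicate point. If the paper's normalization instead gives a smaller constant, the bound $\eta$ only becomes looser and therefore still holds.

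For the measurement step, $m_t-q_t(B)\sim\Lap(4T/\eps)$. Crucially, this holds conditioned on $q_t$, because the Laplace noise is drawn fresh and independently of the selection. The Laplace tail bound $\Pr[|\Lap(b)|>b\ln(1/\beta_0)]=\beta_0$ then gives $|m_t-q_t(B)|\le\frac{4T}{\eps}\ln\frac{2T}{\beta}\le\eta$ with probability at least $1-\beta/(2T)$. The last inequality uses $|Q|\ge1$.

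Finally, each of the $2T$ events fails with probability at most $\beta/(2T)$ conditionally on the past, so a union bound gives the conclusion simultaneously for all $t$ with probability at least $1-\beta$. The only step needing care is the one already noted: matching the exponential mechanism's constant to the $8T/\eps$ in the definition of $\eta$, which requires accounting for sensitivity $2$ together with budget $\eps/(2T)$.
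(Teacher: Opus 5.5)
Your proposal is correct and matches the paper's proof essentially step for step: the exponential-mechanism utility bound with score sensitivity $2$, budget $\eps/(2T)$ and failure probability $\beta/(2T)$ gives exactly $\eta$, the Laplace tail bound gives $\frac{4T}{\eps}\ln\frac{2T}{\beta}\le\eta$, and a union bound over the $2T$ events finishes the argument. Your constant bookkeeping, $\frac{2\cdot 2}{\eps/(2T)}\ln\frac{2T|Q|}{\beta}=\eta$, agrees with the paper's.
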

\begin{proof}
Since $q(x)\in[-1,+1]$, replacing one record changes $q(B)$, and hence the score $|q(B)-q(A_{t-1})|$, by at most $2$ (conditioned on the preceding transcript, $A_{t-1}$ is fixed). By the standard utility guarantee of the exponential mechanism \cite{McSherryT07,DworkRBook}, with privacy parameter $\frac{\eps}{2T}$ and score sensitivity $2$, the selected query $q_t$ satisfies $|q_t(A_{t-1})-q_t(B)|\ge\Delta_t-\frac{8T}{\eps}\left(\ln|Q|+\ln\frac{2T}{\beta}\right)=\Delta_t-\eta$ with probability at least $1-\frac{\beta}{2T}$. Further, $m_t-q_t(B)\sim\Lap\left(\frac{4T}{\eps}\right)$ and therefore $|m_t-q_t(B)|\le\frac{4T}{\eps}\ln\frac{2T}{\beta}\le\eta$ with probability at least $1-\frac{\beta}{2T}$. The claim follows by a union bound over the $2T$ events.
\end{proof}

Following the standard analysis of MWEM \cite{HardtLM12}, we show in \Cref{app:mwem} that on the accuracy event of \Cref{lem:mwem:mult:add} the averaged synthetic dataset is accurate.
\begin{restatable}{lemma}{lemmwemutil}
\label{lem:mwem:util}
For an integer $T\ge1$ and $|\calD|\ge2$, on the simultaneous accuracy event of \Cref{lem:mwem:mult:add},
\[\max_{q\in Q}|q(A)-q(B)|\preceq n\cdot\sqrt{\frac{\log|\calD|}{T}}+\eta.\]
\end{restatable}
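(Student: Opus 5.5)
We follow the standard multiplicative-weights potential argument of \cite{HardtLM12}, working with normalized histograms. Let $b=B/n$ and $a_t=A_t/n$ be probability distributions over $\calD$, and use the potential $\Psi_t=\mathrm{KL}(b\,\|\,a_t)$. Since $a_0$ is uniform, $\Psi_0\le\log|\calD|$, and $\Psi_t\ge 0$ for all $t$. The plan is to show that each round decreases $\Psi$ by an amount that is roughly the squared normalized error of the selected query. Telescoping then controls the average error of the selected queries. Finally, the selection guarantee of \Cref{lem:mwem:mult:add} transfers this control to all queries, and convexity handles the averaged output $A$.

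\textbf{Step 1 (per-round potential drop).} Write the update as $a_t(x)\propto a_{t-1}(x)\exp(\lambda_t q_t(x))$ with $\lambda_t=(m_t-q_t(A_{t-1}))/(2n)$. We compute
\[\Psi_{t-1}-\Psi_t=\lambda_t\, q_t(b)-\log\sum_x a_{t-1}(x)e^{\lambda_t q_t(x)}.\]
The update uses $|q_t|\le1$ and $|\lambda_t|\le 1$, since $|m_t|\lesssim n+\eta$. We may assume $\eta\le n$, since otherwise the claim is trivial because all errors are at most $2n$. Under these conditions, $e^{y}\le 1+y+y^2$ for $|y|\le1$ together with $\log(1+z)\le z$ gives
\[\Psi_{t-1}-\Psi_t\ge \lambda_t\bigl(q_t(b)-q_t(a_{t-1})\bigr)-\lambda_t^2.\]

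\textbf{Step 2 (plug in the noisy measurement).} Let $e_t=(q_t(B)-q_t(A_{t-1}))/n$. On the event of \Cref{lem:mwem:mult:add}, $\lambda_t=(e_t+\xi_t)/2$ with $|\xi_t|\le\eta/n$. Substituting gives
\[\Psi_{t-1}-\Psi_t\ge \tfrac{1}{2}e_t(e_t+\xi_t)-\tfrac14(e_t+\xi_t)^2=\tfrac14 e_t^2-\tfrac14\xi_t^2\ge \tfrac14 e_t^2-\tfrac14(\eta/n)^2.\]
Summing over $t$ and using $\Psi_T\ge0$ yields
\[\frac1T\sum_t e_t^2\le \frac{4\log|\calD|}{T}+(\eta/n)^2,\]
and hence, by Cauchy--Schwarz,
\[\frac1T\sum_t|e_t|\lesssim\sqrt{\log|\calD|/T}+\eta/n.\]

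\textbf{Step 3 (from selected queries to all queries).} Fix $q\in Q$. The map $A\mapsto q(A)$ is linear and $A=\frac1T\sum_t A_t$, so
\[|q(A)-q(B)|\le\frac1T\sum_t|q(A_t)-q(B)|.\]
The main obstacle is an index mismatch. \Cref{lem:mwem:mult:add} bounds $\Delta_t$, which is measured at $A_{t-1}$, giving $\Delta_t\le n|e_t|+\eta$. Step 3, however, needs errors at $A_t$. To fix this, reindex the average so that it runs over $A_0,\dots,A_{T-1}$; the endpoint mismatch costs at most $2n/T\le 2n\sqrt{\log|\calD|/T}$, using $\log|\calD|\ge\log2$. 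If the paper's convention is rigid, an alternative is to note that a single step changes $q$ by at most $O(n|\lambda_t|)$, but the reindexing is cleaner. Combining,
\[|q(A)-q(B)|\lesssim\frac1T\sum_t\Delta_t+\frac nT\le\frac nT\sum_t|e_t|+\eta+\frac nT\lesssim n\sqrt{\log|\calD|/T}+\eta,\]
which is the claim.
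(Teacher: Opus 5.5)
Your proposal is correct and follows essentially the same route as the paper: the same KL potential, the same per-round drop $\tfrac14(e_t^2-\xi_t^2)$, telescoping with Cauchy--Schwarz, and the same reindexing of the average over $A_0,\dots,A_{T-1}$ at a $2n/T$ boundary cost. One small slip is that the accuracy event only gives $|\lambda_t|\le 1+\eta/(2n)\le 3/2$, not $|\lambda_t|\le 1$. This does not break the argument, because $e^y\le 1+y+y^2$ still holds for all $y\le 3/2$; the paper avoids the issue entirely by bounding $\log\sum_x a_{t-1}(x)e^{\lambda_t q_t(x)}$ with Hoeffding's lemma, which needs no range restriction on $\lambda_t$.
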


For $0<\eps\le1$, let $L=\ln\frac{2n|Q|}{\beta}$ and $T=\min\left\{n,\left\lceil (n\eps)^{2/3}(\log|\calD|)^{1/3}L^{-2/3}\right\rceil\right\}$. 
If $\log|\calD|>n$ then the bound below is larger than $n$ and holds trivially. 
Otherwise $T$ is the balancing value of the two terms in \Cref{lem:mwem:util} rounded up to an integer, $T\le n$ and hence $\eta\le\frac{8T}{\eps}L$, and we obtain:
\begin{corollary}
For $0<\eps\le1$, with probability at least $1-\beta$,
\[\max_{q\in Q}|q(A)-q(B)|\preceq n^{2/3}\cdot\left(\frac{\log|\calD|\ln(2n|Q|/\beta)}{\eps}\right)^{1/3}.\]
\end{corollary}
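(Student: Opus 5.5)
The corollary follows from \Cref{lem:mwem:util} once we substitute the chosen $T$ and bound $\eta$. I would first dispose of the degenerate case: every $q$ takes values in $[-1,1]$, so $|q(A)-q(B)|\le 2n$. Hence if $\log|\calD|>n$, the right-hand side is at least $n^{2/3}\cdot n^{1/3}=n$ up to constants, because $\ln(2n|Q|/\beta)\ge 1$ and $\eps\le1$. The bound then holds trivially.

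Now assume $\log|\calD|\le n$. I would check that $T\ge1$, so that \Cref{lem:mwem:util} applies. I would also check that $T\le n$, which gives $2T|Q|\le 2n|Q|$ and therefore $\eta=\frac{8T}{\eps}\ln\frac{2T|Q|}{\beta}\le \frac{8T}{\eps}L$. On the event of \Cref{lem:mwem:mult:add}, which has probability at least $1-\beta$, \Cref{lem:mwem:util} then gives
\[\max_{q}|q(A)-q(B)|\preceq n\sqrt{\log|\calD|/T}+TL/\eps.\]

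Next I would bound the two terms separately, splitting on which argument attains the minimum in $T$. Write $T_0=(n\eps)^{2/3}(\log|\calD|)^{1/3}L^{-2/3}$.

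\textbf{Case $T=\lceil T_0\rceil$.} For the first term, $T\ge T_0$ gives
\[n\sqrt{\log|\calD|/T_0}=n^{2/3}(\log|\calD|)^{1/3}L^{1/3}\eps^{-1/3},\]
which is the target. For the second term, $T\le T_0+1$ gives $TL/\eps\le T_0L/\eps+L/\eps$, and $T_0L/\eps$ equals the same target expression. The leftover $L/\eps$ needs more care. If $T_0\ge1$, then $L/\eps\le T_0L/\eps$ and it is absorbed. If $T_0<1$, then $T=1$ and I would argue directly. The bound $n\sqrt{\log|\calD|}$ is at least the target whenever $T_0\le1$. For $L/\eps$, note that $T_0<1$ means $L^2/\eps^2>n^2\eps^{0}\log|\calD|/\eps^{0}$, more precisely $(L/\eps)^2> n^2\log|\calD|\cdot\eps^{0}$ after rearranging $(n\eps)^2\log|\calD|<L^2$. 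So $L/\eps>n$, and the claimed bound already exceeds $\ldots$ — I would handle this subcase by noting the trivial bound $2n$ and checking that the target dominates $n$ here. This last comparison is the step I expect to be the main obstacle.

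\textbf{Case $T=n<\lceil T_0\rceil$.} This means $T_0>n-1$. The first term is $\sqrt{n\log|\calD|}\le n$, which is dominated by the trivial bound. The second term is $nL/\eps$, which is at most roughly $T_0L/\eps$, i.e.\ the target. Finally, I would take the maximum over the cases to conclude.
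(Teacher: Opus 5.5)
Your proposal is correct and follows the paper's own route: plug the balancing choice of $T$ into \Cref{lem:mwem:util}, use $T\le n$ to get $\eta\le 8TL/\eps$, and fall back on the trivial bound $|q(A)-q(B)|\le 2n$ in the degenerate regimes. You are in fact more careful than the paper about the rounding and the cap at $n$, and the step you flag as the obstacle closes immediately from the identities $R=n\sqrt{\log|\calD|/T_0}=T_0L/\eps$ for the target $R$: if $T_0<1$ then $R>n\sqrt{\log|\calD|}=\Omega(n)$, and if $T_0>n$ then $R>nL/\eps=\Omega(n)$ because $\eps\le 1$ and $L>\ln 2$, so in both cases $2n=O(R)$ (in the capped case the first term $\sqrt{n\log|\calD|}$ actually exceeds $R$, so what you need there is $n=O(R)$ rather than a comparison with the trivial bound; also $L\ge 1$ can fail when $n=|Q|=1$, which is harmless).
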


Finally, we show that \Cref{alg:mwem} is differentially private. 
\begin{restatable}{lemma}{lemmwempriv}
\Cref{alg:mwem} is $\eps$-differentially private. 
\end{restatable}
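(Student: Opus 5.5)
The plan is to prove privacy of \Cref{alg:mwem} by basic (adaptive) composition over the $T$ iterations. Each iteration touches the private dataset $B$ exactly twice: once in the exponential-mechanism selection of $q_t$ and once in the Laplace measurement $m_t$. Every other step is post-processing of previously released values. This includes the multiplicative-weights update producing $A_t$ from $A_{t-1}$, $q_t$ and $m_t$, and the final averaging $A=\frac1T\sum_t A_t$. So it suffices to show each of the $2T$ accesses is $\frac{\eps}{2T}$-DP conditioned on the transcript so far. Adaptive basic composition then gives total $2T\cdot\frac{\eps}{2T}=\eps$.

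\textbf{Selection step.} Condition on the transcript $(q_1,m_1,\ldots,q_{t-1},m_{t-1})$. Then $A_{t-1}$ is a fixed, data-independent histogram, so the score $u(B,q)=|q(B)-q(A_{t-1})|$ is a well-defined function of $B$. Since $q:\calD\to[-1,1]$, replacing one record changes $q(B)$ by at most $2$. By the reverse triangle inequality it also changes $u(B,q)$ by at most $2$, uniformly in $q$. The exponential mechanism run with privacy parameter $\frac{\eps}{2T}$ and calibrated to sensitivity $2$ (the same calibration used in \Cref{lem:mwem:mult:add}) is therefore $\frac{\eps}{2T}$-DP \cite{McSherryT07,DworkRBook}.

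\textbf{Measurement step.} Condition additionally on $q_t$. The map $B\mapsto q_t(B)$ has $L_1$ sensitivity $2$ under replacement, so the Laplace mechanism with scale $\frac{4T}{\eps}=\frac{2}{\eps/(2T)}$ is $\frac{\eps}{2T}$-DP.

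The only delicate point is adaptivity: both the score function in round $t$ and the query measured depend on earlier outputs. I would handle this by invoking the adaptive version of basic composition, which allows each mechanism to be chosen as a function of prior outputs as long as each is $\frac{\eps}{2T}$-DP for every fixed prior transcript. This is exactly what the two conditional arguments above establish, so no further work is needed beyond stating the sensitivity bound carefully for replacement adjacency.
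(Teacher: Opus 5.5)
Your proof is correct and follows the paper's argument. The paper likewise counts $T$ exponential-mechanism calls and $T$ Laplace calls, each with privacy parameter $\frac{\eps}{2T}$, and concludes by basic composition; your extra checks (the sensitivity-$2$ bound under replacement, treating the multiplicative-weights updates and the final averaging as post-processing, and using adaptive composition because $A_{t-1}$ and $q_t$ depend on earlier outputs) are details the paper leaves implicit, and they are all sound.
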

\begin{proof}
The algorithm makes $T$ invocations of the Exponential Mechanism with privacy parameter $\frac{\eps}{2T}$, and $T$ invocations of the Laplace Mechanism with privacy parameter $\frac{\eps}{2T}$.
By the basic composition of differential privacy, the overall algorithm is $\eps$-differentially private.
\end{proof}

We now describe how to implement \Cref{alg:mwem} in the SVT model. The algorithm accesses the private dataset only to select the query $q_t$ and to measure $m_t$. We implement the selection step using our selection algorithms and the measurement step using the estimation of a single sum (\Cref{cor:gbs:estimate}). As in the example of two-way marginals above, the only complication is that the score of a query depends on the synthetic answer $q(A_{t-1})$. This answer is determined by the outputs of the previous rounds and is therefore public in round $t$. Thus it only shifts the thresholds of the SVT queries.

Specifically, for the selection step we use $2|Q|$ candidates, one for every query $q\in Q$ and sign $\sigma\in\{-1,+1\}$, with the functions $f_{q,\sigma}(x)=\frac{1+\sigma q(x)}{2}\in[0,1]$ and the public offsets $w_{q,\sigma}=\frac{n-\sigma q(A_{t-1})}{2}\in[0,n]$. The shifted score of candidate $(q,\sigma)$ is then
\[\sum_{i\in[n]}f_{q,\sigma}(x_i)+w_{q,\sigma}=n+\frac{\sigma\left(q(B)-q(A_{t-1})\right)}{2}\in[0,2n],\]
and the maximum of the shifted scores over all candidates is $n+\Delta_t/2$. Our selection algorithms access the scores only by comparing them with thresholds, and comparing the shifted score of candidate $(q,\sigma)$ with a threshold $\tau$ is the SVT query $(f_{q,\sigma},\tau-w_{q,\sigma})$. Therefore they apply to shifted scores, with the range $[0,n]$ of the scores replaced by $[0,2n]$. We run \Cref{alg:geometric_bs} on all $2|Q|$ candidates with privacy budget $\frac{\eps}{2T}$, failure probability $\frac{\beta}{2T}$ and $\lceil\log_2(2n)\rceil$ rounds. By \Cref{thm:gbs}, with probability at least $1-\frac{\beta}{2T}$ the selected query satisfies $|q_t(A_{t-1})-q_t(B)|\ge\Delta_t-\eta'$, where $\eta'=O\left(\frac{T}{\eps}\log\frac{T|Q|\log n}{\beta}\right)$.

For the measurement step we apply \Cref{cor:gbs:estimate} to the function $\frac{1+q_t(x)}{2}$ with privacy budget $\frac{\eps}{2T}$ and failure probability $\frac{\beta}{2T}$, and set $m_t=2\hat g-n$, where $\hat g$ is the midpoint of the returned interval. Then $|m_t-q_t(B)|\le\eta'$ with probability at least $1-\frac{\beta}{2T}$ (for a suitable constant in the definition of $\eta'$). Therefore \Cref{lem:mwem:mult:add} holds with $\eta'$ in place of $\eta$. The rest of the analysis uses only the two inequalities of \Cref{lem:mwem:mult:add} and thus \Cref{lem:mwem:util} and the corollary above also hold with $\eta'$ in place of $\eta$ (for $T\le n$, $\eta'=O\left(\frac{T}{\eps}\log\frac{n|Q|}{\beta}\right)$). The resulting algorithm is $\eps$-DP by basic composition of its $2T$ private steps. We remark that when $\frac{\beta}{2T}\le\frac{1}{2|Q|}$, the selection step can also be implemented using \Cref{alg:filter_fallback} (with $2n$ in place of $n$), which gives the same bound on $\eta'$ up to a constant factor using $O(|Q|\log^\ast n+\log^3n)$ SVT queries per round in expectation.

Putting these statements together, we have the following:

\begin{theorem}
\label{thm:mwem:pure}
For any private dataset $B$ with $n$ records, finite query set $Q$, privacy parameter $\eps\in(0,1]$ and $\beta\in(0,1)$, there exists an $\eps$-differentially private algorithm in the SVT model that releases a synthetic dataset $A$ such that with probability at least $1-\beta$, we have
\[\max_{q\in Q}|q(A)-q(B)|\preceq n^{2/3}\cdot\left(\frac{\log|\calD|\cdot\log(n|Q|/\beta)}{\eps}\right)^{1/3}.\]
\end{theorem}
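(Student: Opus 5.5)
The plan is to assemble \Cref{thm:mwem:pure} from the pieces already established, without new analysis. We instantiate \Cref{alg:mwem} in the SVT model as described just above. The selection step in round $t$ runs \Cref{alg:geometric_bs} on the $2|Q|$ shifted candidates $(q,\sigma)$, with budget $\frac{\eps}{2T}$, failure probability $\frac{\beta}{2T}$ and $\lceil\log_2(2n)\rceil$ rounds. The measurement step applies \Cref{cor:gbs:estimate} to $\frac{1+q_t(x)}{2}$ with the same budget and failure probability, and sets $m_t=2\hat g-n$.

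\textbf{Privacy.} The key observation is that $A_{t-1}$ is a deterministic function of the previously released outputs $(q_1,m_1,\ldots,q_{t-1},m_{t-1})$. Hence the offsets $w_{q,\sigma}$ and the shifted thresholds $\tau-w_{q,\sigma}$ are public, adaptively chosen inputs to the SVT calls of round $t$. By \Cref{thm:gbs} and \Cref{cor:gbs:estimate}, each step is $\frac{\eps}{2T}$-DP. Adaptive basic composition over the $2T$ steps then gives $\eps$-DP, and the final averaging is post-processing.

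\textbf{Utility.} First we check that \Cref{thm:gbs} applies to shifted scores. This holds because \Cref{alg:geometric_bs} touches scores only through threshold comparisons, and the shifted scores lie in $[0,2n]$. The second claim of \Cref{thm:gbs}, with $n$ replaced by $2n$ and $|P|=2|Q|$, then gives the following: with probability $1-\frac{\beta}{2T}$ the selected $(q_t,\sigma)$ has shifted score within $1+O(\frac{T}{\eps}\log\frac{T|Q|\log n}{\beta})$ of $n+\Delta_t/2$. Since the shifted score equals $n+\sigma(q(B)-q(A_{t-1}))/2\le n+|q_t(B)-q_t(A_{t-1})|/2$, doubling yields $|q_t(A_{t-1})-q_t(B)|\ge\Delta_t-\eta'$. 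Similarly, the interval returned by \Cref{cor:gbs:estimate} has length $O(\frac{T}{\eps}\log\frac{T\log n}{\beta})$, so $|m_t-q_t(B)|\le\eta'$. A union bound over the $2T$ events shows that the conclusion of \Cref{lem:mwem:mult:add} holds with $\eta'$ in place of $\eta$ with probability at least $1-\beta$.

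On this event we invoke \Cref{lem:mwem:util}. Its proof in \Cref{app:mwem} uses only these two inequalities, so we obtain
\[
\max_q|q(A)-q(B)|\preceq n\sqrt{\log|\calD|/T}+\eta'.
\]
With $T\le n$ we have $\log\frac{T|Q|\log n}{\beta}=O(\log\frac{n|Q|}{\beta})$, so $\eta'=O(\frac{T}{\eps}L)$ with $L=\ln\frac{2n|Q|}{\beta}$. We choose $T$ as in the preceding corollary. Balancing $n\sqrt{\log|\calD|/T}$ against $TL/\eps$ gives $n^{2/3}(\log|\calD|\,L/\eps)^{1/3}$. In the degenerate case $\log|\calD|>n$ (or $T=n$ forced), the bound exceeds the trivial $O(n)$ error, since $q\in[-1,1]$ and both $A$ and $B$ have mass $n$. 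Finally, $\log|\calD|\ge\log 2$ is assumed, as in \Cref{lem:mwem:util}.

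\textbf{Main obstacle.} The main obstacle is making rigorous that the data-dependent offset $q(A_{t-1})$ does not break either the privacy accounting or the applicability of \Cref{thm:gbs}. For privacy, we argue via adaptive composition, treating the offsets as functions of past outputs. For utility, we need the witness and interval hypotheses of \Cref{alg:geometric_bs}: take $[\ell,u]=[0,2n]$ and any witness, which is valid because every shifted score is nonnegative. A secondary check is that the rounding of $T$ and the cap $T\le n$ change the balanced bound by only constant factors.
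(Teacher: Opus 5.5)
Your proposal is correct and follows essentially the same route as the paper. You implement the selection step by running \Cref{alg:geometric_bs} on the $2|Q|$ shifted candidates over $[0,2n]$ and the measurement step by \Cref{cor:gbs:estimate}, each with budget $\frac{\eps}{2T}$ and failure probability $\frac{\beta}{2T}$, so that \Cref{lem:mwem:mult:add} holds with $\eta'$ in place of $\eta$, and then reuse \Cref{lem:mwem:util}, the same balancing choice of $T$, and basic composition over the $2T$ steps, exactly as the paper does (your explicit treatment of $q(A_{t-1})$ as a public function of past outputs and of the degenerate regimes of $T$ only spells out what the paper leaves implicit).
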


In fact, we can achieve better utility bounds if we permit $(\eps,\delta)$-differential privacy for $0<\eps\le1$ and $\delta\in(0,1/2]$, by a similar argument, allowing additional privacy loss per time step and instead using advanced composition: each of the $2T$ private steps can use privacy parameter $\Theta\left(\eps/\sqrt{T\log(1/\delta)}\right)$, which replaces the factor $T$ in $\eta'$ by $O(\sqrt{T\log(1/\delta)})$, and balancing the two terms of \Cref{lem:mwem:util} then gives $T\asymp n\eps\sqrt{\log|\calD|/\log(1/\delta)}/\log(n|Q|/\beta)$:

\begin{theorem}
\label{thm:mwem:approximate}
For any private dataset $B$ with $n$ records, query set $Q$, privacy parameters $0<\eps\le1$ and $\delta\in(0,1/2]$, and $\beta\in(0,1)$, there exists an $(\eps,\delta)$-differentially private algorithm in the SVT model that releases a synthetic dataset $A$ such that with probability at least $1-\beta$, we have
\[\max_{q\in Q}|q(A)-q(B)|\preceq\sqrt{\frac{n}{\eps}}\cdot\polylog\left(|\calD|,|Q|,n,\frac{1}{\delta},\frac{1}\beta\right).\]
\end{theorem}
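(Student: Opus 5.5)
We follow the same route as the pure-DP MWEM result (\Cref{thm:mwem:pure}), changing only how the privacy budget is split across rounds. The algorithm is \Cref{alg:mwem} with both data accesses replaced by their SVT-model implementations. The selection step runs \Cref{alg:geometric_bs} on the $2|Q|$ signed candidates with shifted thresholds, as described above. The measurement step uses \Cref{cor:gbs:estimate}. Each of the $2T$ private steps now receives budget $\eps_0=c\,\eps/\sqrt{T\log(1/\delta)}$ for a small constant $c$, instead of $\eps/(2T)$.

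\textbf{Privacy.} Each step is $\eps_0$-DP, by \Cref{thm:gbs} and \Cref{cor:gbs:estimate}. The steps are adaptively composed, since $A_{t-1}$ depends only on previous outputs, so advanced composition applies. Over $2T$ mechanisms it gives $(\sqrt{4T\log(1/\delta)}\,\eps_0+2T\eps_0(e^{\eps_0}-1),\delta)$-DP. Since $\eps\le1$, $T\ge1$ and $\delta\le1/2$, we have $\eps_0\le c$, and the second term is $O(T\eps_0^2)=O(c^2\eps^2/\log(1/\delta))$. Choosing $c$ small therefore makes the total at most $\eps$.

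\textbf{Accuracy.} Rerunning the argument for \Cref{lem:mwem:mult:add} with budget $\eps_0$ and failure probability $\beta/(2T)$ per step, the two inequalities hold with
\[
\eta''=O\!\left(\frac{\sqrt{T\log(1/\delta)}}{\eps}\log\frac{n|Q|}{\beta}\right)\quad\text{for } T\le n.
\]
\Cref{lem:mwem:util} uses only those inequalities, so it gives error at most $n\sqrt{\log|\calD|/T}+\eta''$. We balance the two terms by taking
\[
T=\min\left\{n,\left\lceil n\eps\sqrt{\log|\calD|/\log(1/\delta)}\,/\log(n|Q|/\beta)\right\rceil\right\}.
\]
With this $T$, each term is of order
\[
\sqrt{n/\eps}\,(\log|\calD|)^{1/4}\log(1/\delta)^{1/4}\log(n|Q|/\beta)^{1/2},
\]
which is $\sqrt{n/\eps}\cdot\polylog$.

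\textbf{Boundary cases.} If the balancing value exceeds $n$, then $T=n$. The first term becomes $\sqrt{n\log|\calD|}$ and the second is $O(\sqrt{n}\log^{1/2}(1/\delta)\log(n|Q|/\beta)/\eps)$. Both are at most $\sqrt{n/\eps}\cdot\polylog$, because $\eps\le1$ and we may spend $\eps^{-1/2}$ of the $\eps^{-1}$ inside the polylog slack. If the balancing value is below $1$, then $n\eps\lesssim\polylog$, so the trivial bound of $2n$ on the error is already at most $\sqrt{n/\eps}\cdot\polylog$.

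\textbf{Main obstacle.} The delicate part is the bookkeeping. We must ensure that the second-order term of advanced composition is controlled for all admissible parameters, and that the rounding and edge cases of $T$ keep the final bound in the form $\sqrt{n/\eps}\cdot\polylog$. Everything else transfers verbatim from the pure-DP analysis.
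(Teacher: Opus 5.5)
Your proposal is correct and follows the paper's route: each of the $2T$ SVT-implemented steps gets budget $\Theta(\eps/\sqrt{T\log(1/\delta)})$, advanced composition is applied, the factor $T$ in $\eta'$ becomes $O(\sqrt{T\log(1/\delta)})$, and $T$ is rebalanced at $T\asymp n\eps\sqrt{\log|\calD|/\log(1/\delta)}/\log(n|Q|/\beta)$. One small slip: in the case $T=n$, $\eps\le1$ does not let you absorb $\eps^{-1/2}$ into the polylog; instead use that $\eta''$ is increasing in $T$, so $\eta''(n)\le\eta''(T^\ast)$ for the balancing value $T^\ast>n$ (equivalently, this regime forces $1/\eps<\sqrt{\log|\calD|/\log(1/\delta)}/\log(n|Q|/\beta)$, which is polylogarithmic).
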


\section*{Acknowledgments}
AI tools were used to assist in improving the clarity and presentation of some proofs. 
All mathematical arguments were reviewed and verified by the authors.

\bibliography{references}
\bibliographystyle{alpha}

\newpage
\appendix

\section{Proof of the Lower Bound for Selection with Gaussian Aggregation}
\label{sec:lower-bound-proof}

Our lower bound relies on a lower bound for solving the problem in the standard setting of distributed multi-party protocols. Specifically, we use the standard number-in-hand blackboard communication model. Each user initially knows only its own input and its private randomness. The users communicate by writing messages to a common transcript, where each message may depend on the sender's input, the sender's private randomness, public randomness, and the previous transcript. The full transcript is denoted by $M$; we include public randomness and any randomness used for the final decision in $M$, so the output $\widehat Z$ is a function of $M$.
Throughout this appendix, $B(2)$ denotes the uniform distribution on $\{1,2\}$.

\begin{theorem}
\label{sbm-mi-lower-bound}
For every $2/3$-accurate protocol for \probref{sbm-distinguishing}, its transcript $M$ satisfies
\[
    I\left(M; X^{\le n} \,\middle|\, V, Z\right) = \Omega(d).
\]
\end{theorem}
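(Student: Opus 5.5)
The proof follows the Braverman et al.\ / Duchi--Rogers framework: reduce to a single-coordinate testing problem, use a cut-and-paste property of blackboard protocols, and sum a direct-sum bound over coordinates. Since $(V,Z)$ is independent across users given the mixture, the conditional mutual information decomposes across coordinates.

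\textbf{Step 1 (coordinate-wise information).} Fix $V=v$. For $Z=2$ the coordinates $X^i_j$, $i\le n$, $j\in[d]$, are i.i.d.\ uniform bits. For $Z=1$ the same holds except that coordinate $v$ is fixed to $1$ for every user. Let $X_j=(X^1_j,\dots,X^n_j)$ denote the column $j$. Under $Z=2$ all columns are independent, so superadditivity of mutual information for independent inputs gives
\[
I(M;X^{\le n}\mid V=v,Z=2)\ \ge\ \sum_{j} I(M;X_j\mid V=v,Z=2).
\]
Under $Z=2$ the law does not depend on $v$, so this equals $\sum_j I_j$ with $I_j:=I(M;X_j\mid Z=2)$. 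Since $Z=2$ has probability $1/2$, it suffices to show $\sum_j I_j=\Omega(d)$. (The $Z=1$ branch only adds a nonnegative term.)

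\textbf{Step 2 (information lower bounds the distinguishing advantage).} Let $P_2$ be the transcript law under $Z=2$ and $P_{1,v}$ the law under $Z=1,V=v$. A $2/3$-accurate protocol must satisfy $\frac1d\sum_v \|P_{1,v}-P_2\|_{TV}\ge 1/3$. The key claim is a per-coordinate bound
\[
h^2(P_{1,v},P_2)\ \lesssim\ I(M;X_v\mid Z=2),
\]
where $h^2$ is the squared Hellinger distance. In a blackboard protocol the probability of a transcript factors as a product over users of functions of their own inputs (the rectangle property), so the transcript law is a product structure in the $n$ users. Conditioning a single column $X_v$ to be all ones is the same as tilting each user's factor by the ratio $\Pr[M\mid X^i_v=1]/\Pr[M]$.

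This is the step I expect to be the main obstacle. The plan is to follow Lemma~2 of Braverman et al.\ / Duchi--Rogers. First apply the cut-and-paste lemma, $h(P_{x},P_{y})=h(P_{x'},P_{y'})$ for rectangle swaps, to interpolate user by user between $X_v\equiv 1$ and uniform bits. Then use the Hellinger triangle inequality (or its $\ell_2$ version) to bound
\[
h^2(P_{1,v},P_2)\ \lesssim\ \sum_i h^2\bigl(P(M\mid X^i_v=1),\,P(M)\bigr).
\]
Each summand is bounded through $h^2\le \mathrm{KL}$, and in turn by the per-user contribution to $I(M;X^i_v\mid Z=2)$, since the uniform-bit mutual information equals the average KL to the mixture. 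Summing over $i$ and using independence across users under $Z=2$ gives $\sum_i I(M;X^i_v)\le I(M;X_v)$. The subtle point is that conditioning on $X^i_v=1$ is only a factor-$2$ change of measure relative to uniform, so the averaging loses only constants. A naive telescoping of the interpolation costs a factor $n$, so I would instead use the ``$\chi^2$/Hellinger tensorization over a product-rectangle'' argument of Braverman et al.\ to avoid it.

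\textbf{Step 3 (conclude).} Combining the pieces,
\[
\frac13\ \le\ \frac1d\sum_v \|P_{1,v}-P_2\|_{TV}\ \le\ \frac{\sqrt2}{d}\sum_v h(P_{1,v},P_2)\ \lesssim\ \frac1d\sum_v\sqrt{I_v}\ \le\ \sqrt{\frac1d\sum_v I_v}
\]
by Cauchy--Schwarz. This yields $\sum_v I_v=\Omega(d)$, and by Step 1, $I(M;X^{\le n}\mid V,Z)=\Omega(d)$.
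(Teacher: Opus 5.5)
Your proof is correct and follows the same route as the paper. Both arguments use superadditivity of mutual information over the independent columns and then over the users under $Z=2$, Jayram's multi-party cut-and-paste inequality $\sum_i h^2(M_{\bar 2^{i\leftarrow 1}},M_{\bar 2})\ge C\,h^2(M_{\bar 1},M_{\bar 2})$, and $\TV\le\sqrt2\,h$ with Jensen/Cauchy--Schwarz for the accuracy step. Cite Jayram's inequality as a black box, as the paper does: the cut-and-paste-plus-triangle-inequality sketch you describe would incur the factor $n$ you mention.

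The one local difference is your per-user bound $I(M;X^i_v\mid Z=2)\ge\frac12\KL(M_{\bar 2^{i\leftarrow 1}}\,\|\,M_{\bar 2})\ge h^2(M_{\bar 2^{i\leftarrow 1}},M_{\bar 2})$, which follows because $M_{\bar 2}$ is the even mixture of the two conditional transcript laws given $X^i_v$. This is slightly more direct than the paper's mixed experiment with $Z_i\sim B(2)$ plus the domination lemma. It is valid here because under $Z=1$ a user's input law is just the uniform law conditioned on $X^i_v=1$.
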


Our proof follows the approach in \cite{braverman2016communication}, which established a lower bound for the sparse Gaussian mean estimation problem (see also \cite{duchirogers2019lower}). The key lemma in their result is Jayram's Hellinger cut-and-paste inequality that we state below, where $h$ denotes Hellinger distance. We use the normalization $h^2(P,Q)=\frac12\int(\sqrt{p}-\sqrt{q})^2$ and natural logarithms in the KL divergence. With this normalization, for $R=(P+Q)/2$ we have $\KL(P\|R)\ge 2h^2(P,R)$ and therefore $\frac12\KL(P\|R)+\frac12\KL(Q\|R)\ge h^2(P,R)+h^2(Q,R)\ge\frac12\left(h(P,R)+h(Q,R)\right)^2\ge\frac12 h^2(P,Q)$, where the last step is the triangle inequality for $h$.

\begin{lemma}[Jayram's cut-and-paste inequality {\cite[Theorem 7]{Jayram09}; see also \cite[Theorem A.1]{braverman2016communication}}]
\lemlab{sbm-jayram-general}
There is an absolute constant $C>0$ such that the following holds. Consider any $n$-user blackboard protocol. For each user $i$, let $\nu^i_1,\nu^i_2$ be two distributions over that user's input space. For $r\in\{1,2\}^n$, let $P_r$ be the transcript distribution when the users' inputs are sampled independently as $X^i\sim \nu^i_{r_i}$. Then, for any $a,b\in\{1,2\}^n$,
\[
    \sum_{i=1}^n h^2\left(P_{a^{i\leftarrow b_i}}, P_a\right) \ge C\cdot h^2\left(P_b, P_a\right).
\]
\end{lemma}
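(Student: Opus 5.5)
The plan is to work directly with the rectangle structure of blackboard transcripts and reduce the inequality to a statement about nonnegative random variables under $P_a$. First I would verify the standard factorization: since each message depends only on the sender's input, private randomness, public randomness and the prior transcript, for every transcript $m$ and every $r\in\{1,2\}^n$ we have
\[
P_r(m)=\prod_{i=1}^n q_i(m,r_i),
\]
where $q_i(m,r_i)$ is the probability, under $X^i\sim\nu^i_{r_i}$ and user $i$'s private randomness, that user $i$ sends its messages in $m$ given the prefixes (the public randomness factor is absorbed into $q_1$). Writing $x_i(m)=\sqrt{q_i(m,b_i)/q_i(m,a_i)}$ on the support of $P_a$, the Bhattacharyya coefficient $1-h^2$ becomes an expectation under $P_a$. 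We get $1-h^2(P_{a^{i\leftarrow b_i}},P_a)=\E_{P_a}[x_i]$ and $1-h^2(P_b,P_a)=\E_{P_a}[\prod_i x_i]$, up to a nonnegative correction from transcripts outside $\mathrm{supp}(P_a)$, which only helps. The key extra information is the normalization identity $\E_{P_a}[\prod_{i\in S}x_i^2]\le 1$ for every $S\subseteq[n]$, since this is the mass of $P_{a^{S\leftarrow b_S}}$ on $\mathrm{supp}(P_a)$. In particular $\E[(1-x_i)^2]\le 2(1-\E x_i)=2h_i^2$, where $h_i:=h(P_{a^{i\leftarrow b_i}},P_a)$. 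The lemma is thus the inequality $1-\E[\prod_i x_i]\le C^{-1}\sum_i(1-\E x_i)$ for such families.

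Next I would handle the two-party case, which contains the genuine cut-and-paste phenomenon. Grouping the users into two blocks $A,B$ preserves the rectangle property, so the factorization above still applies with two factors. For two parties, the factorization gives the exact identity $h(P_{a_Ab_B},P_{b_Ab_A'})$-type cut-and-paste equalities, namely $h(P_{(a_A,b_B)},P_{(b_A,a_B)})=h(P_{a},P_{b})$. Combining this with the triangle inequality and the inequality $\tfrac12\KL(P\|R)+\tfrac12\KL(Q\|R)\ge\tfrac12h^2(P,Q)$-style averaging already recorded in the paper, I expect to obtain
\[
h^2(P_b,P_a)\le c_0\left(h^2(P_{a^{A\leftarrow b_A}},P_a)+h^2(P_{a^{B\leftarrow b_B}},P_a)\right).
\]
This is the same as the $n$-user statement with $n=2$.

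The main obstacle is passing from two blocks to $n$ users without losing a factor depending on $n$. Naive recursive halving multiplies the constant by $c_0$ at each of $\log n$ levels. My first attempt would be the telescoping expansion
\[
1-\E\Big[\prod_i x_i\Big]=\sum_{k}\E\Big[\prod_{j<k}x_j\,(1-x_k)\Big].
\]
I would write $\prod_{j<k}x_j=1+(\prod_{j<k}x_j-1)$ and bound the cross term by Cauchy--Schwarz, using the normalization identities to control $\E[(\prod_{j<k}x_j-1)^2]$ by $O(h^2(P_{a^{<k\leftarrow b}},P_a))$. This yields a self-bounding recursion of the form $H_k\le H_{k-1}+2h_k^2+O(h_k\sqrt{H_{k-1}})$, which alone only gives $(\sum_i h_i)^2$. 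So the step I would really need is an additional orthogonality argument. Following Jayram, the rectangle structure should make the increments $\prod_{j<k}x_j(1-x_k)$ behave like martingale differences with respect to the transcript filtration, so that the cross terms cancel in expectation rather than being merely bounded. If that fails, I would fall back to citing \cite[Theorem 7]{Jayram09} for this uniform-constant step.
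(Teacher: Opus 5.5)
Your preliminary steps are correct. These are the rectangle factorization, the identities $1-h^2(P_{a^{i\leftarrow b_i}},P_a)=\E_{P_a}[x_i]$ and $1-h^2(P_b,P_a)=\E_{P_a}[\prod_i x_i]$, the normalization bounds, and the two-block case. The identities are in fact exact, since the Bhattacharyya coefficient gets no contribution from outside $\mathrm{supp}(P_a)$. The two-block case holds with $c_0=2$ from the cut-and-paste identity and the triangle inequality alone; the KL averaging plays no role.

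The whole content of the lemma, however, is that $C$ does not depend on $n$, and that step is missing. Your telescoping recursion is $H_k\le(\sqrt{H_{k-1}}+h_k)^2$, i.e.\ $h(P_b,P_a)\le\sum_i h_i$, which only gives $C=1/n$. This loss matters for the paper. Inserted into the appendix chain, it yields only $I(M;X^{\le n}\mid Z=2)=\Omega(d/n)$, hence $n^3\rho=\Omega(d)$, i.e.\ $n=\Omega((d/\rho)^{1/3})$ instead of $\Omega(\sqrt{d/\rho})$.

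The martingale heuristic you propose for closing the gap does not hold as stated. The increments $\prod_{j<k}x_j(1-x_k)$ are indexed by users, not by rounds. Each $x_k$ depends on essentially the whole transcript, because user $k$'s messages are interleaved with, and conditioned on, those of the other users. So there is no filtration in which these increments are martingale differences.

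Concretely, for $n=2$ your cross term is $\E_{P_a}[(x_1-1)(1-x_2)]=h^2(P_b,P_a)-h_1^2-h_2^2$, which need not vanish and can be positive. Take user 1 announcing a bit $X^1$ with $\Pr[X^1=1]=p$ under $\nu^1_{a_1}$ and $p'$ under $\nu^1_{b_1}$, and user 2 announcing $X^2$ only when $X^1=1$. Then the cross term equals $\gamma(\sqrt{pp'}-p)$ with $\gamma=h^2(\nu^2_{b_2},\nu^2_{a_2})$, which is positive whenever $p'>p$.

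The protocol structure does give something: the likelihood-ratio martingales $x_i^2$ of different users move in disjoint rounds. But the inequality concerns the square roots $x_i$, which are only supermartingales. Write $x_i=N_iA_i$, where $N_i$ is a $P_a$-martingale and $A_i$ is the predictable product of user $i$'s one-round Bhattacharyya coefficients. Then $\E_{P_a}[\prod_i x_i]=\E_Q[\prod_i A_i]$ with $dQ=\prod_i N_i\,dP_a$, while $\E_{P_a}[x_i]=\E_{Q_i}[A_i]$ with $dQ_i=N_i\,dP_a$. The same compensators are therefore averaged under different tilted measures. Bounding that discrepancy with an $n$-independent constant is exactly what the cited theorem supplies.

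As written, your direct argument proves only the $n=2$ (equivalently $C=1/n$) version. The actual statement rests on your fallback citation of \cite[Theorem 7]{Jayram09}. That is all the paper does as well: it imports the lemma from \cite{Jayram09} and \cite[Theorem A.1]{braverman2016communication} without giving its own proof.
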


\begin{proof}[Proof of Theorem~\ref{sbm-mi-lower-bound}]
Let $X^{\le n}_j=(X^1_j,\ldots,X^n_j)$. We first relate the information in the randomly chosen coordinate to the information in the full input under $Z=2$.
\begin{equation}
\begin{aligned}
    I\left(M; X^{\le n}_V \,\middle|\, V, Z=2\right) &= \frac{1}{d} \sum_{j=1}^d I\left(M; X^{\le n}_j \,\middle|\, V=j, Z=2\right)\\
        &= \frac{1}{d} \sum_{j=1}^d I\left(M; X^{\le n}_j \,\middle|\, Z=2\right)\\
        &\le \frac{1}{d} I\left(M; X^{\le n} \,\middle|\, Z=2\right).
\end{aligned}
\eqnlab{sbm:coord-to-full}
\end{equation}
The second equality follows from the fact that when $Z=2$, the transcript and inputs are independent of $V$. The inequality follows from the standard chain rule for mutual information \cite[Section~2.5]{coverthomas2006elements}: if $Y_1,\ldots,Y_m$ are mutually independent conditioned on $W$, then $\sum_{\ell=1}^m I\left(Y_\ell; M \,\middle|\, W\right)\le I\left(Y_1,\ldots,Y_m; M \,\middle|\, W\right)$. We apply this with the independent coordinates $X^{\le n}_1,\ldots,X^{\le n}_d$ conditioned on $Z=2$.

Applying the same property to the independent user coordinates $X^1_V,\ldots,X^n_V$ conditioned on $(V,Z=2)$,
\begin{equation}
I\left(M; X^{\le n}_V \,\middle|\, V, Z=2\right) \ge \sum_{i=1}^n I\left(M; X^i_V \,\middle|\, V, Z=2\right).
\eqnlab{sbm:user-decomposition}
\end{equation}

Next fix a user $i\in \{1,\ldots, n\}$ and a coordinate $V=j$, and analyze $I\left(M; X^i_V \,\middle|\, V=j, Z=2\right)$. Let $b$ be a vector in $\{1,2\}^n$. Let $M_{b}$ denote the distribution of the transcript $M$ when, for every user $k$, the input coordinate $X^k_j$ is sampled according to state $b_k$; that is, if $b_k=1$ then $X^k_j=1$, while if $b_k=2$ then $X^k_j$ is sampled uniformly from $\{0,1\}$. All other coordinates are sampled independently and uniformly from $\{0,1\}$. Let $D_b$ denote the corresponding distribution of the full inputs $X^{\le n}$.

First consider the setting where $Z_i\sim B(2)$, user $i$ receives the input $X^i_j$ sampled according to $Z_i$, and the remaining $n-1$ users still receive inputs sampled using $Z=2$. Since all other coordinates are uniform independently of $Z_i$, we have the Markov chain $Z_i \to X^i_j \to M$ in this experiment. By data processing and the comparison between the Jensen-Shannon divergence and the Hellinger distance,
\begin{equation}
\begin{aligned}
I_{Z_i\sim B(2),X^{\le n}\sim D_{\bar 2^{i\leftarrow Z_i}}}\left(M; X^i_j \,\middle|\, V=j\right)&\ge I_{Z_i\sim B(2),X^{\le n}\sim D_{\bar 2^{i\leftarrow Z_i}}}\left(M; Z_i \,\middle|\, V=j\right)\\
&\hspace{-0.5in}=\E_{z\sim B(2)}\left[\KL\left(M_{\bar 2^{i\leftarrow z}} \,\middle\|\, \E_{Z_i} M_{\bar 2^{i\leftarrow Z_i}}\right)\right]\\
&\hspace{-0.5in}=\frac{1}{2}\KL\left(M_{\bar 2^{i\leftarrow 1}} \,\middle\|\, \frac{1}{2}M_{\bar 2^{i\leftarrow 1}}+\frac{1}{2}M_{\bar 2}\right) +\frac{1}{2}\KL\left(M_{\bar 2} \,\middle\|\, \frac{1}{2}M_{\bar 2^{i\leftarrow 1}}+\frac{1}{2}M_{\bar 2}\right)\\
&\hspace{-0.5in}\ge \tfrac12 h^2(M_{\bar 2^{i\leftarrow 1}},M_{\bar 2}) \ .
\end{aligned}
\eqnlab{sbm:one-coordinate-hellinger}
\end{equation}

Observe that, in the experiment with $Z_i\sim B(2)$, $\Pr\left[X^i_j = x \,\middle|\, Z_i=2\right] \ge \frac{2}{3} \Pr\left[X^i_j = x\right]$ for all $x\in\{0,1\}$, so we can apply the following lemma.

\begin{lemma}[{\cite[Lemma 5]{braverman2016communication}}]
\lemlab{sbm:domination}
Suppose two distributions $\mu, \mu'$ satisfy $\mu \ge c\cdot \mu'$. Let $\Pi(X)$ be a random function that only depends on $X$. If $X\sim \mu$ and $X'\sim \mu'$, then
\[
    I(X; \Pi(X)) \ge c\cdot I(X'; \Pi(X')).
\]
\end{lemma}

Applying this lemma gives
\begin{equation}
I_{X^{\le n}\sim D_{\bar 2}}\left(M; X^i_j \,\middle|\, V=j, Z=2\right)\ge \frac{2}{3} I_{Z_i\sim B(2),X^{\le n}\sim D_{\bar 2^{i\leftarrow Z_i}}}\left(M; X^i_j \,\middle|\, V=j\right).
\eqnlab{sbm:lemma-i2}
\end{equation}

We next use \lemref{sbm-jayram-general} to pass from changing one user to changing all users. For the fixed coordinate $j$, apply \lemref{sbm-jayram-general} with $a=\bar 2$ and $b=\bar 1$ to the product input distributions defining $M_b$. This gives
\begin{equation}
    \sum_{i=1}^n h^2 (M_{\bar 2^{i\leftarrow 1}}, M_{\bar 2}) \ge C\cdot h^2 (M_{\bar 1}, M_{\bar 2}).
\eqnlab{sbm:jayram}
\end{equation}
Taking expectation over random $V=j$ and combining the above inequalities, we get
\begin{equation}
\begin{aligned}
    \E_{V} \left[C\cdot h^2 (M_{\bar 1}, M_{\bar 2})\right] &\le \E_{V} \left[\sum_{i=1}^n h^2 (M_{\bar 2^{i\leftarrow 1}}, M_{\bar 2})\right]\\
    &\le 2\E_V\left[\sum_{i=1}^n I_{Z_i\sim B(2),X^{\le n}\sim D_{\bar 2^{i\leftarrow Z_i}}}\left(M; X^i_V \,\middle|\, V\right)\right]\\
    &\le 3\sum_{i=1}^n I\left(M; X^i_V \,\middle|\, V, Z=2\right)\\
    &\le \frac{3}{d} I\left(M; X^{\le n} \,\middle|\, Z=2\right).
\end{aligned}
\eqnlab{sbm:combined}
\end{equation}
The first inequality in \eqnref{sbm:combined} is \eqnref{sbm:jayram}, averaged over the random coordinate $V$. 
The second inequality applies the one-coordinate Jensen-Shannon/Hellinger bound \eqnref{sbm:one-coordinate-hellinger} to each user $i$ and coordinate $V$. 
The third inequality applies \lemref{sbm:domination} as in \eqnref{sbm:lemma-i2}: since $X^i_V$ under $Z=2$ dominates its distribution in the mixed experiment with $Z_i\sim B(2)$ by a factor $2/3$, the mixed mutual information is at most $\frac{3}{2}$ times the mutual information under $Z=2$. 
The last inequality uses \eqnref{sbm:user-decomposition} and \eqnref{sbm:coord-to-full}.

It remains to use the $2/3$-accuracy assumption. 
For a fixed coordinate $j$, let $P_j$ be the distribution of $M$ conditioned on $V=j$ and $Z=1$, and let $Q$ be the distribution of $M$ conditioned on $Z=2$. 
Under $Z=2$, the transcript distribution is independent of $V$, so $Q$ is also the distribution denoted by $M_{\bar 2}$ for every $j$, while $P_j$ is the distribution denoted by $M_{\bar 1}$ for that $j$. 
Let $\bar P=\E_{V}P_V$ be the transcript distribution conditioned on $Z=1$. If $A=\{m:\widehat Z(m)=1\}$, then $2/3$-accuracy gives
\begin{align*}
\bar P(A)-Q(A) \ge \frac{1}{3},
\end{align*}
and hence $\TV(\bar P,Q)\ge 1/3$. Therefore
\begin{align*}
\frac{1}{3}\le \TV(\bar P,Q)\le \E_V\left[\TV(P_V,Q)\right]\le \sqrt{2}\,\E_V\left[h(P_V,Q)\right],
\end{align*}
By Jensen's inequality this implies that
\begin{equation}
\E_V\left[h^2(M_{\bar 1},M_{\bar 2})\right] = \Omega(1).
\eqnlab{sbm:accuracy-hellinger}
\end{equation}

Combining \eqnref{sbm:combined} and \eqnref{sbm:accuracy-hellinger} gives
\[
    I\left(M; X^{\le n} \,\middle|\, Z=2\right) = \Omega(d).
\]
Finally, conditioned on $Z=2$, the transcript $M$ and input $X^{\le n}$ are independent of $V$, so
\[
    I\left(M; X^{\le n} \,\middle|\, V,Z=2\right)=I\left(M; X^{\le n} \,\middle|\, Z=2\right).
\]
Since $Z\sim B(2)$,
\[
    I\left(M; X^{\le n} \,\middle|\, V,Z\right)\ge \frac{1}{2}I\left(M; X^{\le n} \,\middle|\, V,Z=2\right)= \Omega(d),
\]
as claimed.
\end{proof}

To relate the lower bound in Theorem~\ref{sbm-mi-lower-bound} to private aggregation we recall that differential privacy limits mutual information.

For this purpose we will need only privacy with respect to KL divergence (or Renyi DP of order 1). We say that an algorithm $\calA : \calX^n \to Y$ is $\rho$-KL stable if for any pair of neighboring datasets $S,S'$, we have
$\KL(\calA(S)\|\calA(S'))\le \rho$.


\begin{lemma}[\cite{feldmansteinke2017generalization}]
\lemlab{kl-stable-mi}
    Assume that the transcript of a communication protocol viewed as the output of an algorithm acting on the users' inputs is $\rho$-KL stable. Then for any 
    distribution $D$ over $\calX$, if the users' inputs $X^{\le n}$ are sampled i.i.d. from $D$ then we have
    \[ I\left(M; X^{\le n}\right) \le \rho n. \]
\end{lemma}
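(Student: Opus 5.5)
The plan is to prove the lemma by comparing every conditional output distribution to a single reference distribution that does not depend on the sample. First, write the mutual information as an expected KL divergence:
\[
I(M;X^{\le n}) = \E_{X^{\le n}}\left[\KL\left(P_{M\mid X^{\le n}} \,\middle\|\, P_M\right)\right],
\]
where $P_M$ is the marginal law of the transcript. The key observation is that for any fixed distribution $Q$ over transcripts,
\[
\E_{X^{\le n}}\left[\KL\left(P_{M\mid X^{\le n}}\,\middle\|\,Q\right)\right] = I(M;X^{\le n}) + \KL(P_M\|Q) \ge I(M;X^{\le n}).
\]
So it suffices to find some $Q$ for which the left-hand side is at most $\rho n$. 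I would take $Q=\calA(Y^{\le n})$ with $Y^{\le n}\sim D^n$ an independent copy of the input. To keep $Q$ a fixed distribution, the natural choice is to average over $Y$; the convexity of KL in its second argument then lets me replace the mixture by a pointwise bound. Concretely, $\KL(P\|\E_Y Q_Y)\le \E_Y \KL(P\|Q_Y)$, so
\[
I(M;X^{\le n}) \le \E_{X,Y}\left[\KL\left(\calA(X)\,\middle\|\,\calA(Y)\right)\right].
\]

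The main step is to bound $\KL(\calA(X)\|\calA(Y))$ for two independent samples, which differ in up to $n$ coordinates. Here I would not use group privacy directly, because KL does not satisfy a triangle inequality. Instead I would use the chain rule for KL over the transcript's structure, via a hybrid argument. The plan is to exploit the product structure of $D^n$: go back to the identity $I(M;X^{\le n})=\sum_i I(M;X^i\mid X^{<i})$. For each term, condition on $X^{<i}$ and compare against an independent fresh copy $Y^i$ of user $i$'s input, with the other later users drawn i.i.d. as well. This gives
\[
I(M;X^i\mid X^{<i}) \le \E\left[\KL\left(P_{M\mid X^{\le i}} \,\middle\|\, P_{M\mid X^{<i},X^i\leftarrow Y^i}\right)\right].
\]
Both sides marginalize over $X^{>i}$ in the same way, so by joint convexity of KL this is at most
\[
\E\left[\KL\left(\calA(X)\,\middle\|\,\calA(X^{i\leftarrow Y^i})\right)\right]\le\rho,
\]
since $X$ and $X^{i\leftarrow Y^i}$ are neighboring datasets. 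Summing the $n$ terms gives $\rho n$.

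The hybrid step is the only delicate point. The comparison distribution $P_{M\mid X^{<i}}$ must be written as a mixture over $Y^i$ and $X^{>i}$ so that joint convexity applies. This uses exactly that the inputs are i.i.d., so that swapping user $i$'s input for a fresh draw from $D$ leaves the marginal unchanged. Everything else is routine.
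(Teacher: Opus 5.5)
Your argument is correct and is essentially the standard proof of the Feldman--Steinke bound; the paper itself cites it without proof. That proof applies the chain rule over users, replaces user $i$'s input by a fresh independent draw from $D$ (so the law of $M$ given $X^{<i}$ becomes a mixture), and then uses joint convexity of KL together with $\rho$-KL stability on the neighboring pair $(X, X^{i\leftarrow Y^i})$. The global comparison with $\mathcal{A}(Y^{\le n})$ in your first paragraph is a detour you rightly abandon (in general $\mathbb{E}_{X,Y}[\mathrm{KL}(\mathcal{A}(X)\|\mathcal{A}(Y))]$ is not $O(\rho n)$), so you can simply delete it.
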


We can now complete the proof of Theorem~\ref{thm:main-lower-bound} using Lemma \ref{lem:local-zcdp}.

\begin{proof}[Proof of Theorem~\ref{thm:main-lower-bound}]
Let $\calA$ be a $2/3$-accurate algorithm for \probref{sbm-distinguishing} that has access to $\calG(S)$ and satisfies compositional $\rho$-zCDP. By \lemref{local-zcdp}, there is a local $n\rho$-zCDP blackboard protocol $\mbox{Dist-}\calA$ whose output has the same distribution as the output of $\calA$ on every input dataset. In particular, when run on the random input $X^{\le n}$ from \probref{sbm-distinguishing}, $\mbox{Dist-}\calA$ is also $2/3$-accurate. Let $M$ denote its full transcript, including public randomness and the final decision randomness as above.

By Theorem~\ref{sbm-mi-lower-bound},
\[
    I\left(M;X^{\le n}\,\middle|\,V,Z\right)=\Omega(d).
\]
On the other hand, the local $n\rho$-zCDP guarantee implies that the transcript is $n\rho$-KL stable with respect to changing one user's input. Fix any $v\in[d]$ and $z\in\{1,2\}$. Conditioned on $(V,Z)=(v,z)$, the users' inputs are i.i.d. samples from $D_{z,v}$. Hence \lemref{kl-stable-mi} gives
\[
    I\left(M;X^{\le n}\,\middle|\,V=v,Z=z\right)\le n^2\rho .
\]
Averaging over $V$ and $Z$ yields
\[
    I\left(M;X^{\le n}\,\middle|\,V,Z\right)\le n^2\rho .
\]
Combining the two inequalities gives $n^2\rho=\Omega(d)$, and therefore
\[
    n=\Omega\left(\sqrt{\frac d\rho}\right),
\]
as claimed.
\end{proof}

\section{Analysis of MWEM in the Central Model}
\label{app:mwem}
We prove \Cref{lem:mwem:util} using the standard potential argument for MWEM \cite{HardtLM12}, with $\Delta_t$ and $\eta$ as defined before \Cref{lem:mwem:mult:add}.
We define the relative entropy by 
\[\Phi_t=\sum_{x\in\calD}\frac{B(x)}{n}\cdot\log\frac{B(x)}{A_t(x)}.\]
We utilize the following two standard properties of relative entropy:
\begin{fact}[Non-negativity of relative entropy]
\cite{HardtLM12}
\label{fact:entropy:nonneg}
$\Phi_t\ge 0$ for all $t\in[T]$
\end{fact}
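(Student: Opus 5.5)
This is the standard fact that relative entropy is non-negative. I will prove it directly with Jensen's inequality. There is one normalization to handle first: $B$ and each $A_t$ are histograms of total mass $n$. For $B$ this holds by definition. For $A_t$ it holds because the update rule in \Cref{alg:mwem} rescales to $n$ times a distribution, and $A_0$ is $n$ times the uniform distribution.

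Set $p(x)=B(x)/n$ and $q(x)=A_t(x)/n$. Both are probability distributions on $\calD$, and
\[
\Phi_t=\sum_{x}p(x)\log\frac{p(x)}{q(x)}=\KL(p\|q),
\]
with the convention $0\log 0=0$. Every update multiplies $A_{t-1}(x)$ by a strictly positive exponential factor, and $A_0$ is strictly positive. So $q(x)>0$ for all $x$, and every term of $\Phi_t$ is well defined and finite.

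Let $\mathrm{supp}(p)=\{x:p(x)>0\}$. Since $-\log$ is convex, Jensen's inequality gives
\[
-\Phi_t=\sum_{x\in\mathrm{supp}(p)}p(x)\log\frac{q(x)}{p(x)}\le\log\sum_{x\in\mathrm{supp}(p)}q(x)\le\log 1=0,
\]
which is the claim $\Phi_t\ge 0$.

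The only step that needs care is the normalization and strict positivity of $A_t$. Without it, the ratio $B(x)/A_t(x)$ could be undefined or the masses could differ, and then non-negativity could fail. Both properties follow immediately from the definition of the multiplicative-weights update.
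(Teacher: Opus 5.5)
Your proof is correct and matches the paper, which gives no proof and simply cites non-negativity of relative entropy (Gibbs' inequality), whose standard proof is the Jensen argument you wrote. Your check that $B/n$ and $A_t/n$ are probability distributions with $A_t(x)>0$ everywhere is exactly what identifies $\Phi_t$ with a genuine KL divergence, and the paper leaves this step implicit.
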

\begin{fact}[Jensen's inequality for relative entropy]
\cite{HardtLM12}
\label{fact:entropy:jensen}
$\Phi_0\le\log|\calD|$
\end{fact}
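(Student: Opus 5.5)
The plan is a direct computation: since $A_0$ is the uniform histogram, $\Phi_0$ is the relative entropy of the normalized data distribution from uniform, and this is at most $\log|\calD|$ because Shannon entropy is nonnegative.

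\emph{Step 1 (normalize).} Write $p(x)=B(x)/n$. Since $B$ is a histogram of $n$ records, $p$ is a probability distribution on $\calD$: $p(x)\ge 0$ and $\sum_{x}p(x)=1$. By the initialization of \Cref{alg:mwem}, $A_0(x)=n/|\calD|$ for every $x\in\calD$.

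\emph{Step 2 (substitute).} With the convention $0\log 0=0$, terms with $B(x)=0$ vanish. Substituting $A_0$ into the definition of $\Phi_0$ gives
\[
\Phi_0=\sum_{x\in\calD}p(x)\log\frac{p(x)}{1/|\calD|}
=\log|\calD|\sum_{x}p(x)+\sum_{x}p(x)\log p(x)
=\log|\calD|-H(p),
\]
where $H(p)=-\sum_x p(x)\log p(x)$ is the Shannon entropy of $p$.

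\emph{Step 3 (conclude).} Each $p(x)$ lies in $[0,1]$, so every term $-p(x)\log p(x)$ is nonnegative. Hence $H(p)\ge 0$, and therefore $\Phi_0\le\log|\calD|$.

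Equivalently, one can phrase the bound as Jensen's inequality for the concave function $\log$: $\sum_x p(x)\log\frac{1}{p(x)}\ge 0$ because each summand is nonnegative.

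There is no real obstacle here. The only care needed is the $0\log 0=0$ convention for domain elements absent from $B$, and using the same logarithm base in $\Phi_t$ as in the $\log|\calD|$ bound.
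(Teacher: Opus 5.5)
Your proof is correct. With $p=B/n$ and $A_0(x)=n/|\calD|$ you get exactly $\Phi_0=\log|\calD|-H(p)$, and $H(p)\ge 0$ finishes it.

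The paper gives no argument of its own, only the citation to \cite{HardtLM12} under the label ``Jensen's inequality''. Your route instead uses the termwise bound $-p(x)\log p(x)\ge 0$, which is more elementary and entirely sufficient.

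One correction to your closing aside: that termwise bound is not an instance of Jensen's inequality. Jensen for the concave $\log$, applied to $H(p)=\sum_x p(x)\log\frac{1}{p(x)}$, gives the opposite inequality $H(p)\le\log|\calD|$, i.e.\ $\Phi_0\ge 0$. A genuinely Jensen-based proof of the stated bound would read $\Phi_0=\sum_x p(x)\log\bigl(|\calD|\,p(x)\bigr)\le\log\bigl(|\calD|\sum_x p(x)^2\bigr)\le\log|\calD|$. Here the sums run over the support of $p$, and the last step uses $\sum_x p(x)^2\le\sum_x p(x)=1$. This affects only the naming in your aside, not the validity of your proof.
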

We also have the following well-known fact for the exponential function:
\begin{fact}
\cite{HardtLM12}
\label{fact:exp:bound}
For all $x\in[-1,1]$, we have $\exp(x)\le1+x+x^2$. 
\end{fact}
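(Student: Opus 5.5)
Consider $g(x)=1+x+x^2-e^x$ on $[-1,1]$. The plan is to show that $g$ attains its minimum at $x=0$, where $g(0)=0$. This gives $g\ge 0$ on the whole interval, which is exactly the statement of \Cref{fact:exp:bound}.

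First compute the derivatives: $g'(x)=1+2x-e^x$ and $g''(x)=2-e^x$. On $[-1,\ln 2)$ we have $g''>0$, so $g'$ is strictly increasing there. Since $g'(0)=0$, this gives $g'<0$ on $[-1,0)$ and $g'>0$ on $(0,\ln 2)$. Hence $g$ decreases on $[-1,0]$ and increases on $[0,\ln 2]$, so $g\ge g(0)=0$ on $[-1,\ln 2]$.

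On $[\ln 2,1]$ we have $g''\le 0$, so $g$ is concave there. A concave function on an interval is at least the smaller of its endpoint values, so it suffices to check both endpoints. We have $g(\ln 2)\ge 0$ from the previous step, and $g(1)=3-e>0$. Therefore $g\ge 0$ on $[\ln 2,1]$ as well.

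An alternative route is the Taylor series: for $|x|\le 1$, $e^x-1-x=\sum_{k\ge2}x^k/k!\le x^2\sum_{k\ge2}1/k!=x^2(e-2)\le x^2$. This uses $|x|^k\le x^2$ for $k\ge 2$. The main point to get right in the calculus route is handling the concave piece past $\ln 2$, which the endpoint check settles. The series argument avoids that case split entirely, so it is the one I would write out.
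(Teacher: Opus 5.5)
Both of your arguments are correct. In the calculus route, the concave piece on $[\ln 2,1]$ is handled properly, since $g(\ln 2)=\ln 2+(\ln 2)^2-1>0$ and $g(1)=3-e>0$. The series route is valid because $x^k\le |x|^k\le x^2$ for $|x|\le 1$ and $k\ge 2$, and it even gives the sharper bound $e^x\le 1+x+(e-2)x^2$.

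The paper does not prove this fact; it only cites \cite{HardtLM12}, so there is no argument of its own to compare against. Its proof of \Cref{lem:mwem:phi:diff} in fact bypasses this fact and uses Hoeffding's lemma instead. Hoeffding's lemma bounds $\log\beta_t$ for every real $\alpha_t$. Applying $e^x\le 1+x+x^2$ there would instead require $|\alpha_t|\le 1$, i.e.\ $|m_t-q_t(A_{t-1})|\le 2n$, which the Laplace-noised measurement need not satisfy.
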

To bound the error $|q_t(A)-q_t(B)|$ in each round, we show that the relative entropy decreases by some amount in each round.
\begin{lemma}
\label{lem:mwem:phi:diff}
For each $t\in[T]$, we have
\[\Phi_{t-1}-\Phi_t\ge\left(\frac{q_t(A_{t-1})-q_t(B)}{2n}\right)^2-\left(\frac{m_t-q_t(B)}{2n}\right)^2.\]
\end{lemma}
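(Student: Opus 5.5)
The plan is the standard one-step potential calculation from \cite{HardtLM12}. I will write $\Phi_{t-1}-\Phi_t$ as an expectation over $B$ of the log-ratio $A_t/A_{t-1}$. Then I will bound the normalizing constant using \Cref{fact:exp:bound}, and finish by completing the square.

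Set $r_t=\frac{m_t-q_t(A_{t-1})}{2n}$. Since $A_t$ has total mass $n$, the update rule of \Cref{alg:mwem} reads
\[A_t(x)=\frac{A_{t-1}(x)\exp(q_t(x)r_t)}{Z_t},\qquad Z_t=\frac{1}{n}\sum_{y\in\calD}A_{t-1}(y)\exp(q_t(y)r_t).\]
Because $B$ has total mass $n$, the definition of $\Phi$ gives
\[\Phi_{t-1}-\Phi_t=\sum_{x}\frac{B(x)}{n}\log\frac{A_t(x)}{A_{t-1}(x)}=\frac{r_t\,q_t(B)}{n}-\log Z_t .\]

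Next I bound $\log Z_t$ from above. If $|r_t|\le 1$, then $|q_t(y)r_t|\le1$ for every $y$, since $q_t(y)\in[-1,1]$. \Cref{fact:exp:bound} then gives $\exp(q_t(y)r_t)\le 1+q_t(y)r_t+r_t^2$, using $q_t(y)^2\le1$. Summing against $A_{t-1}/n$, which has mass $1$, yields
\[Z_t\le 1+\frac{r_t\,q_t(A_{t-1})}{n}+r_t^2 .\]
With $\log(1+z)\le z$ this becomes $\log Z_t\le r_tq_t(A_{t-1})/n+r_t^2$. Substituting into the identity above,
\[\Phi_{t-1}-\Phi_t\ge \frac{r_t\,(q_t(B)-q_t(A_{t-1}))}{n}-r_t^2 .\]

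Finally I complete the square. Let $u=r_t$ and $v=\frac{q_t(B)-q_t(A_{t-1})}{2n}$. The right-hand side equals $2uv-u^2=v^2-(u-v)^2$. Here $u-v=\frac{m_t-q_t(B)}{2n}$, and $v^2$ is exactly the first term of the claimed bound, since the square hides the sign. This gives the lemma.

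The main obstacle is justifying $|r_t|\le1$, i.e.\ $|m_t-q_t(A_{t-1})|\le 2n$, which the exponential inequality needs. Both $q_t(A_{t-1})$ and $q_t(B)$ lie in $[-n,n]$, because the queries are bounded by $1$ and both histograms have mass $n$. So it suffices to have $m_t\in[-n,n]$. I would handle this by projecting the measurement $m_t$ onto $[-n,n]$, which is post-processing and so does not affect privacy. The projection can only decrease $|m_t-q_t(B)|$, because $q_t(B)\in[-n,n]$. Hence the bound of \Cref{lem:mwem:mult:add} is preserved, and the lemma holds for the clipped $m_t$. Alternatively, the lemma holds unconditionally whenever $\eta\le n$ on the accuracy event, and the regime $\eta>n$ makes \Cref{lem:mwem:util} trivial anyway.
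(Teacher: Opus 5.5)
Your identity $\Phi_{t-1}-\Phi_t=\frac{r_tq_t(B)}{n}-\log Z_t$ and your final completion of the square match the paper's proof. The step bounding $\log Z_t$, however, has a genuine gap.

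\Cref{fact:exp:bound} requires $|q_t(y)r_t|\le 1$, and for \Cref{alg:mwem} as written nothing guarantees $|r_t|\le 1$. The measurement $m_t=q_t(B)+\Lap(4T/\eps)$ is unbounded, while the lemma is a deterministic inequality asserted for every $t$ and every realization of the algorithm. Neither of your workarounds proves that statement:
\begin{itemize}
\item Projecting $m_t$ onto $[-n,n]$ is legitimate post-processing and does preserve $|m_t-q_t(B)|\le\eta$. But it proves the lemma for a different algorithm.
\item The second workaround fails as written. On the accuracy event with $\eta\le n$ you only get $|m_t-q_t(A_{t-1})|\le\eta+|q_t(B)-q_t(A_{t-1})|\le 3n$, i.e.\ $|r_t|\le 3/2$, which lies outside the range of \Cref{fact:exp:bound}. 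Moreover, the lemma is not stated conditionally on that event.
\end{itemize}

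The missing idea is a bound on the normalizer that holds for every real $r_t$. This is what the paper does: it lists \Cref{fact:exp:bound} but does not actually rely on it. Since $Z_t=\E_{X\sim A_{t-1}/n}\left[\exp(r_tq_t(X))\right]$ with $q_t(X)\in[-1,1]$, Hoeffding's lemma gives
\[\log Z_t\le r_t\cdot\frac{q_t(A_{t-1})}{n}+\frac{r_t^2\cdot 2^2}{8}\le\frac{r_tq_t(A_{t-1})}{n}+r_t^2\qquad\text{for all } r_t\in\R.\]
You can also see this directly. Convexity of the exponential gives $e^{ry}\le\cosh r+y\sinh r$ for $y\in[-1,1]$, so $Z_t\le\cosh r_t+\mu\sinh r_t$ with $\mu=q_t(A_{t-1})/n\in[-1,1]$. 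The estimate $\log(\cosh r+\mu\sinh r)\le\mu r+r^2/2$ is the core of Hoeffding's lemma.

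Use this bound in place of your $|r_t|\le 1$ step. The rest of your argument, from $\Phi_{t-1}-\Phi_t\ge\frac{r_t(q_t(B)-q_t(A_{t-1}))}{n}-r_t^2$ through the completion of the square, then goes through verbatim with no condition on $m_t$.
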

\begin{proof}
By definition, we have
\[\Phi_{t-1}-\Phi_t=\sum_{x\in\calD}\frac{B(x)}{n}\log\frac{A_t(x)}{A_{t-1}(x)}.\]
Let $\alpha_t:=\frac{m_t-q_t(A_{t-1})}{2n}$ and $\beta_t$ be the normalization factor at step $t$. 
Then we have $\frac{A_t(x)}{A_{t-1}(x)}=\frac{1}{\beta_t}\cdot\exp(q_t(x)\cdot\alpha_t)$ so that
\[\Phi_{t-1}-\Phi_t=\frac{\alpha_t\cdot q_t(B)}{n}-\log\beta_t.\]
Since we have the normalization factor 
\[\beta_t=\sum_{x\in\calD}\frac{A_{t-1}(x)}{n}\cdot\exp(q_t(x)\cdot\alpha_t),\]
Note that $A_{t-1}/n$ is a probability distribution over $\calD$ since $\sum_xA_{t-1}(x)=n$. Hoeffding's lemma, applied to $q_t(X)\in[-1,1]$ for $X\sim A_{t-1}/n$, gives, for every real $\alpha_t$,
\[
\log\beta_t\le \frac{\alpha_t q_t(A_{t-1})}{n}+\frac{\alpha_t^2}{2}
\le \frac{\alpha_t q_t(A_{t-1})}{n}+\alpha_t^2.
\]
Therefore,
\[\Phi_{t-1}-\Phi_t\ge\frac{\alpha_t}{n}\cdot\left(q_t(B)-q_t(A_{t-1})\right)-\alpha_t^2.\]
The desired claim then holds from the definition of $\alpha_t$. 
\end{proof}

\lemmwemutil*
\begin{proof}
Firstly, observe that since $\Delta_t=\max_{q\in Q}|q(A_{t-1})-q(B)|$, then 
\begin{align*}
\max_{q\in Q}|q(A)-q(B)|
&\le \frac1T\sum_{t=1}^T\Delta_t
+\frac1T\max_{q\in Q}|q(A_T)-q(A_0)|\\
&\le \frac1T\sum_{t=1}^T\Delta_t+\frac{2n}{T}.
\end{align*}
On the accuracy event of \Cref{lem:mwem:mult:add} we have that $\Delta_t\le|q_t(A_{t-1})-q_t(B)|+\eta$ and $|m_t-q_t(B)|\le\eta$ for all $t\in[T]$.
By \Cref{lem:mwem:phi:diff}, we have
\[\frac{1}{T}\sum_{t\in[T]}\Delta_t\preceq\left(\frac{n^2}{T}\cdot\sum_{t\in[T]}(\Phi_{t-1}-\Phi_t)+\eta^2\right)^{1/2}+\eta.\]
By a telescoping argument, observe that $\sum_{t\in[T]}(\Phi_{t-1}-\Phi_t)=\Phi_0-\Phi_T$.
Then by \Cref{fact:entropy:nonneg} and \Cref{fact:entropy:jensen}, we have
\[\frac{1}{T}\sum_{t\in[T]}\Delta_t\preceq\left(\frac{n^2\cdot\log(|\calD|)}{T}+\eta^2\right)^{1/2}+\eta.\]
Since $\sqrt{x+y}\le\sqrt{x}+\sqrt{y}$ and $2n/T=O(n\sqrt{\log|\calD|/T})$ for $|\calD|\ge2$ and $T\ge1$, putting things together gives
\[\max_{q\in Q}|q(A)-q(B)|\preceq n\cdot\sqrt{\frac{\log|\calD|}{T}}+\eta.\]
\end{proof}

\end{document}

%% file: preamble.tex
\pdfoutput=1
\usepackage[T1]{fontenc}
\usepackage{lmodern}
\usepackage{slantsc}
\usepackage[protrusion=true,expansion=true]{microtype}
\usepackage{breakcites}
\usepackage{amsmath,amssymb,amsfonts,amsthm}
\usepackage{subcaption}
\usepackage{graphicx}
\usepackage{fullpage}
\usepackage{setspace}
\usepackage[backref=page]{hyperref}
\usepackage[nameinlink]{cleveref}
\usepackage{color}
\usepackage{wrapfig}
\usepackage{tikz}
\usetikzlibrary{decorations.pathreplacing}
\usepackage{algorithm}
\usepackage[noend]{algpseudocode}
\usepackage[framemethod=tikz]{mdframed}
\usepackage{xspace}
\usepackage{pgfplots}
\usepackage{framed}
\usepackage{thmtools}
\usepackage{thm-restate}
\usepackage{tabu}
\usepackage{fancyhdr}
\usepackage{booktabs}
\pgfplotsset{compat=1.5}

\newtheorem{theorem}{Theorem}[section]
\newtheorem{corollary}[theorem]{Corollary}
\newtheorem{lemma}[theorem]{Lemma}

\newtheorem{definition}[theorem]{Definition}

\newtheorem{fact}[theorem]{Fact}

\newtheorem{problem}[theorem]{Problem}

\newenvironment{proofof}[1]{\begin{trivlist} \item {\bf Proof
#1:~~}}
  {\qed\end{trivlist}}

\newcommand{\namedref}[2]{\hyperref[#2]{#1~\ref*{#2}}}

\newcommand{\lemlab}[1]{\label{lem:#1}}
\newcommand{\lemref}[1]{\namedref{Lemma}{lem:#1}}

\newcommand{\eqnlab}[1]{\label{eq:#1}}
\newcommand{\eqnref}[1]{\namedref{Equation}{eq:#1}}

\newcommand{\probref}[1]{\namedref{Problem}{prob:#1}}
\newcommand{\problab}[1]{\label{prob:#1}}

\newcommand{\PPr}[1]{\ensuremath{\mathbf{Pr}\left[#1\right]}}

\renewcommand{\O}[1]{\ensuremath{O\left(#1\right)}}

\newcommand{\eps}{\varepsilon}
\newcommand{\R}{\ensuremath{\mathbb{R}}}
\newcommand{\Low}{\texttt{Low}}
\newcommand{\High}{\texttt{High}}
\newcommand{\Close}{\texttt{Close}}

\def \calA    {\mdef{\mathcal{A}}}
\def \calB    {\mdef{\mathcal{B}}}

\def \calD    {\mdef{\mathcal{D}}}
\def \calE    {\mdef{\mathcal{E}}}

\def \calG    {\mdef{\mathcal{G}}}

\def \calN    {\mdef{\mathcal{N}}}

\def \calX    {\mdef{\mathcal{X}}}
\def \calY    {\mdef{\mathcal{Y}}}

\newcommand{\mdef}[1]{{\ensuremath{#1}}\xspace}  

\DeclareMathOperator*{\polylog}{polylog}
\DeclareMathOperator*{\poly}{poly}

\DeclareMathOperator*{\Lap}{Lap}

\newcommand{\E}[2][]{\mdef{\underset{#1}{\mathbb{E}}\left[#2\right]}} 

\newcommand{\ignore}[1]{}

\newif\ifnotes\notestrue 
\ifnotes
\newcommand{\samson}[1]{\textcolor{blue}{{\bf (Samson:} {#1}{\bf ) }} \marginpar{\tiny\bf
             \begin{minipage}[t]{0.5in}
               \raggedright S:
            \end{minipage}}}
\newcommand{\david}[1]{\textcolor{purple}{{\bf (David:} {#1}{\bf ) }} \marginpar{\tiny\bf
             \begin{minipage}[t]{0.5in}
               \raggedright D:
            \end{minipage}}} 
\else
\newcommand{\samson}[1]{}
\newcommand{\david}[1]{}
\fi

\makeatletter
\renewcommand*{\@fnsymbol}[1]{\textcolor{mahogany}{\ensuremath{\ifcase#1\or *\or \dagger\or \ddagger\or
 \mathsection\or \triangledown\or \mathparagraph\or \|\or **\or \dagger\dagger
   \or \ddagger\ddagger \else\@ctrerr\fi}}}
\makeatother

\providecommand{\email}[1]{\href{mailto:#1}{\nolinkurl{#1}\xspace}}

\definecolor{mahogany}{rgb}{0.75, 0.25, 0.0}
\definecolor{darkblue}{rgb}{0.0, 0.0, 0.55}
\definecolor{darkpastelgreen}{rgb}{0.01, 0.75, 0.24}
\definecolor{darkgreen}{rgb}{0.0, 0.2, 0.13}
\definecolor{darkgoldenrod}{rgb}{0.72, 0.53, 0.04}
\definecolor{darkred}{rgb}{0.55, 0.0, 0.0}
\definecolor{forestgreenweb}{rgb}{0.13, 0.55, 0.13}
\definecolor{greencss}{rgb}{0.0, 0.5, 0.0}
\definecolor{bleudefrance}{rgb}{0.19, 0.55, 0.91}
\definecolor{darkspringgreen}{rgb}{0.09, 0.45, 0.27}

\hypersetup{
     colorlinks   = true,
     citecolor    = mahogany,
	 linkcolor	  = darkspringgreen
}

\fancypagestyle{pg}
{
\lhead{}
\rhead{}
\cfoot{--\ \thepage\ --}

}

\AtBeginDocument{%
  \DeclareFontShape{T1}{lmr}{m}{scit}{<->ssub*lmr/m/scsl}{}%
}
\renewcommand{\E}{\mathop{\mathbf E}}

\newcommand{\TV}{\mbox{TV}}
\newcommand{\KL}{\mbox{KL}}

\newcommand{\vx}{\ensuremath{\mathbf{x}}}